\renewcommand\AB@affilsepx{ \quad } % space between affiliations
\newcommand\deq{\stackrel{d}{=}}
\newcommand{\allmu}{\Tilde{\bmu}}
\newcommand{\allsig}{\Tilde{\bSig}}
\newcommand{\alllam}{\Tilde{\blam}}
\newcommand{\alltheta}{\Tilde{\btheta}}
\newcommand{\allpi}{\Tilde{\bm{\pi}}}
\newcommand{\bA}{\bm{A}}
\newcommand{\half}{\frac{1}{2}}
\newcommand{\bDeltag}{\bm{\Delta}_g }
\newcommand{\blamg}{\bm{\lambda}_g }
\newcommand{\ku}{ \kappa }
\newcommand{\bmug}{\bm{\mu}_g }
\newcommand{\bSigg}{\bm{\Sigma}_g}
\newcommand{\bX}{\bm{X}}
\newcommand{\bXo}{\bm{X}^{(o)}}
\newcommand{\bXm}{\bm{X}^{(m)}}
\newcommand{\bx}{\bm{x}}
\newcommand{\bmu}{ \bm{\mu}}
\newcommand{\bSig}{ \bm{\Sigma}}
\newcommand{\blam}{ \bm{\lambda}}
\newcommand{\luni}{\lambda_0}
\newcommand{\bDelta}{ \bm{\Delta}}
\newcommand{\btheta}{ \bm{\theta}}
\newcommand{\bOmega}{ \bm{\Omega}}
\newcommand{\bOmegag}{ \bm{\Omega}_g}
\newcommand{\ev}{ \mathbb{E}}
\newcommand{\by}{\bm{y}}
\newtheorem{definition}{Definition}[section]
\newtheorem{theorem}{Theorem}[section]
\newtheorem{lemma}{Lemma}[section]
\title{Clustering data with values missing at random using scale mixtures of multivariate skew-normal distributions}
\date{}
\author[1]{Pillay J. }
\author[2]{Tortora C.}
\author[3]{Punzo A.  }
\author[1]{Bekker A. }
\affil[1]{Department of Statistics, University of Pretoria, Pretoria, South Africa}
\affil[2]{San José State University, Department of Mathematics and Statistics, San Jose, USA}
\affil[3]{University of Catania, Department of Economics and Business, Catania,  Italy}
\begin{document}

\maketitle

\begin{abstract}
   Handling missing data is a major challenge in model-based clustering, especially when the data exhibit skewness and heavy tails. We address this by extending the finite mixture of scale mixtures of multivariate  skew-normal (FMSMSN) family to accommodate incomplete data under a missing at random (MAR) mechanism. Unlike previous work that is limited to one of the special cases of the FMSMSN family, our method offers a cluster analysis methodology for the entire family that accounts for skewness and excess kurtosis amidst data with missing values. The multivariate skew-normal distribution, as parameterised by \cite{azzalini1996} and \cite{arnoldbeaver} includes the normal distribution as a special case, which ensures that our method is flexible toward existing symmetric model-based clustering techniques under a normality assumption. We derive the distributional properties of the missing components of the data and propose an augmented EM-type algorithm tailored for incomplete observations. The modified E-step yields closed-form expressions for the conditional expectations of the missing values. The simulation experiments showcase the flexibility of the FMSMSN family in both clustering performance and parameter recovery for varying percentages of missing values, while incorporating the effects of sample size and cluster proximity. Finally, we illustrate the practical utility of the proposed method by applying special cases of the FMSMSN family to global CO2 emissions data. 
\end{abstract}

{\bf Keywords:} Mixture Models, skew-normal distribution, missing values at random. 
\section{Introduction}
    Analysing high-dimensional data is now standard practice in industry and academia due to constant improvements in computational power and the increased availability of high-dimensional data. Model-based clustering is one of the more popular analysis tools because the results from the method are typically easy to interpret, and with the available computational power, the method is efficient in fitting the data. Finite mixture modelling is appropriate for describing heterogeneity in a population by assuming the said population is a mixture of sub populations, each of which can be modelled by a probability distribution. In theory, the finite mixture model’s design allows it to be fitted with any probability distribution needed. Historically mixtures of multivariate normal distributions have dominated the finite mixture modelling literature. These mixtures are often referred to as a Gaussian Mixture Models (GMMs). Their popularity is due to the algebraic and computational simplifications inherent to the multivariate normal distribution, which make it easier to fit data (recent examples of GMMs in applied fields include \cite{bayesianGMM, featuresSelectionGMM, chemistryGMM}). This, combined until recently with a small literary pool providing theory on finite mixture models with other distributions, left practitioners assuming normality out of need rather than aptness or belief. 
    
    There are datasets in many fields whose geometry is not sufficiently described by the elliptical symmetric normal distribution because they exhibit significant skewness. Noticeably, when cluster distributions are asymmetric and have heavier tails, GMMs often compensate by overestimating the number of clusters. The 'extra' clusters are not helpful in practice and weaken the practitioner's interpretations and analyses. The demand for a more flexible distribution that could handle asymmetry led to the development of the multivariate skew-normal distribution, first introduced by \cite{azzalini1996}. As the name suggests, the distribution's construction encompasses normality as a special case but is also embedded with an explicit skewness parameter that can describe asymmetry. In recent times the multivariate skew-normal distribution has proven useful in chemistry (\cite{chemparticle, chempolycrystal, quantumcrystals}), environmental, biomedical (\cite{lachoserrors, lachoslikelihood,skewnormalcentred}), and economic (\cite{insurance}) fields including  modelling toolkits such as linear regression, time series, and graphic models, (\cite{linreg}, \cite{covid19}). The multivariate skew-normal distribution still does not address the heavy tails some empirical distributions exhibit. This motivated the construction of a family of new probability distributions borne of the multivariate skew-normal to handle excess kurtosis in data, known as Scale Mixture of Skew-Normal (SMSN) distributions. When extended to a clustering context, a finite mixture of SMSN distributions thus is a more flexible model that can better describe asymmetrical clusters in data. Several analyses have benefited from the SMSN family, including applications of the multivariate skew-hyperbolic (\cite{skew-normalgeneralisedhyperbolic}), skew-t (\cite{windspeed, skewtschizo, skewtlnk}), and skew-power-exponential (\cite{skewpower}) distributions, among many other multivariate members (\cite{skewnormalcensor,lachos2011}).
    
    Model-based clustering using the FMSMSN family displays a strong classification performance on complete datasets but cannot be applied as is to datasets with missing values (motivating examples can be found in \cite{lachos2011, lachos2011,mcnicholas2013}). Incomplete datasets are a common challenge faced in numerous fields, including economic, environmental (\cite{missingspatial}), and, more commonly, public health and medical fields (\cite{missingmedical}). When data have missing values, practitioners usually either subset the complete data to fit a clustering method or use an array of imputation techniques to fill in the missing values and fit the same model on each imputation. The former discards valuable information, which could lead to a biased model fit and harms/weakens interpretations offered by the clustering results. The latter can be computationally expensive. It is thus worth pursuing imputation approaches to circumvent possible bias and variance suppression (\cite{missingbias, missingimpact, missingbayes}). From what was surveyed, the recent developments on accommodating missingness show a need and a varied use of special members of the SMSN family. Distributions such as the multivariate skew-normal and multivariate skew-t are among the popular members explored (\cite{skewnormalcensor, skewnormalEMcensor, skewtlnk, skewtcensor, skewtffmreg}). Many of these techniques assume a special property of missingness in data: the probability of a missing entry in an observed vector is not dependent on the value of said entry but potentially dependent on the value of the observed entries, otherwise known as Missing At Random (MAR) - see \cite{definemissing} for a discussion on apt definitions and properties of the types of missing patterns found in data. It is thus clear that the literature on extending model-based clustering for the FMSMSN family to fit to incomplete data is ongoing, but limited.

%%%%%%%%%%%%%%%%%%%%%%%%%%%%%%%%%%%%%%%%%%||HIGHLIGHT THE NOVELTY NOTES||%%%%%%%%%%%%%%%%%%%%%%%%%%%%%%%%%%%%%%%%%%%%%%%%
%What is new?
%1. Distribution theory: We contexualise the existing distribution theory on the multivariate skew normal distribution in terms of the random vector's missing and observed components. That is, the probability distribution that characterises the MAR mechanism is derived for all members of the FMSMSN family. Important properties such as the members' moments and interconnectedness are developed. 
%2. Inference: Using the new theory in point 1., we discuss in detail the maximum likelihood estimation of the parameters for all members of the FMSMSN family via the EM algorithm. Since fitting a finite mixture model is restructured as an incomplete data problem we enhance the estimation technique to simultaneously handle incomplete observations which we refer to as the ECM for FMSMSN with MAR algorithm. The added benefit of the ECM for FMSMSN with MAR algorithm is that a set of imputed values for the missing parts of the dataset is provided at the end of the algorithm with little added time spent and the imputations are such that they optimise the complete likelihood function - a statistically valid measure of uncovering the most likely value.
%%%%%%%%%%%%%%%%%%%%%%%%%%%%%%%%%%%%%%%%%%%%%%%%%%%%%%%%%%%%%%%%%%%%%%%%%%%%%%%%%%%%%%%%%%%%%%%%%%%%%%%%%%%%%%%%%%%%%%%%

     This paper makes two primary contributions: one theoretical and one methodological. First, it extends the distribution theory of the multivariate skew-normal distribution explicitly in terms of a random vector’s observed and missing components. In this context, it allows us to derive the full probability distribution governing the Missing At Random (MAR) mechanism for all members of the FMSMSN family, a novel development not currently available in the surveyed literature. In doing so, it establishes crucial closed-form expressions for the expectation and covariance structures while also exploring how different members of the family relate to each other under the MAR assumption. The second major contribution is a generalised estimation approach for FMSMSN models with incomplete data. We develop a maximum likelihood estimation procedure based on the Expectation-Maximisation (EM) algorithm, specifically adapted to account for MAR data. Since fitting a finite mixture model is restructured as an incomplete data problem we enhance the estimation technique to simultaneously handle incomplete observations. The added benefit of the proposed algorithm is that a set of imputed values for the missing parts of the dataset is provided at the end of the algorithm with minimal additional computational cost. This dual functionality not only enhances parameter recovery but also provides a heuristic way to reconstruct incomplete datasets that preserves the underlying distributional structure.

     The rest of this paper is organised as follows: Section \ref{preliminaries} briefly introduces the multivariate skew-normal distribution and some of its relevant properties, followed by some special cases of the SMSN family members. The methodology and novelty of the paper is discussed in Section \ref{methodology}. Section \ref{sec:simulations} carries out an extensive simulation design to explore what effects the proportion of missing rows have on clustering performance and parameter recovery. Section \ref{application} applies the algorithm to global CO2 emissions data for the most recent year available from the EDGAR database for relevant insights and findings. The clustering results suggests a link between an improving Gross Domestic Product (GDP) and the increased carbon emissions as well as a distinguishing difference between the global south and north, suggesting different driving factors behind their respective carbon footprints. Supplementary results required for some of the derivations are given in the Appendix section.

    %Recall that does not need to know from which sub population each observation originates from.
    %Take register: Touched on why SMSN is useful. Touched on model based clustering. Touch on missingness, what work was done on missingness. Touch on why clustering is important. 
    %Missingness
    %1. Challenges faced by practitioners with incomplete data. Why incompleteness is a problem.
   %Example: patients with observed characteristics may be systematically different from patients with missing values.
   % \cite{missingspatial} mentions that imputations often leads to variance suppression - from simulation results, is skewness suppression also a key issue here? Of course, this is mitigated by larger sample sizes.
    %\cite{missingbias, missingimpact} show the benefit of imputation over complete case subsetting. All three are examples of research fields impacted by missingness. \cite{missingbayes} is an example of skewed data with missing values - infant growth
    %The latter can be computationally expensive on large datasets.
    
    %2. 
\section{Preliminaries}
\label{preliminaries}
This section recalls the multivariate skew-normal distribution and its relevant properties to the topic. There are numerous parameterisations of the multivariate skew-normal distribution, with some of the popular parameterisations originating from \cite{azzalini1996, arnoldbeaver, gupta} and \cite{sahu}. This paper considers the parameterisation published by \cite{arnoldbeaver} given in definition \ref{sn def}:

\begin{definition}
\label{sn def}
 A random vector $\bX \in \mathbb{R}^p$ follows a  multivariate skew-normal distribution with a location vector $\bmu\in \mathbb{R}^p$, a positive definite scale matrix $\bSig$, a skewness vector $\blam \in \mathbb{R}^p $ and a threshold parameter $\luni$ if it has the following probability density function (pdf):
   \begin{align}
    f_{\text{SN}}(\bx; \bmu, \bSig,\blam,\luni) = \frac{1}{\Phi_1\left( \delta_{0}\right)}\bm{\phi}_p \left(\bx; \bmu, \bSig \right) 
    \Phi_1\left( \lambda_{0} + \blam^{\top} \bSig^{-1/2} (\bx - \bmu )\right),
    \label{sn}
   \end{align}
   where $\delta_{0} = \frac{\luni}{\sqrt{ 1 + \blam^{\top} \blam}}$, and $\bm{\phi}_p\left(\bx; \bmu,\bSig\right) = \frac{1}{ (2\pi)^{p/2} | \bSig|^{1/2} } e^{ -\half \left\{ \left( \bx - \bmu \right)^{\top}\bSig^{-1} \left( \bx - \bmu \right) \right\} }$ is the p-variate normal pdf, and $\Phi_1(\cdot)$ is the univariate standard normal cumulative distribution function (cdf).\newline
\end{definition}
When $\bX$ follows pdf \eqref{sn}, we write $\bX \sim SN\left(\bmu, \bSig, \blam,\luni \right)$. In \cite{arnoldbeaver} the derivation of the multivariate skew-normal distribution begins by truncating elements of a normally distributed random vector against a threshold $\lambda_0$. The threshold parameter $\lambda_0$ arises in the construction of the multivariate skew-normal distribution through a method that involves truncating a bivariate normal distribution—specifically, by retaining only those observations where one component exceeds a certain threshold. From a simulation perspective, $\lambda_0$ represents this cutoff, effectively filtering the data to induce asymmetry, which directly contributes to the skewness of the resulting distribution. However, because skewness can also be controlled by another parameter, namely $\blam$, $\lambda_0$ is typically set to zero to simplify the model and avoid complex interactions between the two parameters. In Sections \ref{sec:simulations} and \ref{application}, $\lambda_0$ is set to 0 to explore the behaviour and performance of $\blam$, and the notation  $\bX \sim SN\left(\bmu, \bSig, \blam \right)$ is used to denote that $\bX$ follows a multivariate skew-normal distribution with this special case.
Since this formulation of the multivariate skew-normal distribution is multivariate normal location-scale mixture distribution, there exists a stochastic representation: consider random variable $U$ with some distribution $H(u;\btheta)$ independent of $\bm{Z} \sim SN(\bm{0},\bm{I},\blam)$ and let $K$ be some link function of $U$ - that is, let $K = K(U)$. Define the random vector 
\begin{align}
    \label{sr scale mix}
    \bX = \bmu + \left(K\bSig\right)^{1/2}\bm{Z},
\end{align} 
Then $\bX|U=u \sim SN(\bmu,\kappa\bSig,\blam)$, where $\ku = K(u)$. The random variable $U$ is attached to the scale matrix $\bSig$ and is aptly referred to as the scaling variable. The pdf of a scale mixture distribution of $\bX$ is given as:

\begin{align}
    f_{\text{SMSN}}(\bx; \bmu, \bSig,\blam,\btheta) =  2\int_{u \in \mathcal{S} } \bm{\phi}_p \left(\bx; \bmu, \ku \bSig \right) 
    \Phi_{1}\left( \blam^{\top} \left( \ku \bSig\right )^{-1/2} (\bx - \bmu )\right)  dH(u;\bm{\theta}),
    \label{sm dist}
   \end{align}
   where $\mathcal{S}$ is the support of the distribution of $U$.
The first two moments of the scale mixture distribution are derived using the stochastic representation (\ref{sr scale mix}). % and Theorem \ref{sn moments}. 
\begin{theorem}
   Suppose the distribution of the random vector $\bX$ has a pdf given by (\ref{sm dist}). If $ \ev[K^{\frac{1}{2}}],\ev[K]< \infty$ then:
\begin{center}
\begin{tabular}{rcl}
    \label{sm moments}
    $\ev \left[\bX\right]$ &         = & $\bmu + \sqrt{\frac{2}{\pi}}\omega_{1}\bDelta, \text{ and } $\\
    $\mathrm{Cov}\left[\bX\right]$ & = & $\omega_{2} \bSig - \frac{2}{\pi} \omega_{1} \bDelta \bDelta^{\top} \omega_1$,
\end{tabular}
\end{center}
where $ \omega_{r} =  \ev[K^{\frac{r}{2}}], \bDelta =  \frac{\bSig^{1/2}\blam}{\sqrt{ 1 +\blam^{\top} \blam}} \iff \blam = \frac{\bSig^{-1/2} \bDelta }{ \sqrt{1 - \bDelta^{\top} \bSig^{-1} \bDelta} }.$
\end{theorem}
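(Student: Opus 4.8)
The plan is to work directly from the location–scale mixture representation \eqref{sr scale mix}, $\bX = \bmu + K^{1/2}\bSig^{1/2}\bm{Z}$ with $\bm{Z}\sim SN(\bm{0},\bm{I},\blam)$ independent of $U$ (hence of $K$), and to reduce both moments to the first two moments of the base skew-normal vector $\bm{Z}$ via iterated expectation and the law of total covariance. The first step is to record those base moments: writing $\bm{\delta} = \blam/\sqrt{1+\blam^{\top}\blam}$, one has $\ev[\bm{Z}] = \sqrt{2/\pi}\,\bm{\delta}$ and $\mathrm{Cov}[\bm{Z}] = \bm{I} - \tfrac{2}{\pi}\bm{\delta}\bm{\delta}^{\top}$. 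These follow from the additive (hidden-truncation) stochastic representation $\bm{Z}$ has the distribution of $\bm{\delta}\lvert T_0\rvert + \bm{T}_1$ with $T_0\sim N(0,1)$ and $\bm{T}_1\sim N_p(\bm{0},\bm{I}-\bm{\delta}\bm{\delta}^{\top})$ independent, using $\ev[\lvert T_0\rvert]=\sqrt{2/\pi}$ and $\mathrm{Var}[\lvert T_0\rvert]=1-2/\pi$; I would either cite this or establish it in the Appendix as one of the supplementary results.

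Given these, conditioning on $U$ gives $\ev[\bX\mid U] = \bmu + \sqrt{2/\pi}\,K^{1/2}\bSig^{1/2}\bm{\delta} = \bmu + \sqrt{2/\pi}\,K^{1/2}\bDelta$, since $\bSig^{1/2}\bm{\delta} = \bSig^{1/2}\blam/\sqrt{1+\blam^{\top}\blam} = \bDelta$. Taking expectations over $U$, with $\ev[K^{1/2}]=\omega_1<\infty$ ensuring the interchange of expectation with the fixed linear map is legitimate, yields $\ev[\bX] = \bmu + \sqrt{2/\pi}\,\omega_1\bDelta$. For the covariance I would apply $\mathrm{Cov}[\bX] = \ev\big[\mathrm{Cov}[\bX\mid U]\big] + \mathrm{Cov}\big[\ev[\bX\mid U]\big]$. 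Using $\mathrm{Cov}[\bX\mid U] = K\,\bSig^{1/2}\big(\bm{I}-\tfrac{2}{\pi}\bm{\delta}\bm{\delta}^{\top}\big)\bSig^{1/2} = K\big(\bSig - \tfrac{2}{\pi}\bDelta\bDelta^{\top}\big)$, the first term is $\omega_2\bSig - \tfrac{2}{\pi}\omega_2\bDelta\bDelta^{\top}$ (here $\ev[K]=\omega_2$), and the second term is $\tfrac{2}{\pi}\mathrm{Var}[K^{1/2}]\,\bDelta\bDelta^{\top} = \tfrac{2}{\pi}(\omega_2-\omega_1^2)\,\bDelta\bDelta^{\top}$. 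Adding the two, the $\tfrac{2}{\pi}\omega_2\bDelta\bDelta^{\top}$ contributions cancel and one is left with $\omega_2\bSig - \tfrac{2}{\pi}\omega_1^2\bDelta\bDelta^{\top}$, i.e. the claimed $\omega_2\bSig - \tfrac{2}{\pi}\omega_1\bDelta\bDelta^{\top}\omega_1$. The hypotheses $\ev[K^{1/2}],\ev[K]<\infty$ are precisely what guarantees existence of the first and second moments and validity of the conditioning manipulations.

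The stated equivalence between $\bDelta$ and $\blam$ is a routine algebraic inversion that I would verify in passing: from $\bDelta = \bSig^{1/2}\blam/\sqrt{1+\blam^{\top}\blam}$ one gets $\bDelta^{\top}\bSig^{-1}\bDelta = \blam^{\top}\blam/(1+\blam^{\top}\blam)$, hence $1-\bDelta^{\top}\bSig^{-1}\bDelta = 1/(1+\blam^{\top}\blam)$, so $\sqrt{1+\blam^{\top}\blam} = 1/\sqrt{1-\bDelta^{\top}\bSig^{-1}\bDelta}$, and substituting back solves for $\blam$ in terms of $\bDelta$. I expect the only genuinely substantive step to be the computation of $\ev[\bm{Z}]$ and $\mathrm{Cov}[\bm{Z}]$ for the multivariate skew-normal in the Arnold–Beaver parameterisation with $\luni=0$; once that is in hand, the theorem is a short conditioning argument and some bookkeeping with $\bSig^{1/2}$.
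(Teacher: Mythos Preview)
Your proposal is correct and follows exactly the route the paper indicates: the paper does not write out a detailed proof but simply remarks that the moments ``are derived using the stochastic representation \eqref{sr scale mix},'' which is precisely the conditioning-on-$U$ argument you carry out. Your computations of $\ev[\bX\mid U]$, $\mathrm{Cov}[\bX\mid U]$, and the law-of-total-covariance combination are accurate, and the algebraic inversion between $\bDelta$ and $\blam$ is handled correctly.
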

   The special cases of the pdf in (\ref{sm dist}) we consider are briefly discussed in the next subsection.
   
   \subsection{Special cases of the SMSN family}
   \label{special cases}
   Some examples of the distributions borne of the stochastic representation (\ref{sr scale mix}) and their properties are now presented %\cite{}.
   \subsubsection{Skew-t}
   Suppose $U$ has a gamma distribution defined in Appendix \ref{gamma dist def} with shape and rate parameters both equal to $\frac{\nu}{2}$.Here, $\nu$ is the degrees of freedom. That is, $U \sim Gam\left(\frac{\nu}{2}, \frac{\nu}{2} \right)$ and the link function $K = U^{-1}$. Then random vector $\bX \in \mathbb{R}^p$ has a multivariate skew-t distribution with parameters $\bmu, \bSig, \blam$, and $\nu$ and is denoted $\bX \sim ST(\bmu, \bSig, \blam, \nu )$ if it has the following pdf:
   \begin{align}
       f_{\text{ST}}(\bx; \bmu, \bSig, \blam, \nu) = 2t_p\left( \bx;\bmu, \bSig, \blam, \nu \right)T_1\left( \frac{ \sqrt{\nu + p} A }{\sqrt{d + \nu } } ; 0,1,\nu + p  \right),
   \end{align}
   where $t_p$ and $T_1$ respectively denote the pdf and cdf of a t-distribution, with $t_p$ defined in Appendix \ref{t dist def}, $A = \blam^{\top} \bSig^{-\half}(\bx - \bmu)$ and $d = (\bx - \bmu)^{\top}\bSig^{-1}(\bx - \bmu) $. As $\nu \rightarrow  \infty$, $\bX$ converges in distribution to a $ SN(\bmu, \bSig, \blam)$ distribution. The mean and covariance of $\bX$ are (\cite{skewtcensor}):
   \begin{center}
\begin{tabular}{rcl}
    \label{st moments}
    $\ev \left[\bX\right]$ &         = & $\bmu + \sqrt{ \frac{\nu}{\pi}  } \frac{ \Gamma\left( \frac{\nu-1}{2} \right) }{\Gamma \left(\frac{\nu}{2} \right) } \bDelta, \text{ provided that } \nu >1.$\\
    $\mathrm{Cov}\left[\bX\right]$ & = & $ \frac{\nu}{\nu -2} \bSig - \frac{\nu}{\pi}\left( \frac{ \Gamma\left( \frac{\nu - 1}{2} \right) }{\Gamma \left(\frac{\nu}{2} \right) } \right)^2 \bDelta \bDelta^{\top}, \text{ provided that } \nu >2.$
\end{tabular}
\end{center}

\subsubsection{Skew-slash}
Considering that $U \sim Beta(a,1) $ has a beta distribution as defined in Appendix \ref{beta pdf} with link function $K = U^{-1}$, $\bX$ has a multivariate skew-slash distribution with the following pdf:
\begin{align}
    f_{\text{SS}}(\bx) = 2\int_{0}^1 \alpha u ^{ \alpha -1} \bm{\phi}_p \left(\bx; \bmu, u^{-1} \bSig \right) 
    \Phi_{1}\left( \blam^{\top} \left( u^{-1}\bSig\right )^{-1/2} (\bx - \bmu )\right) du
\end{align}
which does not have a closed form. Similar to the multivariate skew-t distribution, $\bX$ converges in distribution to a $ SN(\bmu, \bSig, \blam)$ distribution as $\alpha \rightarrow \infty$. The mean and covariance of $\bX$ are (\cite{betamoments}):
   \begin{center}
\begin{tabular}{rcl}
    $\ev \left[\bX\right]$ &         = & $\bmug + \sqrt{ \frac{2}{\pi}  } \frac{ 2\alpha }{ 2\alpha -1 } \bDelta, \text{ provided that } \alpha > \half.$\\
    $\mathrm{Cov}\left[\bX\right]$ & = & $ \frac{\alpha}{\alpha -1} \bSig - \frac{2}{\pi}\left( \frac{ 2\alpha }{ 2\alpha -1 }\right)^2 \bDelta \bDelta^{\top}, \text{ provided that } \alpha >1.$
\end{tabular}
\end{center}

\subsubsection{Skew-variance-gamma}
For $U \sim Gam(\eta,\frac{\gamma^2}{2}) $ with link function $K = U$, $\bX \in \mathbb{R}^p$ has a skew-variance-gamma distribution with the following pdf (\cite{skewvgl}):
\begin{align}
    f_{\text{SVG}}(\bx) =  2 \left( \frac{\gamma}{2\pi} \right)^{p/2}\frac{ \gamma ^{\eta} }{\Gamma(\eta) 2^{\eta -1}} |\bSig|^{-1/2} \sqrt{d}^{\left( \eta - \frac{p}{2} \right) }K_{\eta - \frac{p}{2}}\left(\sqrt{d\gamma} \right)GH \left(A; \eta-\frac{p}{2}, \sqrt{d}, \gamma \right),
\end{align}
where $GH\left(\cdot; \eta-\frac{p}{2}, \sqrt{d}, \gamma \right)$ is the cdf of a generalised hyperbolic distribution defined in Appendix \ref{gh pdf}. 
\begin{center}
\begin{tabular}{rcl}
    $\ev \left[\bX\right]$ &         = & $\bmu + \frac{2}{\sqrt{\pi} \gamma} \frac{ \Gamma\left( \frac{2\eta +1}{2} \right) }{ \Gamma\left(\eta \right) }\bDelta, \text{ and } $\\
    $\mathrm{Cov}\left[\bX\right]$ & = & $ \frac{2\eta}{\gamma^2} \bSig - \frac{4}{\pi \gamma^2}\left( \frac{ \Gamma\left( \frac{2\eta +1}{2} \right) }{ \Gamma\left(\eta \right) } \right)^2 \bDelta \bDelta^{\top} $,
\end{tabular}
\end{center}
A special case of the multivariate skew-variance-gamma distribution arises when $\eta = 1$, in which case the pdf simplifies to the following:
\begin{align}
    f_{\text{SLP}}(\bx) = 2 \left( \frac{\gamma^{p/2+1} }{(2\pi)^{p/2}  } \right) |\bSig|^{-1/2} \sqrt{d}^{\left( 1 - \frac{p}{2} \right) }K_{1- \frac{p}{2}}\left(\sqrt{d\gamma} \right)GH \left(A; 1-\frac{p}{2}, \sqrt{d}, \gamma \right),
\end{align}
which corresponds to the pdf of a multivariate skew-Laplace distribution. Note that estimating the multivariate skew-variance-gamma distribution's parameters lead to identifiability issues (\cite{vg}) hence an additional constraint is added to remedy this. Moving on in this paper, $\ev[U] = 1 \implies \eta =\frac{\gamma^2}{2}$.

 In summary, the four special cases considered in this paper are:
   \begin{table}[H]
   \centering
    \caption{Examples of distributions from the SMSN family from \eqref{sr scale mix}.}
   \begin{tabular}{lcll}
   \toprule
    Distribution & Denoted as & $\ku$ & Distribution of $U$\\
    \midrule
    Multivariate skew-normal         & $ SN(\bmu, \bSig, \blam)$                  & $ u $     & Degenerate at $u=1$ \\
    Multivariate skew-t              & $ ST(\bmu, \bSig, \blam, \nu) $            & $u^{-1}$  & $U \sim Gam(\frac{\nu}{2},\frac{\nu}{2} )$ \\
    Multivariate skew-slash          & $ SS(\bmu, \bSig, \blam, \alpha)$          & $u^{-1}$  & $U \sim Beta(\alpha, 1) $ \\
    Multivariate skew-variance-gamma & $ SVG(\bmu, \bSig, \blam, \eta, \gamma^2)$ & $ u$      & $U \sim Gam\left(\eta, \frac{\gamma^2}{2} \right)$\\
    \bottomrule
       \end{tabular}
       \label{4cases}
   \end{table}
   The pdf (\ref{sm dist}) is now used to formally define a finite mixture of $G$ scale mixtures of multivariate skew-normal distributions.
\begin{definition}
\label{fmsmsnpdf}
     A random vector $\bX \in \mathbb{R}^p$ belongs to a Finite Mixture of Scale Mixture of Skew-normal (FMSMSN) distributions if it has the following weighted sum of pdfs:
   \begin{align}
   \sum_{g=1}^G \pi_g f_{\text{SMSN}}(\bx; \bmug, \bSigg,\blamg,\btheta_g),
    \label{fmm}
   \end{align}
   where $\pi_g >0$ is the mixing probability for the $g^{th}$ distribution subject to the constraint $\sum_{g=1}^G \pi_g=1$.
\end{definition}

\section{FMSMSN with MAR observations}
\label{methodology}
This section contains the novelty of this paper, namely, the extension of finite mixtures of scale mixtures of the multivariate skew-normal distribution to handle MAR observations.
The section unfolds in three parts. First, it formalises the probabilistic structure of the missing components under the MAR assumption. The second introduces inference, explaining how a member of the FMSMSN family is fitted to data. It highlights that the model parameters cannot be estimated tractably through direct optimisation of the log-likelihood, as both latent variables and missing data must be handled simultaneously during estimation. The first two parts provide the foundation that enables the inference that is the third part: the development of a modified and updated EM algorithm that incorporates probabilistic rules discussed in the first two parts to optimise the observed log-likelihood. The result is a new algorithm capable of fitting FMSMSN models to MAR data, addressing the identified limitation in the current literature. 
 
 Suppose $\bX_i$ is the $i^{th}$ observation distributed by model (\ref{fmm}). It is algebraically tangible to express $\bX_i$ as the following stochastic representation for the $g^{th}$ distribution (\cite{arnoldbeaver}):
        \begin{align}
            \bX_i \deq  \bmug +T_i \bm{\bDeltag} +  \sqrt{K_i}\bSigg^{1/2}(\bm{I} -\bm{\delta}_g \bm{\delta}_g^{\top} )^{1/2} \bm{T}_{1,i},
            \label{sr}
        \end{align}
   where $T_i= \sqrt{K_i}T_{0,i}, ~\bDeltag = \frac{\bSigg^{1/2}\blamg}{ \sqrt{ 1 + \blamg^{\top}\blamg}  } $, $K_i = K(U_i)$, $\bm{\delta}_g = \frac{\blamg}{ \sqrt{ 1 + \blamg^{\top}\blamg}  } $, random vector $\bm{T}_{1,i} \sim SN(\bm{0}, \bm{I},\bm{0})$ independent of univariate random variable $T_{0,i}$ which follows a truncated normal distribution on the interval $(0,\infty)$. In general, a truncated normal distribution with support $(0,\infty)$ has a pdf defined as follows:
   \begin{definition}
   \label{trunc normal}
       A random variable $T$ follows a truncated normal distribution with support $(0, \infty)$ and parameters $\mu$ and  $\sigma^2$, denoted as $T \sim TN(\mu,\sigma^2)$, if it has the following pdf:
       \begin{align}
            f_{\text{TN}}(t; \mu,\sigma^2) = \begin{cases}
                                         \frac{\phi_1(t; \mu,\sigma^2) }{\Phi_1\left( \frac{\mu}{\sigma}\right) } & t >0 \\
                                         0                                                                        & \text{otherwise},
                                      \end{cases}
       \end{align}
\end{definition}
%Therefore $T_0 \sim TN(0,1)$ in stochastic representation (\ref{sr}).
 Note that the first two moments of $T$ are:
    \begin{center}
    \begin{tabular}{rcl}
        \label{trunc norm moments}
        $\ev [T] $ & = & $\mu + \sigma W_{\phi}\left( \frac{\mu}{\sigma}\right) \text{ and } $\\
        $\mathrm{var}[T]$  & = & $\sigma^2 \left( 1 -  W_{\phi}^2\left( \frac{\mu}{\sigma}\right)\right)$,
    \end{tabular}
    \end{center}
 where $ W_{\phi}(\cdot) = \frac{ \phi_1(\cdot;0,1)}{\Phi_1(\cdot)}.$
Next, an allocation vector, $\bm{Z}$, is introduced. Define random the variable: 
   \begin{align}
   Z_g = \begin{cases}
    1 & \text{if $\bX$ belongs to $g^{th}$ distribution with probability $\pi_g$}\\
    0 & \text{otherwise}
   \end{cases}
   \end{align}
   such that $\bm{Z} = (Z_1, \dots, Z_G)$. Then $\bm{Z}$ follows a multinomial distribution with the collection of parameters $\allpi = (\pi_1, \dots, \pi_G)$. Consequently, conditional on $\bm{Z} = \bm{z}$, the random variable $\bX$ follows the distribution of the $g^{th}$ component of the mixture model.

\subsection{Distribution properties under an MAR mechanism}
Hereon, $\bX_i$ is decomposed into its missing and observed vectors $\bX_i=\begin{pmatrix} \bXm_i\\ \bXo_i \end{pmatrix}$. The superscripts $o$ and $m$ indicate the observed and missing parts of $\bX_i$, respectively. While the missingness pattern may vary across observations, for notational convenience we write $o$ and $m$ in place of $o_i$ and $m_i$. From the MAR mechanism, some distribution properties of $\bXm_i$ are developed below in Theorems \ref{snm} to \ref{norm cond dist}.

%%%%%%%%%%%%%%%%%%%%%%%%%%%%%%%%%%%%%%%%%%%%%%%%%%%%%Conditional skew-normal distribution theorem%%%%%%%%%%%%%%%%%%%%%%%%%%%%%%%%%%%%%%%%%%%%%
 \begin{theorem}
 \label{snm}
   Suppose $ \bX_i \in \mathbb{R}^p$ has pdf \eqref{fmm} and is described as in \eqref{sr scale mix}. It follows that $\bX_i|U_i =u_i, Z_{i,g}=1 \hspace{0.15cm}\sim SN(\bmug, \ku_i \bSigg, \blamg )$ and
   $\bX_i|T_i=t_i, U_i =u_i, Z_{i,g}=1 \hspace{0.15cm}\sim SN(\bmug + t_i\bDeltag, \ku_i \bOmegag, \bm{0} )$, where $\bOmegag = \bSigg - \bDeltag \bDeltag^{\top}$.
   \begin{proof}
       First notice from the stochastic representation in (\ref{sr scale mix}) that $\bX_i|U_i =u_i, Z_{i,g}=1 \hspace{0.15cm}\sim SN(\bmug , \ku_i \bSigg, \blamg )$. The stochastic representation (\ref{sr}) implies that $\bX_i|T_i=t_i, U_i =u_i, Z_{i,g}=1$ follows a normal distribution with the following parameters:
       \begin{align}
         \ev[\bX_i|T_i, U_i, Z_{i,g}] &= \mu_g + t_i\bDeltag &\nonumber\\
         \mathrm{Cov}(\bX_i|T_i, U_i, Z_{i,g}) &= \ku_i\bSigg^{1/2}(\bm{I} - \bm{\delta}_g \bm{\delta}_g^{\top} )\bSigg^{1/2} = \ku_i(\bSigg - \bDeltag \bDeltag^{\top}) = \ku_i\bOmegag, &
       \end{align}
       which completes the proof.
   \end{proof}
   \end{theorem}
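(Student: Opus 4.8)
The plan is to extract both distributional statements directly from the two stochastic representations already in hand, \eqref{sr scale mix} and \eqref{sr}, by freezing the conditioning variables one at a time.

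For the first assertion I would argue as follows. Conditioning on $Z_{i,g}=1$ selects the $g^{th}$ component, so $\bX_i$ admits the representation $\bX_i = \bmug + (K_i\bSigg)^{1/2}\bm{Z}_i$ with $\bm{Z}_i \sim SN(\bm{0},\bm{I},\blamg)$ drawn independently of $U_i$; conditioning further on $U_i = u_i$ merely fixes $K_i$ at $\ku_i = K(u_i)$. The sentence just below \eqref{sr scale mix} has already recorded that in precisely this situation $\bX_i \sim SN(\bmug,\ku_i\bSigg,\blamg)$, so the first claim is immediate.

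For the second assertion I would pass to the finer representation \eqref{sr}. Conditioning on $(T_i,U_i,Z_{i,g})$ pins $T_i$ at $t_i$ and $K_i$ at $\ku_i$, while the factor $\bm{T}_{1,i}\sim SN(\bm{0},\bm{I},\bm{0})$ --- which is just a standard $p$-variate normal vector --- retains its law, since it is independent of $T_{0,i}$ (equivalently of $T_i$ given $U_i$) and of $U_i$. Hence $\bX_i\mid T_i=t_i,U_i=u_i,Z_{i,g}=1$ equals the affine image $\bmug + t_i\bDeltag + \sqrt{\ku_i}\,\bSigg^{1/2}(\bm{I}-\bm{\delta}_g\bm{\delta}_g^{\top})^{1/2}\bm{T}_{1,i}$ of a standard normal vector, and is therefore normal with mean $\bmug+t_i\bDeltag$ and covariance $\ku_i\,\bSigg^{1/2}(\bm{I}-\bm{\delta}_g\bm{\delta}_g^{\top})\bSigg^{1/2}$; note $\bm{I}-\bm{\delta}_g\bm{\delta}_g^{\top}$ is positive definite because $\bm{\delta}_g^{\top}\bm{\delta}_g<1$, so its symmetric square root exists. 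The one genuine computation is to simplify this covariance: using $\bDeltag = \bSigg^{1/2}\bm{\delta}_g$ together with the symmetry of $\bSigg^{1/2}$,
\[
\bSigg^{1/2}(\bm{I}-\bm{\delta}_g\bm{\delta}_g^{\top})\bSigg^{1/2} = \bSigg - (\bSigg^{1/2}\bm{\delta}_g)(\bSigg^{1/2}\bm{\delta}_g)^{\top} = \bSigg - \bDeltag\bDeltag^{\top} = \bOmegag.
\]
Since a skew-normal law with zero skewness vector is exactly the corresponding normal law, this gives $\bX_i\mid T_i=t_i,U_i=u_i,Z_{i,g}=1\sim SN(\bmug+t_i\bDeltag,\ku_i\bOmegag,\bm{0})$.

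The step that needs the most care is the assertion that the conditional law of $\bm{T}_{1,i}$ is unaltered by conditioning on $(T_i,U_i,Z_{i,g})$; this is precisely where the mutual-independence structure built into \eqref{sr} is invoked. Everything else is a transcription of the constructions in \eqref{sr scale mix}--\eqref{sr} together with the one-line matrix identity above, so I do not anticipate any real obstacle.
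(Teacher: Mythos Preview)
Your proposal is correct and follows essentially the same route as the paper: read off the first claim directly from \eqref{sr scale mix}, and for the second freeze $(T_i,U_i)$ in \eqref{sr} to obtain an affine image of the standard normal $\bm{T}_{1,i}$, then simplify the covariance via $\bSigg^{1/2}(\bm{I}-\bm{\delta}_g\bm{\delta}_g^{\top})\bSigg^{1/2}=\bSigg-\bDeltag\bDeltag^{\top}$. Your write-up is simply more explicit about the independence underpinning the conditioning step and the positive-definiteness of $\bm{I}-\bm{\delta}_g\bm{\delta}_g^{\top}$.
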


   \begin{theorem} Consider $ \bX_i \in \mathbb{R}^p$ as described in \eqref{sr scale mix}. Partition the random vector $\bX_i$ and its distribution parameters in terms of the $\bX_i$'s observed and missing components:
        \begin{align}
        \label{partitions}
        \bX_i = \begin{pmatrix} \bXm_i \\ \bXo_i \end{pmatrix},~
        \bmug = \begin{pmatrix} \bmu_{o,g} \\ \bmu_{m,g} \end{pmatrix},~ \bSigg = \begin{pmatrix} \bSig_{mm,g} & \bSig_{mo,g}\\ \bSig_{om,g} & \bSig_{oo,g}\end{pmatrix},~
        \blam= \begin{pmatrix} \blam_{m,g} \\ \blam_{o,g} \end{pmatrix}, \text{ and } \bDeltag= \begin{pmatrix} \bDelta_{m,g} \\ \bDelta_{o,g} \end{pmatrix},
    \end{align}
    where $\bDeltag$ is defined as in Theorem \ref{sm moments}. Then it is the case that:
   \label{sn cond dist}
    \begin{align}\bXm_i | \bXo_i=\bx^{(o)}_i, U_i =u_i, Z_{i,g}=1 \hspace{0.15cm} \sim SN(\bmu_{c,g}, \ku_i\bSig_{c,g}, \blam_{c,g}, \ku_i^{-\half}\lambda_{0,c,g}), \nonumber    
    \end{align}
    where:
    \begin{align}
     &\bmu_{c,g} = \bmu_{m,g} + \bSig_{mo,g} \bSig_{oo,g}^{-1}(\bx_{i}^{(o)} - \bmu_{o,g}),\hspace*{0.35cm}
     \bSig_{c,g} = \bSig_{mm} - \bSig_{mo,g} \bSig_{oo,g}^{-1}\bSig_{om,g},\hspace*{0.35cm} 
    \lambda_{0,c,g} = \frac{ \bDelta_{o,g}^{\top}\bSig_{oo,g}^{-1}(\bx_{i}^{(o)} - \bmu_{o,g}) }{\sqrt{1 - \bDelta_{g}^{\top}\bSigg^{-1}\bDelta_{g}} }\text{, and }&\nonumber\\
   & \blam_{c,g} = \frac{  \bSig_{c,g}^{-1/2} \left[\bDelta_{m,g} - \bSig_{mo,g} \bSig_{oo,g}^{-1} \bDelta_{o,g} \right] }{\sqrt{ 1 - \bDelta_{g}^{\top} \bSigg^{-1} \bDelta_{g}  } }.&
     \end{align}
   \end{theorem}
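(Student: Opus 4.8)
The plan is to derive the stated law by forming the conditional density $f\big(\bx^{(m)}\mid\bx^{(o)}\big)$ explicitly, as the quotient of the joint density of $\big(\bXm_i,\bXo_i\big)$ by the marginal of $\bXo_i$ — everything under $U_i=u_i$, $Z_{i,g}=1$ — and recognising the quotient as the $SN\big(\bmu_{c,g},\ku_i\bSig_{c,g},\blam_{c,g},\ku_i^{-1/2}\lambda_{0,c,g}\big)$ pdf of Definition~\ref{sn def}. (The MAR assumption does not itself enter this computation; it is what subsequently makes this conditional law the relevant one for the missing coordinates, the missingness pattern being ignorable for $f(\bx^{(m)}\mid\bx^{(o)})$.) By Theorem~\ref{snm}, under $U_i=u_i,Z_{i,g}=1$ the vector $\bX_i$ has density $2\,\bm{\phi}_p(\bx;\bmug,\ku_i\bSigg)\,\Phi_1\big(\ku_i^{-1/2}\blamg^{\top}\bSigg^{-1/2}(\bx-\bmug)\big)$, with $2=1/\Phi_1(0)$ since the SMSN family fixes $\luni=0$. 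The Gaussian factor splits in the usual way, $\bm{\phi}_p(\bx;\bmug,\ku_i\bSigg)=\bm{\phi}(\bx^{(m)};\bmu_{c,g},\ku_i\bSig_{c,g})\,\bm{\phi}(\bx^{(o)};\bmu_{o,g},\ku_i\bSig_{oo,g})$ with $\bmu_{c,g}=\bmu_{m,g}+\bSig_{mo,g}\bSig_{oo,g}^{-1}(\bx^{(o)}-\bmu_{o,g})$ and $\bSig_{c,g}=\bSig_{mm,g}-\bSig_{mo,g}\bSig_{oo,g}^{-1}\bSig_{om,g}$, which already delivers the asserted $\bmu_{c,g}$ and $\bSig_{c,g}$.

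The heart of the proof is the $\Phi_1$-argument. Writing $\blamg^{\top}\bSigg^{-1/2}=\bDeltag^{\top}\bSigg^{-1}\big/\sqrt{1-\bDeltag^{\top}\bSigg^{-1}\bDeltag}$ (the equivalence recorded in Theorem~\ref{sm moments}) and substituting the block-inverse formula for $\bSigg^{-1}$ in the missing/observed partition, I expect $\bDeltag^{\top}\bSigg^{-1}(\bx-\bmug)$ to collapse — the observation-adjusted residual $\bx^{(m)}-\bmu_{c,g}$ emerging along the way — to $\big(\bDelta_{m,g}-\bSig_{mo,g}\bSig_{oo,g}^{-1}\bDelta_{o,g}\big)^{\top}\bSig_{c,g}^{-1}(\bx^{(m)}-\bmu_{c,g})+\bDelta_{o,g}^{\top}\bSig_{oo,g}^{-1}(\bx^{(o)}-\bmu_{o,g})$. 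The first summand equals $\sqrt{1-\bDeltag^{\top}\bSigg^{-1}\bDeltag}\;\blam_{c,g}^{\top}\bSig_{c,g}^{-1/2}(\bx^{(m)}-\bmu_{c,g})$ and the second $\sqrt{1-\bDeltag^{\top}\bSigg^{-1}\bDeltag}\;\lambda_{0,c,g}$, so the whole argument reads $\ku_i^{-1/2}\lambda_{0,c,g}+\blam_{c,g}^{\top}(\ku_i\bSig_{c,g})^{-1/2}(\bx^{(m)}-\bmu_{c,g})$ — precisely the $\luni+\blam^{\top}\bSig^{-1/2}(\cdot)$ form of Definition~\ref{sn def}, with threshold $\ku_i^{-1/2}\lambda_{0,c,g}$ and scale $\ku_i\bSig_{c,g}$.

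For the normalising constant, integrate the joint density over $\bx^{(m)}$ via the Gaussian--probit identity $\int\bm{\phi}(\by;\bm{a},\bm{B})\,\Phi_1\big(c+\bm{d}^{\top}(\by-\bm{a})\big)\,d\by=\Phi_1\big(c/\sqrt{1+\bm{d}^{\top}\bm{B}\bm{d}}\big)$: the marginal of $\bXo_i$ comes out as $2\,\bm{\phi}(\bx^{(o)};\bmu_{o,g},\ku_i\bSig_{oo,g})\,\Phi_1(\delta_{0,c,g})$ with $\delta_{0,c,g}=\ku_i^{-1/2}\lambda_{0,c,g}\big/\sqrt{1+\blam_{c,g}^{\top}\blam_{c,g}}$, so on dividing the observed Gaussian factors cancel and the $1/\Phi_1(\delta_{0,c,g})$ prefactor of Definition~\ref{sn def} is exactly what remains. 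Both the identity $\bm{d}^{\top}\bm{B}\bm{d}=\blam_{c,g}^{\top}\blam_{c,g}$ used here and the reduction to $\delta_{0,c,g}$ hinge on the scalar companion of the same block expansion, $\bDeltag^{\top}\bSigg^{-1}\bDeltag=\bDelta_{o,g}^{\top}\bSig_{oo,g}^{-1}\bDelta_{o,g}+\big(\bDelta_{m,g}-\bSig_{mo,g}\bSig_{oo,g}^{-1}\bDelta_{o,g}\big)^{\top}\bSig_{c,g}^{-1}\big(\bDelta_{m,g}-\bSig_{mo,g}\bSig_{oo,g}^{-1}\bDelta_{o,g}\big)$, which also yields $1+\blam_{c,g}^{\top}\blam_{c,g}=(1-\bDelta_{o,g}^{\top}\bSig_{oo,g}^{-1}\bDelta_{o,g})\big/(1-\bDeltag^{\top}\bSigg^{-1}\bDeltag)$. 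Assembling the three pieces identifies the conditional density with the claimed skew-normal pdf. There is no conceptual obstacle; the real work is the block-matrix bookkeeping that makes the cross-terms cancel, so that $\bDeltag^{\top}\bSigg^{-1}(\bx-\bmug)$ separates cleanly into a purely observed-data piece — the threshold — and a skewness direction in the missing coordinates, together with the parallel identity needed for the normalising constant.

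An alternative route conditions additionally on $T_i$ in representation~\eqref{sr}: then $\bX_i$ is Gaussian (Theorem~\ref{snm}), so $\bXm_i\mid\bXo_i,T_i$ is Gaussian with mean affine in $T_i$ while $T_i\mid\bXo_i$ is truncated normal on $(0,\infty)$ with a mean shifted by the observed data — the shift being the source of the nonzero threshold $\lambda_{0,c,g}$ — and the resulting $T_i$-mixture is a skew-normal. This is more transparent but needs the stochastic representation of $SN(\cdot,\cdot,\cdot,\luni)$ with $\luni\neq0$ and comparable algebra, so the density computation above is the shorter path.
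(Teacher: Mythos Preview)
Your proposal is correct. The paper itself does not prove this theorem: its entire proof reads ``A detailed proof can be found in \cite{MarginalConditionalsn}'', so there is no in-paper argument to compare against. Your density-quotient approach --- factor the joint Gaussian, rewrite the $\Phi_1$ argument via $\blamg^{\top}\bSigg^{-1/2}=\bDeltag^{\top}\bSigg^{-1}/\sqrt{1-\bDeltag^{\top}\bSigg^{-1}\bDeltag}$ and the block-inverse expansion, then obtain the marginal by the Gaussian--probit integral --- is exactly the standard derivation for conditional distributions of the Azzalini-type skew-normal and is what one would expect to find in the cited reference. The key bilinear identity you invoke, $\bDeltag^{\top}\bSigg^{-1}(\bx-\bmug)=(\bDelta_{m,g}-\bSig_{mo,g}\bSig_{oo,g}^{-1}\bDelta_{o,g})^{\top}\bSig_{c,g}^{-1}(\bx^{(m)}-\bmu_{c,g})+\bDelta_{o,g}^{\top}\bSig_{oo,g}^{-1}(\bx^{(o)}-\bmu_{o,g})$, is precisely the polarised version of the paper's Lemma~\ref{lemma1}, and your companion scalar identity for $\bDeltag^{\top}\bSigg^{-1}\bDeltag$ is its specialisation; the paper records the latter (and its consequence $1+\blam_{c,g}^{\top}\blam_{c,g}=(1-\bDelta_{o,g}^{\top}\bSig_{oo,g}^{-1}\bDelta_{o,g})/(1-\bDeltag^{\top}\bSigg^{-1}\bDeltag)$) separately as Lemma~\ref{lemma2}. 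Your alternative route through the stochastic representation and a truncated-normal mixing is also sound and mirrors how the paper handles the closely related Theorem~\ref{norm cond dist} and Theorem~\ref{t cond dist}.
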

   \begin{proof}
        A detailed proof can be found in \cite{MarginalConditionalsn}.
   \end{proof}

   %%%%%%%%%%%%%%%%%%%%%%%%%%%%%%%%%%%%%%%%%%%%%%%%%%%%%Conditional normal distribution theorem %%%%%%%%%%%%%%%%%%%%%%%%%%%%%%%%%%%%%%%%%%%%%
   \begin{theorem}
   \label{norm cond dist}
    Consider $ \bX_i \in \mathbb{R}^p$ as described in \eqref{sr scale mix}. Partition the random vector $\bX_i$ and its distribution parameters in terms of the $\bX_i$'s observed and missing components:
        \begin{align}
        \bX_i = \begin{pmatrix} \bXm_i \\ \bXo_i \end{pmatrix}, ~
        \bmug = \begin{pmatrix} \bmu_{o,g} \\ \bmu_{m,g} \end{pmatrix},~ \bOmegag = \begin{pmatrix} \bOmega_{mm,g} & \bOmega_{mo,g}\\ \bOmega_{om,g} & \bOmega_{oo,g}\end{pmatrix}, \text{ and } \bDeltag= \begin{pmatrix} \bDelta_{m,g} \\ \bDelta_{o,g} \end{pmatrix},
    \end{align} Then it is the case that:
        \begin{align}
        \bXm_i | \bXo_i=\bx^{(o)}_i, T_i =t_i, U_i =u_i, Z_{i,g}=1 \hspace{0.15cm} \sim SN(\bm{m}_{c,g} + t_i \bm{\psi}_{c,g}, \ku_i\bOmega_{c,g}, \bm{0}),\nonumber
        \end{align}
         where:
        \begin{align}
        &\bm{m}_{c,g} = \bmu_{m,g} + \bOmega_{mo,g}\bOmega_{oo,g}^{-1}(\bx_{i}^{(o)} - \bmu_{o,g}), \hspace{0.25cm}   \bm{\psi}_{c,g}= \bDelta_{m,g} - \bOmega_{mo,g}\bOmega_{oo,g}^{-1} \bDelta_{o,g}, \hspace{0.25cm} \text{and } \hspace{0.25cm} \bOmega_{c,g} = \bOmega_{mm,g} - \bOmega_{mo,g}\bOmega_{oo,g}^{-1}\bOmega_{om,g}.&\nonumber
        \end{align}
   \end{theorem}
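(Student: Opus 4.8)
The plan is to reduce the claim to the classical Gaussian conditioning identity. By Theorem~\ref{snm} we already know that $\bX_i \mid T_i=t_i, U_i=u_i, Z_{i,g}=1 \sim SN(\bmug+t_i\bDeltag, \ku_i\bOmegag, \bm{0})$. A skew-normal law with null skewness vector and null threshold is simply the ordinary $p$-variate normal: in Definition~\ref{sn def} the normalising constant $\Phi_1(\delta_0)$ and the tilting factor $\Phi_1(\lambda_0+\blam^{\top}\bSig^{-1/2}(\cdot))$ each collapse to $\tfrac12$, leaving the density $\bphip(\cdot;\bmug+t_i\bDeltag,\ku_i\bOmegag)$. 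Hence, conditionally on $(T_i,U_i,Z_{i,g})$, the vector $\bX_i$ is exactly multivariate normal, and the whole statement becomes a statement about a Gaussian vector.

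Next I would write this Gaussian in block form using the observed/missing partition of \eqref{partitions} together with the matching partition of $\bOmegag$ supplied in the statement. The mean splits as $(\bmu_{m,g}+t_i\bDelta_{m,g}\,;\,\bmu_{o,g}+t_i\bDelta_{o,g})$ and the covariance is $\ku_i$ times the block matrix with blocks $\bOmega_{mm,g},\bOmega_{mo,g},\bOmega_{om,g},\bOmega_{oo,g}$. Applying the standard formula for the conditional law of one sub-vector of a multivariate normal given the complementary sub-vector, $\bXm_i \mid \bXo_i=\bx^{(o)}_i, T_i=t_i, U_i=u_i, Z_{i,g}=1$ is again normal, with mean
\[
(\bmu_{m,g}+t_i\bDelta_{m,g}) + (\ku_i\bOmega_{mo,g})(\ku_i\bOmega_{oo,g})^{-1}\bigl(\bx^{(o)}_i-\bmu_{o,g}-t_i\bDelta_{o,g}\bigr)
\]
and covariance $\ku_i\bOmega_{mm,g} - (\ku_i\bOmega_{mo,g})(\ku_i\bOmega_{oo,g})^{-1}(\ku_i\bOmega_{om,g})$.

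The remaining work is pure bookkeeping. The scalar $\ku_i$ cancels inside the Schur-complement products of the mean and factors out of the covariance, so the covariance equals $\ku_i(\bOmega_{mm,g}-\bOmega_{mo,g}\bOmega_{oo,g}^{-1}\bOmega_{om,g}) = \ku_i\bOmega_{c,g}$. Regrouping the mean into its $\bx^{(o)}_i$-dependent part $\bmu_{m,g}+\bOmega_{mo,g}\bOmega_{oo,g}^{-1}(\bx^{(o)}_i-\bmu_{o,g}) = \bm{m}_{c,g}$ and its $t_i$-dependent part $t_i(\bDelta_{m,g}-\bOmega_{mo,g}\bOmega_{oo,g}^{-1}\bDelta_{o,g}) = t_i\bm{\psi}_{c,g}$ gives mean $\bm{m}_{c,g}+t_i\bm{\psi}_{c,g}$. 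Since a normal vector is a skew-normal with zero skewness and zero threshold, the conditional law is precisely $SN(\bm{m}_{c,g}+t_i\bm{\psi}_{c,g}, \ku_i\bOmega_{c,g}, \bm{0})$, as claimed.

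I do not expect a genuinely hard step here. The only points requiring care are confirming that the degenerate ($\bm{0}$-skewness) structure is preserved under conditioning --- which it is, because conditioning a multivariate normal yields a multivariate normal --- and keeping the $t_i$-terms cleanly separated from the $\bx^{(o)}_i$-terms when rearranging the conditional mean. It is worth noting that this derivation uses the MAR framework only in that the observed block $\bXo_i$ is the vector being conditioned on; the Gaussian conditioning identity itself is free of any assumption on the missingness mechanism.
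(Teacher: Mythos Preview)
Your proposal is correct and follows essentially the same route as the paper: invoke Theorem~\ref{snm} to get $\bX_i\mid T_i,U_i,Z_{i,g}\sim SN(\bmug+t_i\bDeltag,\ku_i\bOmegag,\bm{0})$, recognise this as an ordinary multivariate normal, and apply the standard Gaussian conditioning formulas, cancelling $\ku_i$ and regrouping the mean into $\bm{m}_{c,g}+t_i\bm{\psi}_{c,g}$. The paper's proof is slightly terser but the argument is identical.
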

   \begin{proof}
       All that is needed is to notice from Theorem \ref{snm} that  $\bX_i|T_i=t_i, U_i =u_i, Z_{i,g}=1 \hspace{0.15cm}\sim SN(\bmug + t_i\bDeltag, \ku_i \bOmegag, \bm{0} )$. 
       It thus follows that $\bXm_i | \bXo_i=\bx^{(o)}_i, T_i =t_i, U_i =u_i, Z_{i,g}=1 \sim SN(\bmu_g^*, \bOmegag^*, \bm{0} )$ where,
         \begin{align}
          \bmug^*    &= \bmu_{m,g} + t_i\bDelta_{m,g} + (\ku_i\bOmega_{mo,g} ) (\ku_i \bOmega_{oo,g})^{-1}(\bx_{i}^{(o)} - \bmu_{o,g} - t_i\bDelta_{o,g})&  \nonumber \\
                     &= \bmu_{m,g} + \bOmega_{mo,g}\bOmega_{oo,g}^{-1}(\bx_{i}^{(o)} - \bmu_{o,g}) + t_i(\bDelta_{m,g} - \bOmega_{mo,g}\bOmega_{oo,g}^{-1}\bDelta_{o,g} )& \nonumber\\
                     &= \bm{m}_{c,g} + t_i \bm{\psi}_{c,g},\nonumber
         \end{align}
                and
          \begin{align}
           \bOmegag^* &= \ku_i\bOmega_{mm,g} - (\ku_i\bOmega_{mo,g})(\ku_i\bOmega_{oo,g})^{-1}(\ku_i\bOmega_{om,g})&\nonumber \\ 
                      &= \ku_i(\bOmega_{mm,g} - \bOmega_{mo,g} \bOmega_{oo,g}^{-1}\bOmega_{om,g}) \nonumber&\\
                      &= \ku_i\bOmega_{c,g}. \nonumber&
           \end{align}
   \end{proof}
\subsection{Parameter Estimation}
        This subsection examines the limitations of the typical log-likelihood optimisation procedure when applied to an MAR random sample. In particular, it identifies where missingness obstructs conventional parameter estimation and explicitly states the expressions that are unavailable, preventing further progress. In response, this subsection derives the necessary quantities to enable a viable model fitting methodology.
        
        In typical fashion, the parameters are estimated by optimising the log-likelihood function. %Closed form solutions for the log-likelihood borne from (\ref{fmm}). 
        The following denote the collection of parameters $\allmu = (\bmu_1,\dots,\bmu_G)$,  $\allsig= (\bSig_1, \dots,\bSig_G)$, and $ \alllam = (\blam_1,\dots,\blam_G)$ and a complete sample $\bm{D} = (\bx_1, t_{1}, u_1, \bm{z}_{g,1}, \dots, \bx_n, t_{n}, u_n, \bm{z}_{g,n})$. Then from stochastic expression (\ref{sr}) the complete likelihood function $\mathcal{L}$ is derived as:
   \begin{align}
        \mathcal{L}\left(\allpi,\allmu, \allsig, \alllam, \alltheta|\bm{D}\right)
        &= \prod_{i=1}^n \prod_{g=1}^G \left[ \pi_g f(\bx_i,| t_i, u_i, z_{i,g})f(t_i| u_i, z_{i,g})f(u_i| z_{i,g})\right]^{z_{i,g}}& \nonumber\\
        &= \prod_{i=1}^n \prod_{g=1}^G \left[ \pi_g f_{\text{SN}}(\bx_i; \bmug + t_i\bDeltag, \ku_i\bOmegag, \bm{0})f_{\text{TN}}(t_i;0,\ku_i)h(u_i;\btheta_g)\right]^{z_{i,g}},&
    \end{align}
    from which the log-likelihood function $l_c$ is simplified to:
    \begin{align}
    \label{complete ll}
       l_c(\allpi, \allmu, \allsig, \alllam, \alltheta|\bm{D}) = \sum_{i=1}^n  \sum_{g=1}^G \bm{z}_{i,g} \left[ c_{i,g} + \ln(\pi_g) -\half \ln|\bOmegag| - \half \ku^{-1}_i \left(\bx_i - \bmug - \bDeltag t_i\right)^{\top} \bOmegag^{-1}\left(\bx_i - \bmug - \bDeltag t_i\right) \right]\hspace{-0.08cm},
   \end{align}
    where $c_{i,g} = \ln(\pi_g) - \frac{p}{2}\ln(2 \pi) + \half \ln(\frac{2}{\ku_i \pi}) - \half\frac{t_i^2}{k_i}+ \ln(h(u_i; \btheta_g))$. In a usual setting, $l_c$ is optimised via the parameters $\allpi, \allmu, \allsig, \alllam$ and $\alltheta$. However, the random variables $T, U$ and $Z_{g}$ are latent and $\bX$ is potentially incomplete. For this reason the complete log-likelihood $l_c$ is optimised by means of an Expectation-Maximisation (EM) based algorithm (\cite{EMmclachlan}). The EM algorithm is a popular iterative algorithm to maximise the log-likelihood for incomplete data. The expected value of the complete log-likelihood conditioned on the observed data, $\ev\left[l_c(\allpi, \allmu, \allsig, \alllam, \alltheta)| \underline{\bX^{o}} \right]$, say $ Q(\allpi, \allmu, \allsig, \alllam, \alltheta) $ is computed in the E-step and then maximised in the M-step in an iterative fashion until parameter estimates stabilise. The function $Q$ now depends on computing the following expected values for $i=1,\dots,n$ and $g = 1, \dots, G$:
   \begin{align}
   \label{ll cond ev md}
   &\widehat{z \ku^{-1} t\bx}_{i,g}        =\mathbb{E}\left[Z_{i,g}K_i^{-1}T_i\bX_i| \bX^{o}_i  \right],\hspace*{0.25cm}
   \widehat{z \ku^{-1} \bx }_{i,g}           =\mathbb{E}\left[Z_{i,g}K_i^{-1}\bX_i| \bX^{o}_i  \right], \hspace{0.25cm}
   \widehat{z \ku^{-1}\bx\bx^{\top}}_{i,g}   =\mathbb{E}\left[Z_{i,g}K_i^{-1}\bX_i\bX_i^{\top}| \bX^{o}_i\right],&
   \end{align}
   and
   \begin{align}
   \label{ll cond ev uni}
   &\widehat{z}_{i,g}                      = \mathbb{E}\left[Z_{i,g}|\bX^{o}_i\right], \hspace{0.25cm}
   \widehat{z\ku^{-1}}_{i,g}          = \mathbb{E}\left[Z_{i,g}K_i^{-1}| \bX^{o}_i  \right],\hspace*{0.25cm} 
   \widehat{z \ku^{-1} t }_{i,g}  = \mathbb{E}\left[Z_{i,g}K_i^{-1}T_i| \bX^{o}_i  \right],\hspace*{0.25cm}
   \widehat{z \ku^{-1} t^2}_{i,g}  = \mathbb{E}\left[Z_{i,g}K_i^{-1}T_i^2| \bX^{o}_i  \right].&
   \end{align}
%   \label{utz}
%    \mathbb{E}\left[T_i\ku_iZ_{g,i}|\bX^{o}_i\right] & = \mathbb{E}\left[ Z_{g,i}\mathbb{E}\left[ \ku_i\mathbb{E}\left[T_i |\ku_i, Z_{g,i},\bX^{o}_i\right]|Z_{g,i},\bX^{o}_i  \right] | \bX^{o}_i\right]&\\
%    \label{ut2z}
%    \mathbb{E}\left[T_i^2\ku_iZ_{g,i}| \bX^{o}_i  \right] & =\mathbb{E}\left[ Z_{g,i}\mathbb{E}\left[ \ku_i\mathbb{E}\left[T_i^2 |\ku_i, Z_{g,i},\bX^{o}_i\right]|Z_{g,i},\bX^{o}_i  \right] | \bX^{o}_i\right]&\\
%    \label{utxz}
%    \mathbb{E}\left[T_i\ku_iZ_{g,i}\bX_i| \bX^{o}_i  \right] &= \mathbb{E}\left[Z_{g,i}  \mathbb{E}\left[ \ku_i \mathbb{E}\left[ T_i \mathbb{E}\left[ \bX_i|T_i,U_i,Z_{g,i},\bX^{o}_i\right|Z_{g,i},\bX^{o}_i ] |U_i,Z_{g,i},\bX^{o}_i\right]  | \bX^o \right] | \bX^{o}_i\right]&\\
 %   \label{uxz}
 %   \mathbb{E}\left[\ku_iZ_{g,i}\bX_i| \bX^{o}_i  \right]    &= \mathbb{E}\left[Z_{g,i} \mathbb{E}\left[  \ku_i  \mathbb{E}\left[\bX_i|U_i,Z_{g,i},\bX^{o}_i\right] | Z_{g,i},\bX^{o}_i\right] | \bX^{o}_i\right]&\\
 %   \label{uxxz}
 %   \mathbb{E}\left[U_iZ_{g,i}\bX_i\bX_i^{\top}| \bX^{o}_i\right] &=\mathbb{E}\left[Z_{g,i} \mathbb{E}\left[ \ku_i \mathbb{E}\left[ \bX_i\bX_i^{\top}| U_i Z_{g,i}\bX^{o}_i \right] |Z_{g,i},\bX^{o}_i\right]| \bX^{o}_i\right].&
  % \end{align}

Theorems \ref{sn cond dist} and \ref{norm cond dist} are now used to calculate the following set of new conditional expectations of $\bX_i$ in  \eqref{ll cond ev md}. These expressions are needed for the E-step of the new EM algorithm.
%%%%%%%%%%%%%%%%%%%%%%%%%%%%%%%%%%%%%%%%%%%%%%%%%%%%%Conditional Expectations theorems %%%%%%%%%%%%%%%%%%%%%%%%%%%%%%%%%%%%%%%%%%%%%
   \begin{theorem} 
   \label{sn cond ev thm}
   Suppose $\bX_i \in \mathbb{R}^p$ has a distribution with pdf (\ref{fmm}). Further, let $\dot{\blam}_{o,g} = \frac{\bSig_{oo,g}^{-1/2}\bDelta_{o,g} }{\sqrt{ 1 - \bDelta_{o,g}^{\top}\bSig_{oo,g}^{-1}\bDelta_{o,g} } }$, $ \bDelta_{c,g} =\frac{\bSig_{c,g}^{1/2}  \blam_{c,g}}{ \sqrt{1 +  \blam_{c,g}^{\top} \blam_{c,g} }}$, and $\xi_{i,g}^{\frac{r}{2}} =  \ku_i^{\frac{r}{2}}  W_{\phi}\left( \ku_i^{-\half} \dot{\blam}_{o,g}^{\top}\bSig_{oo,g}^{-1/2}(\bx_{i}^{(o)} - \bmu_{o,g}) \right) $. Then:\\
   \\
   \begin{tabular}{lcl}
        $\mathbb{E}\left[\bX_i| \bX^{o}_i, U_i, Z_{i,g} \right]$        & =     & $\begin{pmatrix} \bmu_{c,g} +\xi_{i,g}^{\half}\bDelta_{c,g} \\ \bx^{(o)}_i \end{pmatrix} $ \\
                                                                    and &       &  \\
        $\mathbb{E}\left[\bX_i| \bX^{o}_i, T_i, U_i, Z_{i,g} \right]$   & =     & $\begin{pmatrix} \bm{m}_{c,g} + t_i \bm{\psi}_{c,g}\\ \bx^{(o)}_i \end{pmatrix}.$
   \end{tabular}
    \begin{proof}
        Consider the missing and observed parts of $\bX_i$ as partitioned in (\ref{partitions}). Then \\
        \begin{align}\mathbb{E}\left[\bX_i| \bX^{o}_i, U_i, Z_{i,g} \right] = \begin{pmatrix} \vspace{0.2cm}\mathbb{E}\left[\bXm_i| \bXo_i, U_i, Z_{i,g}\right] \\
        \mathbb{E}\left[\bXo_i| \bXo_i, U_i, Z_{i,g} \right]\end{pmatrix} \text{ and }
        \mathbb{E}\left[\bX_i| \bX^{o}_i,T_i,  U_i, Z_{i,g} \right] = \begin{pmatrix} \vspace{0.2cm}\mathbb{E}\left[\bXm_i| \bXo_i,T_i, U_i, Z_{i,g}\right] \\
        \mathbb{E}\left[\bXo_i| \bXo_i, T_i, U_i, Z_{i,g} \right]\end{pmatrix}.
        \end{align}
        Since $\bXo_i$ is observed, $\mathbb{E}\left[\bXo_i| \bXo_i, U_i, Z_{i,g} \right] = \mathbb{E}\left[\bXo_i| \bXo_i, T_i, U_i, Z_{i,g} \right]=\bx^{o}_i$. Substituting the parameters from Theorem \ref{sn cond dist} and Theorem \ref{norm cond dist} into the first result of  Theorem \ref{sm moments} yield:\\
     \begin{align}
     \label{sn cond ev}
         \mathbb{E}\left[\bX_i^{(m)} | \bX^{(o)}_i, U_i, Z_{i,g} \right] & = \bmu_{c,g} + \bDelta_{c,g}\ku_i^{\half} W_{\phi}\left( \ku_i^{\half}\delta_{0c,g} \right), \text{ and }& \\
         \label{norm cond ev}
         \mathbb{E}\left[\bX_i^{(m)} | \bX^{(o)}_i, T_i, U_i, Z_{i,g} \right] & = \bmu_{c,g} +t_i\bm{\psi}_{c,g}, \text{ where }&
     \end{align}
     $\delta_{0c,g} = \frac{ \lambda_{0,c,g} }{ \sqrt{1 +  \blam_{c,g}^{\top} \blam_{c,g} }  }$. Finally, substituting the result from Lemma \ref{lemma2} in the Appendix into the expected value (\ref{sn cond ev}) yields:
     \begin{align}
         \mathbb{E}\left[\bX_i^{(m)} | \bX^{(o)}_i, U_i, Z_{i,g} \right] & = \bmu_{c,g} +\bDelta_{c,g} \ku_i^{\half} W_{\phi}\left( \ku_i^{-\half} \dot{\blam}_{o,g}^{\top}\bSig_{oo,g}^{-1/2}(\bx_{i}^{(o)} - \bmu_{o,g}) \right) =  \bmu_{c,g} +\xi_{i,g}^{\half}\bDelta_{c,g}.&
     \end{align}
       \end{proof}
   \end{theorem}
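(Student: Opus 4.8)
The plan is to reduce the joint conditional expectation $\ev[\bX_i \mid \bXo_i, U_i, Z_{i,g}]$ (and the analogous one further conditioning on $T_i$) to a block vector whose observed block is trivial and whose missing block is supplied by the conditional distributions already established. First I would write $\bX_i = \big(\bXm_i{}^\top, \bXo_i{}^\top\big)^\top$ as in the partition \eqref{partitions}, so that the conditional expectation decomposes blockwise: the observed block is $\ev[\bXo_i \mid \bXo_i, U_i, Z_{i,g}] = \bx_i^{(o)}$ because $\bXo_i$ is $\sigma(\bXo_i)$-measurable, and likewise when we further condition on $T_i$. All the work is therefore in the missing block.

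For the missing block in the first identity, I would invoke Theorem \ref{sn cond dist}, which says $\bXm_i \mid \bXo_i = \bx_i^{(o)}, U_i = u_i, Z_{i,g}=1 \sim SN(\bmu_{c,g}, \ku_i \bSig_{c,g}, \blam_{c,g}, \ku_i^{-1/2}\lambda_{0,c,g})$, and then apply the mean formula from Theorem \ref{sm moments} to this skew-normal law. Since a plain $SN$ is the degenerate scale mixture ($K \equiv$ the conditioning constant), Theorem \ref{sm moments} with $\omega_1 = \ku_i^{1/2}$ gives a mean of the form (location) $+ \sqrt{2/\pi}\,\ku_i^{1/2}(\text{scale-adjusted }\bDelta)$; but one must be careful that Theorem \ref{sm moments} as stated is for the $\lambda_0 = 0$ case, whereas here $\lambda_{0,c,g}$ need not vanish. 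So the cleaner route — and the one the statement's $W_\phi$ factor signals — is to use the stochastic representation \eqref{sr}: conditionally, $\bXm_i = \bmu_{c,g} + T\,\bDelta_{c,g}^{\text{(adj)}} + (\text{zero-mean term})$ with $T$ truncated-normal, and then take expectations using the truncated-normal mean formula $\ev[T] = \mu + \sigma W_\phi(\mu/\sigma)$ recorded after Definition \ref{trunc normal}. Matching the truncated-normal parameters to $\delta_{0c,g}$ and the scaling $\ku_i$ yields $\ev[\bXm_i \mid \bXo_i, U_i, Z_{i,g}] = \bmu_{c,g} + \bDelta_{c,g}\,\ku_i^{1/2} W_\phi(\ku_i^{1/2}\delta_{0c,g})$, which is \eqref{sn cond ev}.

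The final step is to rewrite the argument of $W_\phi$ in the form claimed. This is where Lemma \ref{lemma2} in the Appendix enters: it should identify $\ku_i^{1/2}\delta_{0c,g}$ with $\ku_i^{-1/2}\dot{\blam}_{o,g}^{\top}\bSig_{oo,g}^{-1/2}(\bx_i^{(o)} - \bmu_{o,g})$, i.e. collapse the nested definitions of $\delta_{0c,g}$, $\lambda_{0,c,g}$, $\blam_{c,g}$ and $\bSig_{c,g}$ down to a quantity depending only on the \emph{observed} marginal parameters $\bSig_{oo,g}, \bDelta_{o,g}$. Substituting gives $\ev[\bXm_i \mid \bXo_i, U_i, Z_{i,g}] = \bmu_{c,g} + \xi_{i,g}^{1/2}\bDelta_{c,g}$ by the definition of $\xi_{i,g}^{r/2}$, completing the first identity. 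The second identity is easier: by Theorem \ref{norm cond dist}, $\bXm_i \mid \bXo_i, T_i = t_i, U_i, Z_{i,g}=1$ is $SN(\bm{m}_{c,g} + t_i\bm{\psi}_{c,g}, \ku_i\bOmega_{c,g}, \bm{0})$, and a skew-normal with zero skewness vector is just the corresponding normal, so its mean is simply the location $\bm{m}_{c,g} + t_i\bm{\psi}_{c,g}$; stacking with $\bx_i^{(o)}$ finishes it.

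The main obstacle I anticipate is the bookkeeping in the penultimate step — showing via Lemma \ref{lemma2} that the messy composite threshold $\ku_i^{1/2}\delta_{0c,g}$ simplifies to $\ku_i^{-1/2}\dot{\blam}_{o,g}^{\top}\bSig_{oo,g}^{-1/2}(\bx_i^{(o)} - \bmu_{o,g})$. This requires substituting the definitions of $\lambda_{0,c,g}$, $\bSig_{c,g}$ and $\blam_{c,g}$ from Theorem \ref{sn cond dist}, using the Schur-complement identity $1 + \blam_{c,g}^{\top}\blam_{c,g} = (1 - \bDelta_g^{\top}\bSig_g^{-1}\bDelta_g)^{-1}(1 - \bDelta_{o,g}^{\top}\bSig_{oo,g}^{-1}\bDelta_{o,g})$ (or its equivalent), and cancelling the common $\sqrt{1 - \bDelta_g^{\top}\bSig_g^{-1}\bDelta_g}$ factors — routine linear algebra, but error-prone, which is presumably why it has been isolated as a separate lemma.
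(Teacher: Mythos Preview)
Your proposal is correct and mirrors the paper's proof: blockwise decomposition of $\ev[\bX_i\mid\cdots]$, then read off the missing-block mean from the conditional laws of Theorems \ref{sn cond dist} and \ref{norm cond dist} via the skew-normal/truncated-normal mean formula, and finally simplify the $W_\phi$ argument using Lemma \ref{lemma2}. One minor bookkeeping correction: because the threshold parameter in Theorem \ref{sn cond dist} is $\ku_i^{-1/2}\lambda_{0,c,g}$, the argument of $W_\phi$ comes out as $\ku_i^{-1/2}\delta_{0c,g}$ (not $\ku_i^{1/2}\delta_{0c,g}$), and Lemma \ref{lemma2} gives $\delta_{0c,g} = \dot{\blam}_{o,g}^{\top}\bSig_{oo,g}^{-1/2}(\bx_i^{(o)}-\bmu_{o,g})$ with no $\ku_i$ factor, so there is nothing for the lemma to absorb---the $\ku_i^{-1/2}$ simply carries through.
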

Lastly, the conditional expected value of the cross-product is derived next. Following the partition (\ref{partitions}):
    \begin{align}
    \vspace{0.75cm}
        \bX_i\bX_i^{\top} &= \begin{pmatrix} \bX_i^{(m)}\bX_i^{(m)\top}  & \bX_i^{(m)}\bX_i^{(o)\top} \\
                                            \bX_i^{(o)}\bX_i^{(m)\top}  & \bX_i^{(o)}\bX_i^{(o)\top} 
                            \end{pmatrix} \hspace{0.2cm} \text{which implies that,} &\nonumber\\
          \vspace{0.75cm}
          \mathbb{E}\left[\bX_i\bX_i^{\top}| \bX^{o}_i, U_i, Z_{i,g} \right] &= 
          \begin{pmatrix} \ev[\bX_i^{(m)}\bX_i^{(m)\top}| \bXo_i, U_i, Z_{i,g}]  & \ev[\bX_i^{(m)}| \bXo_i, U_i, Z_{i,g}]\bx_i^{(o)\top} \\
                          \bx_i^{(o)} \ev[\bX_i^{(m)}| \bXo_i, U_i, Z_{i,g}]^{\top}  & \bx_i^{(o)}\bx_i^{(o)\top} 
          \end{pmatrix}.&
    \end{align}
    The conditional expected value $\ev[\bX_i^{(m)\top}| \bXo_i, U_i, Z_{i,g}]$ was derived in Theorem \ref{sn cond dist}. All that is left to derive is the conditional expected value of the cross-product between the missing components of $\bX_i^{(m)}$ - which leads to the following theorem.
\begin{theorem}
    \label{sn cond cross ev thm}
    Suppose $\bX_i \in \mathbb{R}^p$ has the pdf (\ref{fmm}). Consider the missing and observed parts of $\bX_i$ as partitioned in (\ref{partitions}). Then
        \begin{align}
        \ev[\bX_i^{(m)}\bX_i^{(m)\top}| \bX^{o}_i, U_i, Z_{i,g}] = \ku_i\bSig_{c,g} + \bmu_{c,g}\bmu_{c,g}^{\top} +\left(\bmu_{c,g}\bDelta_{c,g}^{\top} + \bDelta_{c,g}\bmu_{c,g}^{\top} \right)\xi_{i,g}^{\half}  - A^{(o)}_{i,g} \bDelta_{c,g}\bDelta_{c,g}^{\top} \xi_{i,g}^{\half}, 
        \end{align}
        where $\xi_{i,g}^{\half}$ is defined in  Theorem \ref{sn cond ev thm} and $A_{i,g}^{(o)} = \dot{\blam}_{o,g}^{\top}\bSig_{oo,g}^{-1/2}(\bx_{i}^{(o)} - \bmu_{o,g})$.
        \begin{proof}
        The covariance matrix of $ \bX_i$ conditioned on $\bXo_i$, $K_i$ and $Z_{i,g}$ will be simplified. Substituting the parameters from Theorem \ref{sn cond dist} into the second result of Theorem \ref{sm moments} and letting $A_{i,g}^{(o)} = \dot{\blam}_{o,g}^{\top}\bSig_{oo,g}^{-1/2}(\bx_{i}^{(o)} - \bmu_{o,g})$ yields:
        \begin{align}
        \label{cov cond}
            \mathrm{Cov}[\bX_i|\bX^{o}_i, U_i, Z_{i,g}] 
            &= \ku_i\bSig_{c,g} - \bDelta_{c,g}\bDelta_{c,g}^{\top}( \delta_{0c,g}\ku_i^{\half} W_{\phi}(\ku_i^{-\half} \delta_{0c,g}) + W_{\phi}^2(\ku_i^{-\half} \delta_{0c,g}) ) &\nonumber\\
            &= \ku_i\bSig_{c,g} - A_{i,g}^{(o)}\bDelta_{c,g}\bDelta_{c,g}^{\top} \xi_{i,g}^{\half} - \xi_{i,g}^{\half}\bDelta_{c,g}\bDelta_{c,g}^{\top}\xi_{i,g}^{\half}.
        \end{align}
        The following cross-product is simplified. Using the first result from Theorem \ref{sn cond ev thm}:
        \begin{align}
            \ev[\bX_i^{(m)}| \bX^{o}_i, U_i, Z_{i,g}]  \ev[\bX_i^{(m)}| \bX^{o}_i, U_i, Z_{i,g}] ^{\top} 
            & = \left( \bmu_{c,g} + \xi^{\half}_{i,g}\bDelta_{c,g} \right)\left( \bmu_{c,g} + \xi^{\half}_{i,g}\bDelta_{c,g} \right)^{\top} & \nonumber\\
            & = \bmu_{c,g}\bmu_{c,g}^{\top} + \xi^{\half}_{i,g}\left(\bmu_{c,g}\bDelta_{c,g}^{\top} + \bDelta_{c,g}\bmu_{c,g}^{\top} \right) +  \xi^{\half}_{i,g}\bDelta_{c,g} \bDelta_{c,g}^{\top} \xi^{\half}_{i,g}.&
        \end{align}
        The conditional expected value of $\bX_i^{(m)}\bX_i^{(m)\top}$ is calculated as follows:
           \begin{align}
               \ev[\bX_i^{(m)}\bX_i^{(m)\top}| \bX^{o}_i, U_i, Z_{i,g}] 
               &=\mathrm{Cov}[\bX_i|\bX^{o}_i, U_i, Z_{i,g}]  + \ev[\bX_i^{(m)}| \bX^{o}_i, U_i, Z_{i,g}]  \ev[\bX_i^{(m)}| \bX^{o}_i, U_i, Z_{i,g}] ^{\top},& \nonumber \\
               &= \ku_i\bSig_{c,g} + \bmu_{c,g}\bmu_{c,g}^{\top} + \left( \bmu_{c,g}\bDelta_{c,g}^{\top} + \bDelta_{c,g}\bmu_{c,g}^{\top} \right)\xi_{i,g}^{\half}  - A^{(o)}_{i,g} \bDelta_{c,g}\bDelta_{c,g}^{\top} \xi_{i,g}^{\half}.& \nonumber
           \end{align}
        \end{proof}
\end{theorem}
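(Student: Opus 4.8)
The plan is to obtain the conditional second moment matrix $\ev[\bX_i^{(m)}\bX_i^{(m)\top}\mid \bX_i^{(o)}, U_i, Z_{i,g}]$ from the standard decomposition into covariance plus outer product of the mean, namely $\ev[\bX_i^{(m)}\bX_i^{(m)\top}\mid \cdot] = \mathrm{Cov}[\bX_i^{(m)}\mid \cdot] + \ev[\bX_i^{(m)}\mid \cdot]\,\ev[\bX_i^{(m)}\mid \cdot]^{\top}$. Both ingredients are essentially already in hand: Theorem \ref{sn cond dist} tells us that $\bX_i^{(m)}\mid \bX_i^{(o)}, U_i, Z_{i,g} \sim SN(\bmu_{c,g}, \ku_i\bSig_{c,g}, \blam_{c,g}, \ku_i^{-1/2}\lambda_{0,c,g})$, and Theorem \ref{sm moments} gives the first two moments of any scale-mixture skew-normal vector in closed form. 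So the bulk of the work is a careful substitution of the conditional parameters into the moment formulas, followed by algebraic tidying using the reparameterisation $\bDelta_{c,g} = \bSig_{c,g}^{1/2}\blam_{c,g}/\sqrt{1+\blam_{c,g}^{\top}\blam_{c,g}}$ and the quantity $\xi_{i,g}^{r/2} = \ku_i^{r/2} W_{\phi}(\ku_i^{-1/2}\dot{\blam}_{o,g}^{\top}\bSig_{oo,g}^{-1/2}(\bx_i^{(o)}-\bmu_{o,g}))$ introduced in Theorem \ref{sn cond ev thm}.

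Concretely, I would proceed in four steps. First, write the outer-product term $\ev[\bX_i^{(m)}\mid \cdot]\ev[\bX_i^{(m)}\mid \cdot]^{\top}$ using the mean from Theorem \ref{sn cond ev thm}, which is $\bmu_{c,g} + \xi_{i,g}^{1/2}\bDelta_{c,g}$; expanding gives $\bmu_{c,g}\bmu_{c,g}^{\top} + \xi_{i,g}^{1/2}(\bmu_{c,g}\bDelta_{c,g}^{\top} + \bDelta_{c,g}\bmu_{c,g}^{\top}) + \xi_{i,g}^{1/2}\bDelta_{c,g}\bDelta_{c,g}^{\top}\xi_{i,g}^{1/2}$. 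Second, compute the covariance term by plugging the conditional parameters into the second formula of Theorem \ref{sm moments}: here $\bSig \mapsto \ku_i\bSig_{c,g}$, $\bDelta \mapsto \bDelta_{c,g}$, and the role of the threshold is played by $\ku_i^{-1/2}\lambda_{0,c,g}$, so that $\delta_0 \mapsto \delta_{0c,g} = \lambda_{0,c,g}/\sqrt{1+\blam_{c,g}^{\top}\blam_{c,g}}$ and $\omega_1, \omega_2$ become the appropriate moments of $\ku_i^{1/2}$; the scale-mixture expectation over $U_i$ is trivial here since we are conditioning on $U_i$, so $\omega_1 = \ku_i^{1/2}$ in the relevant sense. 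This yields $\mathrm{Cov}[\bX_i^{(m)}\mid \cdot] = \ku_i\bSig_{c,g} - \bDelta_{c,g}\bDelta_{c,g}^{\top}\bigl(\delta_{0c,g}\ku_i^{1/2}W_{\phi}(\ku_i^{-1/2}\delta_{0c,g}) + W_{\phi}^2(\ku_i^{-1/2}\delta_{0c,g})\bigr)$, exactly as in \eqref{cov cond}. Third, apply Lemma \ref{lemma2} to rewrite $\ku_i^{1/2}W_{\phi}(\ku_i^{-1/2}\delta_{0c,g})$ as $\xi_{i,g}^{1/2}$ and, with $A_{i,g}^{(o)} = \dot{\blam}_{o,g}^{\top}\bSig_{oo,g}^{-1/2}(\bx_i^{(o)}-\bmu_{o,g})$, identify $\delta_{0c,g}\ku_i^{1/2}W_{\phi}(\ku_i^{-1/2}\delta_{0c,g}) = A_{i,g}^{(o)}\xi_{i,g}^{1/2}$ (this is precisely the content making the $\lambda_{0,c,g}$ and $\dot\blam_{o,g}$ parameterisations consistent) and $W_{\phi}^2(\ku_i^{-1/2}\delta_{0c,g}) = \xi_{i,g}^{1/2}\xi_{i,g}^{1/2}$ appropriately rescaled, giving the form in the second line of \eqref{cov cond}. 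Fourth, add the covariance and outer-product terms: the two $\xi_{i,g}^{1/2}\bDelta_{c,g}\bDelta_{c,g}^{\top}\xi_{i,g}^{1/2}$ contributions cancel, leaving $\ku_i\bSig_{c,g} + \bmu_{c,g}\bmu_{c,g}^{\top} + (\bmu_{c,g}\bDelta_{c,g}^{\top} + \bDelta_{c,g}\bmu_{c,g}^{\top})\xi_{i,g}^{1/2} - A_{i,g}^{(o)}\bDelta_{c,g}\bDelta_{c,g}^{\top}\xi_{i,g}^{1/2}$, which is the claimed expression.

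The main obstacle is Step 3: reconciling the two different ways the conditional skewness shows up. In Theorem \ref{sn cond dist} the conditional distribution carries a threshold $\ku_i^{-1/2}\lambda_{0,c,g}$ with $\lambda_{0,c,g}$ built from $\bDelta_{o,g}^{\top}\bSig_{oo,g}^{-1}(\bx_i^{(o)}-\bmu_{o,g})$ and the normaliser $\sqrt{1-\bDelta_g^{\top}\bSig_g^{-1}\bDelta_g}$, whereas Theorem \ref{sn cond ev thm} packages everything through $\dot\blam_{o,g}$ and $A_{i,g}^{(o)} = \dot\blam_{o,g}^{\top}\bSig_{oo,g}^{-1/2}(\bx_i^{(o)}-\bmu_{o,g})$. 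Showing that $\delta_{0c,g}$ (the threshold divided by $\sqrt{1+\blam_{c,g}^{\top}\blam_{c,g}}$) relates to $A_{i,g}^{(o)}$ in exactly the way needed for the term $\delta_{0c,g}\ku_i^{1/2}W_{\phi}(\ku_i^{-1/2}\delta_{0c,g})$ to collapse to $A_{i,g}^{(o)}\xi_{i,g}^{1/2}$ requires the Schur-complement identity $1+\blam_{c,g}^{\top}\blam_{c,g}$-style simplification already done for the mean in Theorem \ref{sn cond ev thm} via Lemma \ref{lemma2}; once that identity is invoked, the rest is bookkeeping. I would therefore lean on Lemma \ref{lemma2} to do the heavy lifting and present Step 3 as a direct consequence, mirroring the structure used in the proof of Theorem \ref{sn cond ev thm}.
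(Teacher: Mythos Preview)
Your proposal is correct and follows essentially the same route as the paper: decompose the second moment as $\mathrm{Cov}+\ev\ev^{\top}$, compute each piece from the conditional skew-normal law of Theorem~\ref{sn cond dist} (the mean via Theorem~\ref{sn cond ev thm}, the covariance by substituting into the moment formula), invoke Lemma~\ref{lemma2} to identify $\delta_{0c,g}=A_{i,g}^{(o)}$ and hence rewrite everything through $\xi_{i,g}^{1/2}$, and then observe the cancellation of the $\xi_{i,g}^{1/2}\bDelta_{c,g}\bDelta_{c,g}^{\top}\xi_{i,g}^{1/2}$ terms. The only cosmetic difference is that you compute the outer product before the covariance, whereas the paper does the reverse.
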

The conditional expected values (\ref{ll cond ev uni}) will be derived next so that all necessary expectations are obtained for the E-step. We start off by deriving $\ev[Z_{i,g}=1| \bXo_i]$:
   \begin{align}
   \label{z ev}
    \widehat{z}_{i,g} &= \mathbb{P}[Z_{i,g}=1| \bXo_i] = \frac{\pi_g f_{\text{SMSN}}(\bx^{(o)}_i; \bmu_{o,g}, \bSig_{oo.g},\dot{\blam}_{o,g},\btheta_g)}{ \sum_{g=1}^G \pi_g f_{\text{SMSN}}(\bx^{(o)}_i;\bmu_{o,g}, \bSig_{oo.g},\dot{\blam}_{o,g},\btheta_g)}.&
\end{align}
%Deriving the rest of the conditional expected values in (\ref{ll cond ev uni}) requires the following expected values:
%\begin{align}
%\mathbb{E}\left[T_i| \bX^{o}_i, U_i,Z_{i,g}\right] \hspace{0.25cm}, \text{and} \hspace{0.25cm} \mathbb{E}\left[T_i^2| \bX^{o}_i, U_i,Z_{i,g}\right], \hspace{0.25cm}.
%\end{align}
Bayes theorem is used to determine that, $T_i|\bX^{o}_i = \bx^{o}_i, U_i = u_i ,Z_{i,g} = 1 \sim TN(\mu_{T_{i,g}}, \ku_i\sigma^2_{T_{g}})$,  where $\sigma^2_{T_{g}} = \left(1 + \bDelta_{o,g}^{\top}\bOmega_{oo,g}^{-1}\bDelta_{o,g} \right)^{-1}$ and $ \mu_{T_{i,g}} = \sigma^2_{T_{g}} \bDelta_{o,g}^{\top}\bOmega_{oo,g}^{-1}(\bx^{(o)}_i - \bmu_{o,g})$. The detailed derivation is provided in the Appendix in Theorem \ref{t cond dist}. From Definition $\ref{trunc normal}$, and noting from Lemma \ref{lemma3} in the Appendix that $\frac{\mu_{T_{i,g}}}{\sigma_{T_{g}}} = A_{i,g}^{(o)} $ we get the following expected values:
\begin{align}
    \mathbb{E}\left[T_i| \bX^{o}_i, U_i,Z_{i,g}\right]    &= \mu_{T_{i,g}} + \sigma_{T_{g}}\ku_i^{\half} W_{\phi}\left(A_{i,g}^{(o)}\right) =  \mu_{T_{i,g}} + \sigma_{T_{g}} \xi_{i,g}^{\frac{1}{2}}& \\
     \mathbb{E}\left[T_i^2| \bX^{o}_i, U_i,Z_{i,g}\right] &= \mu^2_{T_{i,g}} + \mu_{T_{i,g}}\sigma_{T_{g}}  \ku_i^{\half}W_{\phi}\left(A_{i,g}^{(o)}\right) + \ku_i\sigma^2_{T_{g}} = \mu^2_{T_{i,g}} + \mu_{T_{i,g}}\sigma_{T_{g}}  \xi_{i,g}^{\frac{1}{2}} + \ku_i\sigma^2_{T_{g}} .&
\end{align}
The rest of the conditional expected values have the following expressions:
\begin{align}
    \widehat{z \ku^{-1} }_{i,g}     & = \widehat{z}_{i,g}\widehat{\ku^{-1}}_{i,g} &\nonumber\\
    \widehat{z \ku^{-1} t }_{i,g}   & = \widehat{z}_{i,g} \left( \widehat{\ku^{-1} }_{i,g} \mu_{T_{i,g}} + \sigma_{T_{g}}  \widehat{\xi_{i,g}^{-\half}} \right)                            =\widehat{z}_{i,g}\widehat{\ku^{-1}t}_{i,g}&\nonumber\\
    \widehat{z \ku^{-1} t^2}_{i,g}  & = \widehat{z}_{i,g} \left( \widehat{\ku^{-1} }_{i,g} \mu^2_{T_{i,g}} +\mu_{T_{i,g}}\sigma_{T_{g}}\widehat{\xi_{i,g}^{-\half}} + \sigma^2_{T_{g}} \right) =\widehat{z}_{i,g}\widehat{\ku^{-1} t^2}_{i,g}, \hspace{0.2cm} \text{ where }\nonumber &
\end{align}
 $\widehat{ \ku^{-1}t }_{i,g} = \widehat{\ku^{-1} }_{i,g} \mu_{T_{i,g}} + \sigma_{T_{g}}  \widehat{\xi_{i,g}^{-\half}} $ and $\widehat{\ku^{-1} t^2 }_{i,g} =\widehat{\ku^{-1} }_{i,g} \mu^2_{T_{i,g}} +\mu_{T_{i,g}}\sigma_{T_{g}}\widehat{\xi_{i,g}^{-\half}} + \sigma^2_{T_{g}} $. With the help of the conditional expectations in Theorem \ref{sn cond ev thm} and Theorem \ref{sn cond cross ev thm}, the conditional expected values (\ref{ll cond ev md}) have the following expressions:
\begin{align}
    \widehat{z \ku^{-1} \bx}_{i,g}           &= \widehat{z}_{i,g} \begin{pmatrix} \widehat{\ku^{-1} \bx}_{i,g}^{(m)} \\ \bx^{(o)}_i \end{pmatrix}, &\nonumber\\
    \widehat{z\ku^{-1} t \bx }_{i,g}         &= \widehat{z}_{i,g} \begin{pmatrix} \widehat{\ku^{-1} t }_{i,g} \bm{m}_{c,g} + \widehat{\ku^{-1} t^2}_{i,g} \bm{\psi}_{c,g}\\ \bx^{(o)}_i \end{pmatrix},  &\nonumber\\
    \widehat{z\ku^{-1}\bx\bx^{\top}}_{i,g}   &= \widehat{z}_{i,g} \begin{pmatrix} 
                                                \bSig_{c,g} + \widehat{\ku^{-1}}_{i,g}\bmu_{c,g}\bmu_{c,g}^{\top} +\bm{\alpha}_{i,g} \xi_{i,g}^{-\half}  & \widehat{\ku^{-1} \bx}_{i,g}^{(m)}\bx_i^{(o)\top} \\
                                                \bx_i^{(o)}\widehat{\ku^{-1} \bx}_{i,g}^{(m)\top}  & \bx_i^{(o)}\bx_i^{(o)\top} \end{pmatrix}, \hspace{0.2cm} \text{ where }& \nonumber
   \end{align}
   $\widehat{\ku^{-1} \bx}_{i,g} ^{(m)} = \widehat{\ku^{-1}}_{i,g}\bmu_{c,g} +\widehat{\xi_{i,g}^{-\half}}\bDelta_{c,g}$,\hspace{0.2cm} and \hspace{0.2cm} $ \bm{\alpha}_{i,g} = \bmu_{c,g}\bDelta_{c,g}^{\top} + \bDelta_{c,g}\bmu_{c,g}^{\top} - A^{(o)}_{i,g} \bDelta_{c,g}\bDelta_{c,g}^{\top}$.

\subsection{EM based algorithm }
%Now that the relevant expected values have been derived, it is possible to demonstrate how EM algorithm is used. Denote $\bm{P}^{(k)} = (\widehat{\allpi}^{(k)}, \allmu^{(k)},\widehat{ \allsig}^{(k)}, \widehat{ %\alllam}^{(k)}, \widehat {\alltheta}^{(k)} )$ as the collection of estimated parameters at the $k^{th}$ iteration. Let $\widehat{z \ku^{-1} \bx}_{i,g}^{(k)}$,  $\widehat{z\ku^{-1} t\bx}_{i,g} ^{(k)}$,  %$\widehat{z\ku^{-1}\bx\bx^{\top}}_{i,g}^{(k)}$,  $\widehat{z\ku^{-1} }_{i,g}^{(k)}$,  $\widehat{z\ku^{-1} t^2}_{i,g}^{(k)}$,  and $\widehat{z\ku^{-1} t }_{i,g}^{(k)}$ denote the current conditional expected values %using parameter estimates $\bm{P}^{(k)}$ at the $k^{th}$ iteration. The E-step computes the expected value of the complete log-likelihood function (\ref{complete ll}) computed using estimated parameters %$\widehat{\bm{P}}^{(k)}$, denoted $Q(\bm{P}|\bm{P}^{(k)}) = \ev\left[l_c(\bm{P})| \underline{\bX^{o}} \right]$. Differentiating (\ref{complete ll}) does not produce independent estimators in closed form but does %provide simpler closed forms that depend on each other. The M step is therefore partitioned into several conditional maximization (CM) steps to maximise (\ref{complete ll}) each parameter while fixing the remaining; %the algorithm is called ECM. Once $\bm{P}^{(k+1)}$ is computed from the CM steps, the algorithm repeats itself until convergence is reached. The ECM algorithm for the FMSMSN family is described in Algorithm \ref{em %alg}.
%
With the required expected values now derived, we can illustrate how the EM algorithm is applied. Let $\bm{P}^{(k)} = (\widehat{\allpi}^{(k)}, \widehat{\allmu}^{(k)}, \widehat{\allsig}^{(k)}, \widehat{\alllam}^{(k)}, \widehat{\alltheta}^{(k)})$ denote the parameter estimates at the $k^{\text{th}}$ iteration. Define the conditional expectations $\widehat{z \ku^{-1} \bx}_{i,g}^{(k)}$,  $\widehat{z\ku^{-1} t\bx}_{i,g} ^{(k)}$,  $\widehat{z\ku^{-1}\bx\bx^{\top}}_{i,g}^{(k)}$,  $\widehat{z\ku^{-1} }_{i,g}^{(k)}$,  $\widehat{z\ku^{-1} t^2}_{i,g}^{(k)}$,  and $\widehat{z\ku^{-1} t }_{i,g}^{(k)}$ as the current values computed using $\bm{P}^{(k)}$.
In the E-step, we compute the expected complete-data log-likelihood (\ref{complete ll}) under $\bm{P}^{(k)}$, denoted as $Q(\bm{P}|\bm{P}^{(k)})$. Differentiating this function does not yield closed-form solutions for each parameter independently, but it does result in simpler expressions that are interdependent. Consequently, the M-step is divided into several conditional maximisation (CM) steps, where each parameter is updated in turn while the others are held fixed. This forms the Expectation Conditional Maximisation (ECM) algorithm. Once $\bm{P}^{(k+1)}$ is obtained from the CM steps, the process repeats until convergence. The ECM procedure for the FMSMSN family is detailed in Algorithm \ref{em alg}.
\newline
The EM-type algorithm proposed in Algorithm \ref{em alg} requires initial values that are close enough to the population parameters to ensure it converges to the global maximum. One strategy is to simulate random parameters or randomly partition the data and run the M-step of the ECM algorithm. However, inadequate starting values may cause the algorithm to converge to a local maximum. More informed initialisation techniques in the literature surveyed operate on the idea of using other clustering techniques available to partition data as initialisation, of which a comprehensive review of these techniques can be found in \cite{EMinitialisation} and \cite{melnykov2012initializing}. Some examples include clustering via hierarchical clustering and k-means clustering. The strategy used in this paper considers clusters partitioned by the k-means algorithm as a suitable starting point.
\newline
The log-likelihood of the observed dataset increases monotonically at each iteration of the ECM algorithm. However, the algorithm may run into a local maximum before it finds a global maximum, in which case the initial stability is temporary -i.e. the the observed log-likelihood values are stable only for a few iterations before they increase again. The Aitken acceleration criterion is thus used in this paper to determine whether the algorithm has converged to its asymptotic value. Let $l_o^{(k)}$ denote the observed log-likelihood at the $k^{th}$ iteration. Then the Aitken acceleration criterion is given as:
\begin{align}
    a^{(k+1)} = \frac{l_o^{(k+2)} - l_o^{(k+1)}}{l_o^{(k+1)} - l_o^{(k)}}.
\end{align}
Then the estimated asymptotic observed log-likelihood at the $k^{th}$ iteration, say $(l_o^{\infty})^{(k)}$ is:
\begin{align}
    (l_o^{\infty})^{(k)} = l_o^{(k+1)} + \frac{ l_o^{(k+2)} - l_o^{(k+1)}  }{1 - a^{(k+1)}}.
\end{align}
The EM algorithm is therefore considered to have converged if $(l_o^{\infty})^{(k)} - l_o^{(k+1)} < \epsilon$ where $\epsilon>0$ is a small number.
\begin{algorithm}
\caption{ECM for FMSMSN with MAR}\label{em alg}
\KwIn{Initial parameter estimates $\bm{P}^{(k)}$}
\KwOut{Updated parameter estimates $\bm{P}^{(k+1)}$}

\textbf{E-step:} \\
\Indp Compute $Q(\bm{P}|\bm{P}^{(k)})$ - that is, compute conditional expectations (\ref{ll cond ev md}) and (\ref{ll cond ev uni}) using $\bm{P}^{(k)}$.\\
\Indm

\textbf{M-step:}\\
\Indp Differentiating $Q(\bm{P}|\bm{P}^{(k)})$ with respect to $\bm{P}$ and update $\bm{P}^{(k+1)}$ as follows:\\
%\Indp
\For{$g = 1, \dots, G$}{
     $\hat{\pi}_g^{(k+1)} = \frac{1}{n} \displaystyle\sum_{i=1}^{n}\widehat{z}_{i,g}^{(k)}$\newline
     \vspace{0.2cm} \newline
    $\begin{aligned}[t]
       \widehat{\bOmega}_g^{(k+1)} = \frac{1}{ \displaystyle\sum_{i=1}^{n}\widehat{z}_{i,g}^{(k)}}  \displaystyle\sum_{i=1}^{n} &\Big\{\widehat{z\ku^{-1}\bx\bx^{\top}}_{i,g} - \widehat{z \ku^{-1} \bx}_{i,g}\hat{\bmug}^{(k)\top} - \hat{\bmug}^{(k)}\widehat{z \ku^{-1} \bx}_{i,g}^{\top} + \widehat{z \ku^{-1}}_{i,g}^{(k)}\hat{\bmug}^{(k)} \hat{\bmug}^{(k) \top} - \widehat{z\ku^{-1} t \bx }_{i,g}(\bDeltag^{(k)})^{\top} \\
        &  - \bDeltag^{(k)}\widehat{z\ku^{-1} t \bx }_{i,g}^{\top} + \widehat{z \ku^{-1} t }_{i,g}\Big(\hat{\bmug}^{(k+1)} (\bDeltag^{(k)})^{\top} + \bDeltag^{(k)} (\hat{\bmug}^{(k+1)})^{\top}\Big) + \widehat{z \ku^{-1} t^2}_{i,g}\bDeltag^{(k)}(\bDeltag^{(k)})^{\top}\Big\}
    \end{aligned}$\newline
    \vspace{0.2cm} \newline
    $\widehat{\bmu}_g^{(k+1)} = \frac{1}{ \displaystyle\sum_{i=1}^{n} \widehat{z\ku^{-1}}_{i,g}^{(k)}}   \displaystyle\sum_{i=1}^{n}\left\{ \widehat{z\ku^{-1} \bx}_{i,g} ^{(k)}- \widehat{z \ku^{-1}t}_{i,g} ^{(k)}\bDeltag^{(k)}  \right\} $\newline
    \vspace{0.2cm} \newline
    $\widehat{\bDelta}_g^{(k+1)} = \frac{1}{ \displaystyle\sum_{i=1}^{n} \widehat{z \ku^{-1} t^2}_{i,g}^{(k)} }   \displaystyle\sum_{i=1}^{n}\left\{\widehat{z\ku^{-1} t \bx }_{i,g} - \widehat{z \ku^{-1} t}_{i,g}\hat{\bmug}^{(k+1)}  \right\}$\newline
    \vspace{0.2cm} \newline
    $\widehat{\btheta}_g^{(k+1)} =   \underset{\btheta_g }{\arg\max} \{ Q(\bm{P}|(\hat{\pi}_g^{(k+1)}, \widehat{\bOmega}_g^{(k+1)}, \widehat{\bmu}_g^{(k+1)},\widehat{\bDelta}_g^{(k+1)})  )\} $\newline
    \vspace{0.2cm} \newline
    $\widehat{\bSig}_g^{(k+1)}  =  \widehat{\bOmega}_g^{(k+1)} + \widehat{\bDelta}_g^{(k+1)}(\widehat{\bDelta}_g^{(k+1)})^{\top}$, and \newline
    \vspace{0.2cm} \newline
    $\widehat{\blam}_g^{(k+1)} = \frac{ (\widehat{\bSig}_g^{(k+1)} )^{-1/2} \widehat{\bDelta}_g^{(k+1)} }{\sqrt{1 - (\widehat{\bDelta}_g^{(k+1)} )^{\top} (\widehat{\bSig}_g^{(k+1)} )^{-1} \widehat{\bDelta}_g^{(k+1)} } }$
}
\Indm
iterate E and M steps until $(l_o^{\infty})^{(k)} - l_o^{(k+1)} < \epsilon$ where $\epsilon>0$ is a small number.
\end{algorithm}

\section{Simulation experiments}
\label{sec:simulations}
This section describes the simulation design and performance metrics, and compares the results of fitting the multivariate skew-normal, multivariate skew-t, skew-slash, and skew-variance-gamma distributions on simulated data. Section \ref{experimentdesign} explains the simulation design, Section \ref{compconsiderations} discusses the computational pitfalls that may be encountered, followed by the clustering and parameter recovery performances in Section \ref{sim results}.
\subsection{Experiment design}
\label{experimentdesign}
Throughout the experiment, bivariate data from a two-component mixture are simulated with the following scales matrices and skewness vectors: \newline
\begin{align}
   \pi_1 = 0.3, \hspace{0.35cm} \pi_2 = 0.7, \hspace{0.35cm} \blam_1 = \begin{pmatrix} 3 \\ 6 \end{pmatrix}, \hspace{0.35cm} \blam_2 = \begin{pmatrix} 5 \\ 4 \end{pmatrix}, \hspace{0.35cm} \bmu_1 = \begin{pmatrix} -5 \\ 0 \end{pmatrix}, \hspace{0.35cm} \bSig_1 = \begin{pmatrix}3 & -1 \\ -1 & 3\end{pmatrix} , \hspace{0.35cm} \text{ and } \hspace{0.35cm} \bSig_2 = \begin{pmatrix}3 & 1 \\ 1  & 3\end{pmatrix}.
\end{align}
We account for varying levels of overlap between the two clusters. We consider well separated and close clusters when $ \bmu_2 = \begin{pmatrix} -3 \\ 0 \end{pmatrix}$ and $ \bmu_2 =\begin{pmatrix} -1 \\ 0 \end{pmatrix} $, respectively. For both of these cases we consider varying proportions of values missing from the sample: specifically, we consider samples where 0\%, 20\%, 40\%, 60\%, and 80\% of the observed vectors have values missing at random. For each cluster overlap and missingness proportion scenario, we simulate $B =200$ samples of size $n=200$ from the following distributions:
\begin{enumerate}
    \item Multivariate skew-normal.
    \item Multivariate skew-slash where $\alpha_1 = 3$ and $\alpha_2 =2$.
    \item Multivariate skew-t where $\nu_1 = 4$ and $\nu_2 = 7$.
    \item Multivariate skew-variance-gamma, where $\eta_1 = 2$ and $\eta_2 = 3$.
\end{enumerate}
For each of the data generating processes, all four distributions (multivariate skew-normal, skew-slash, skew-t, and skew-variance-gamma) are fitted. The experiments are assessed according to the clustering performance of the distributions and the parameter recovery. To assess the clustering performances of the distributions, the average Adjusted Rand Index (ARI) values are recorded. The ARI is a measure between 0 and 1 that assesses the likelihood that the clustering method's results are due to random chance. 
The ARI considers the agreements between the method's clusters and the true classes as a ratio of the total number of ways the data points could be partitioned  while accounting for the expected value of agreements due to random chance. An ARI value close to 0 indicates that the clustering method performs no better than random chance, while a value close to 1 indicates the clustering method has performed well.

To evaluate how well the fitted distributions recover the true parameters, we use the Absolute Bias (AB) and Root Mean Squared Error (RMSE) metrics, as defined in \cite{test}. 
Let $\btheta = (\btheta_1,\dots,\btheta_G)$ denote the true parameter values, where each $\btheta_g$ contains $p$ elements. For the $b^{\text{th}}$ replicate, let the estimated parameters be $\hat{\btheta}_b = (\hat{\btheta}_{1,b},\dots,\hat{\btheta}_{G,b})$, for $b = 1, \dots, B$. The AB and RMSE for group $g$ are defined as:
\begin{align}
\text{AB}(\btheta_g) &= \frac{1}{B} \sum_{b=1}^B \sum_{i=1}^p |\hat{\theta}_{b,i,g} - \theta_{i,g}| \hspace{0.5cm} \text{ and } \hspace{0.5cm} \text{RMSE}(\btheta_{g}) = \sqrt{ \frac{1}{B} \sum_{b=1}^B \sum_{i=1}^p (\hat{\theta}_{b,i,g} - \theta_{i,g})^2 }.
\end{align}
These measures are applied to the location parameters, the skewness parameters, the trace and anti-trace of the scale matrices in the simulation study to assess the parameter recovery performance.
%These measures are applied to the location parameters ($\text{AB}(\bmu_g)$ and $\bmu_{\text{RMSE},g}$), the skewness parameters ($\text{AB}(\blam_g)$ and $\blam_{\text{RMSE},g}$), the diagonals of the scale matrices ($\text{AB}(\bSig^D_g)$ and $\bSig^D_{\text{RMSE},g}$), and the off-diagonals ($\bSig^{OD}{\text{AB},g}$ and $\bSig^{OD}{\text{RMSE},g}$).
%for $p$ number of elements in parameter $\btheta_g$, with $\btheta = (\btheta_1,\dots,\btheta_G)$, and its $b^{th}$ estimated collection $\hat{\btheta}_b = (\hat{\btheta}_{1,b},\dots,\hat{\btheta}_{b,G})$ for $b =1, %\dots, B$ the Absolute Bias and Root Mean Square Error are:
%\begin{align}
%    \btheta_{(AB,g)} = \frac{1}{B}\sum_{b=1}^B \sum_{i=1}^p |\hat{\theta}_{b,i,g} - \theta_{i,g}| \hspace{0.5cm} \text{ and } \hspace{0.5cm} \btheta_{(RMSE,g)} = \sqrt {\frac{1}{B}\sum_{b=1}^B\sum_{i=1}^p %(\hat{\theta}_{b,i,g} - \theta_{i,g})^2 }.
%\end{align}
%The AB and RMSE measures are applied to the location parameters, $\bmu_{\text{AB},g} \text{ and } \bmu_{\text{RMSE},g} $, the skewness parameters $\blam_{\text{AB},g} \text{ and } \blam_{\text{RMSE},g}$, the diagonals %of the scale matrices $\bSig^D_{\text{AB},g} \text{ and }\bSig^D_{\text{RMSE},g}$, and the off-diagonals of the scale matrices $\bSig^{OD}_{\text{AB},g} \text{ and } \bSig^{OD}_{\text{RMSE},g}$.

\subsection{Computation considerations}
\label{compconsiderations}
The experiments attempt to fit the distributions in Table \ref{4cases} to simulated data generated from said distributions in a permutative manner. As discussed in Section \ref{special cases}, this causes the model fitting to follow a limiting case. For example, fitting a multivariate skew-t distribution on data generated from a multivariate skew-normal distribution causes its estimated degrees of freedom $\nu$ to increase without bound. Similarly, fitting a skew-slash distribution causes the estimated $\alpha$ to increase without bound. In these cases, the algorithm may crash as the numerical values become too large or too small to be handled by the programming language. Mitigating these problematic cases requires an upper bound on the values the estimated hyperparameters can achieve. On a similar note, the order of operations with respect to multiplication and division may produce values that fall outside the machine's precision. This effectively produces zeros in denominator positions, thereby interrupting the algorithm. Having precautionary checks that inspect whether or not the denominator values are within the machine's precision allows the algorithm to run smoothly. 

Estimating the hyperparameters required approximating $\ev[\ln(\ku_i^{-1})| \bX^{o}_i,Z_{i,g}]$, which was accomplished by using a second order Taylor polynomial expansion about $\ev[\ku_i^{-1}]$, which produces the following numerically stable expression:
\begin{align*}
    \ln(\ku_i^{-1})      & \approx \ln\left( \ev[\ku_i^{-1}| \bX^{o}_i,Z_{i,g}] \right) + \frac{ \ku_i^{-1} - \ev[\ku_i^{-1}| \bX^{o}_i,Z_{i,g}]  }{\ev[\ku_i^{-1}| \bX^{o}_i,Z_{i,g}] } - \frac{ (\ku_i^{-1} - \ev[\ku_i^{-1}| \bX^{o}_i,Z_{i,g}] )^2  }{2(\ev[\ku_i^{-1}| \bX^{o}_i,Z_{i,g}])^2 }, \text{ so that}\\
    \ev[\ln(\ku_i^{-1})| \bX^{o}_i,Z_{i,g}] & \approx \ln\left( \ev[\ku_i^{-1}| \bX^{o}_i,Z_{i,g}] \right) - \half \left(\frac{\ev[\ku_i^{-2}| \bX^{o}_i,Z_{i,g}]   }{\left( \ev[\ku_i^{-1}| \bX^{o}_i,Z_{i,g}] \right)^2} - 1 \right).
\end{align*}
This approximation is improved by realising that $\ln(\cdot)$ is a concave function on its domain. It is then possible to employ Jensen's inequality which constrains the values so that :
\begin{align}
    \frac{ \ev[\ku_i^{-2}| \bX^{o}_i,Z_{i,g}]   }{\left( \ev[\ku_i^{-1}| \bX^{o}_i,Z_{i,g}] \right)^2} >1.
\end{align}

Lastly, the algorithm that fits the multivariate skew-variance-gamma distribution has the tendency to estimate $\eta$ close to zero. It is discussed in detail by \cite{svgarticle} that the hyperparameter $\eta \leq \frac{p}{2}$ produces an infinite log-likelihood value. Some alterations have been suggested to work around these cases, but they do not preserve the monotonicity of the observed log-likelihood over each iteration. Similarly to handling the edge cases presented by multivariate skew-t and multivariate skew-slash while still preserving the monotonicity of the observed log-likelihood at each iteration, it is suggested that $\eta$ be bounded below by $\frac{p}{2}$.
\subsection{Simulation results}
\label{sim results}
The design of the simulation experiment involves a combination of several factors. First, two levels of cluster overlap are considered to reflect varying degrees of separation between mixture components. For each overlap level, five different proportions of missing data are introduced to assess performance under increasing levels of data incompleteness. These settings are applied across four distinct underlying mixture distributions, each representing a different data-generating processes. For each scenario, each of the four different distributions from \tablename  \ref{4cases} are fitted. This results in a total of 80 distinct simulation settings. The average Adjusted Rand Index (ARI) values obtained for each of these scenarios are presented in Section \ref{Cluster performance} and the parameter recovery performance is discussed in Section \ref{parameter recovery}.

\subsubsection{Clustering performance}
\label{Cluster performance}
\begin{figure}[H]
    \centering
    \includegraphics[width=0.67\linewidth, height = 12cm]{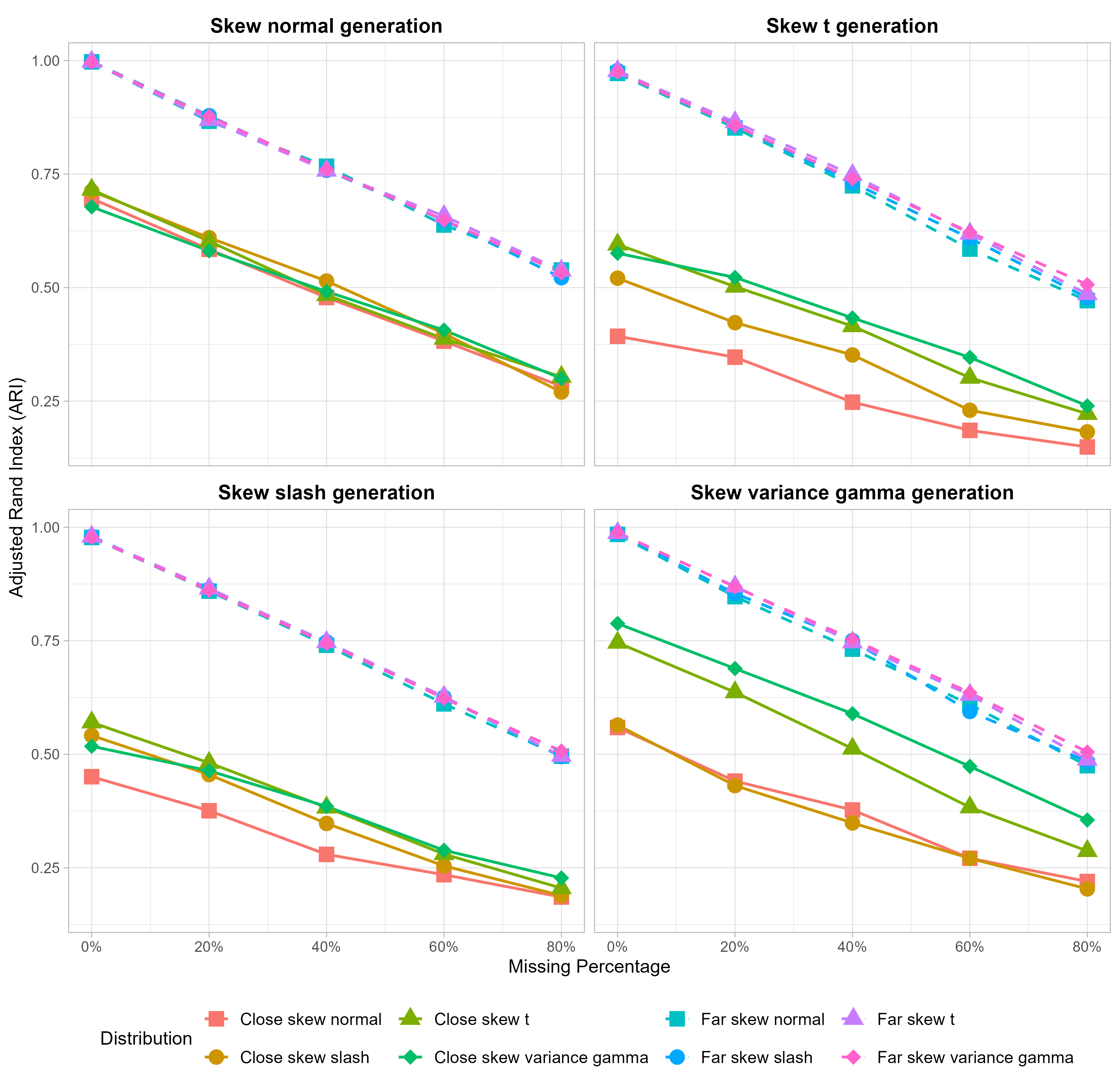}
    \caption{Average ARI values across 200 replications for datasets of size $n = 200$, randomly generated from a two-component mixture whose component distributions are specified in the headers of each subplot.}
    \label{ARI200}
\end{figure}

\begin{figure}[H]
    \centering\includegraphics[width=0.67\linewidth, height = 12cm]{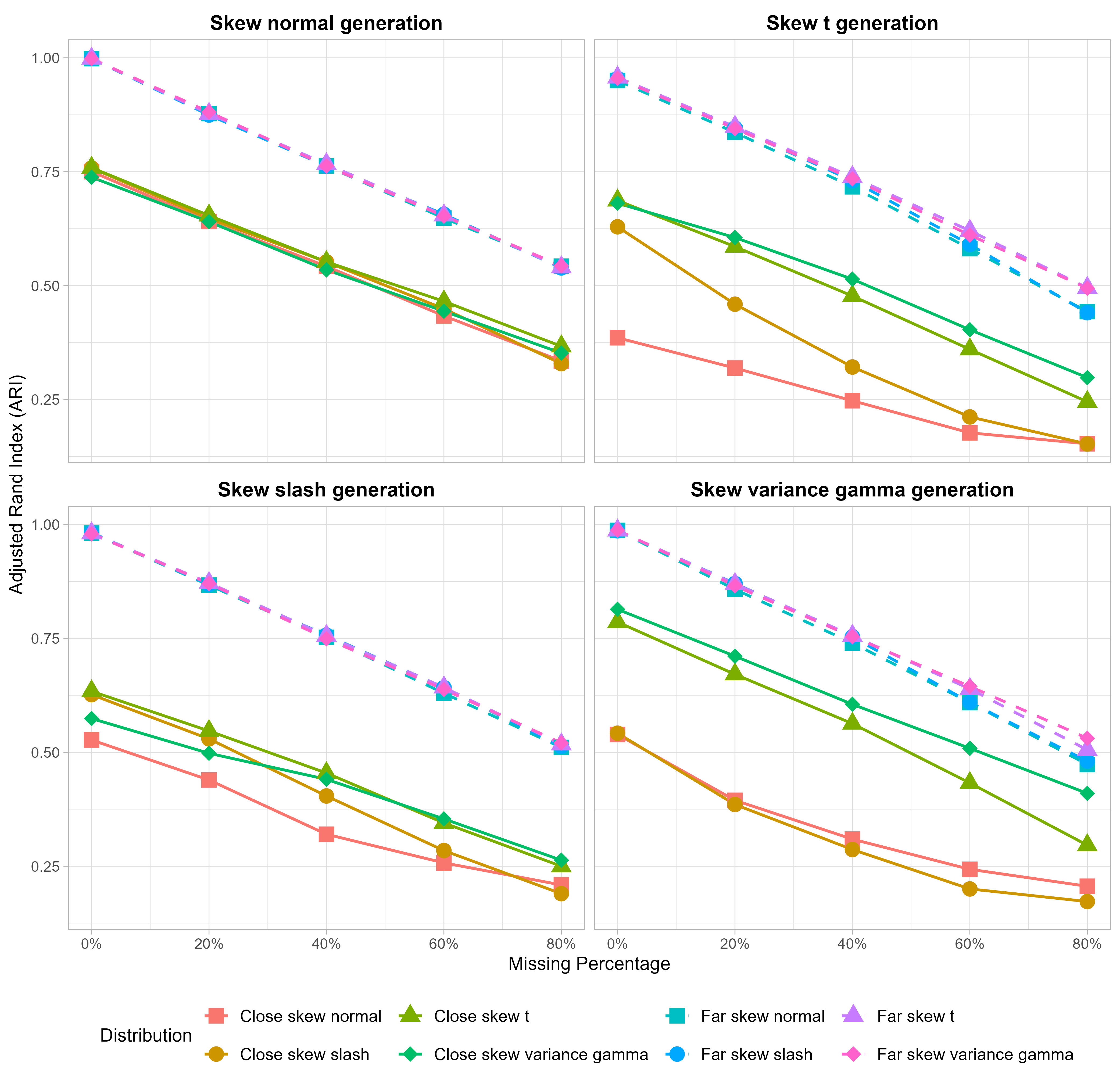}
    \caption{Average ARI values across 200 replications for datasets of size $n = 500$, randomly generated from a two-component mixture whose component distributions are specified in the headers of each subplot.}
    \label{ARI500}
\end{figure}

Overall, the patterns from Figures \ref{ARI200} and \ref{ARI500} indicate a clear inverse relationship between the proportion of incomplete data and clustering performance across all distributions: as the percentage of missing rows increases, clustering accuracy declines, regardless of cluster proximity or sample size. This behaviour is expected from previously proposed extensions of scale mixtures of multivariate normal distributions with MAR values (\cite{tt1, tt2}). A comparison of solid and dashed lines reveals that cluster proximity impacts the inherent clustering ability of each distribution, with closer proximity consistently degrading performance. However, proximity does not appear to intensify the effect of increasing missingness on clustering outcomes. The near-linear decline in ARI values is observed consistently across all data generation–distribution pairs when $n=200$.

In \figurename~\ref{ARI500}, the ARI deteriorates more rapidly for the multivariate skew-normal and multivariate skew-slash distribution applied to data generated by multivariate skew-t and multivariate skew-variance-gamma distributions—the latter being the most complex of the four special cases considered. All four distributions yield comparable performance when fit to data generated from a multivariate skew-normal distribution, which aligns with expectations since the multivariate skew-normal is a nested case within the multivariate skew-slash, skew-t, and skew-variance-gamma families.

A consistent performance hierarchy emerges: both the multivariate skew-t and skew-variance-gamma distributions approximate the performance of the skew-slash model when fitted to skew-slash-generated data, while the skew-variance-gamma model performs on par, if not better, than the multivariate skew-t distribution when fit to multivariate skew-t-generated data. This ordering reflects the increasing flexibility and complexity of the families, ranging from multivariate skew-normal to skew-variance-gamma. Notably, these performance gaps become more pronounced with closer cluster proximity and higher missing data percentages and are consistent across both sample sizes ($n=200$ and $n=500$). In contrast, for datasets with well-separated clusters, the differences in clustering performance among the distributions are much less stark.

Next, we look at the parameter recovery for each of the four distributions when fitted on their own generated data.

\subsubsection{Parameter recovery}
\label{parameter recovery}
%\subsubsection{Mixing proportions}
%%%%%%%%%%%%%%%%%%%%%%%%%%||pi||%%%%%%%%%%%%%%%%%%%%%%%%%%%%%%%%%%%%%%%

\begin{figure}[H]
    \centering
    \begin{subfigure}{0.49\textwidth}
          \includegraphics[width=\textwidth]{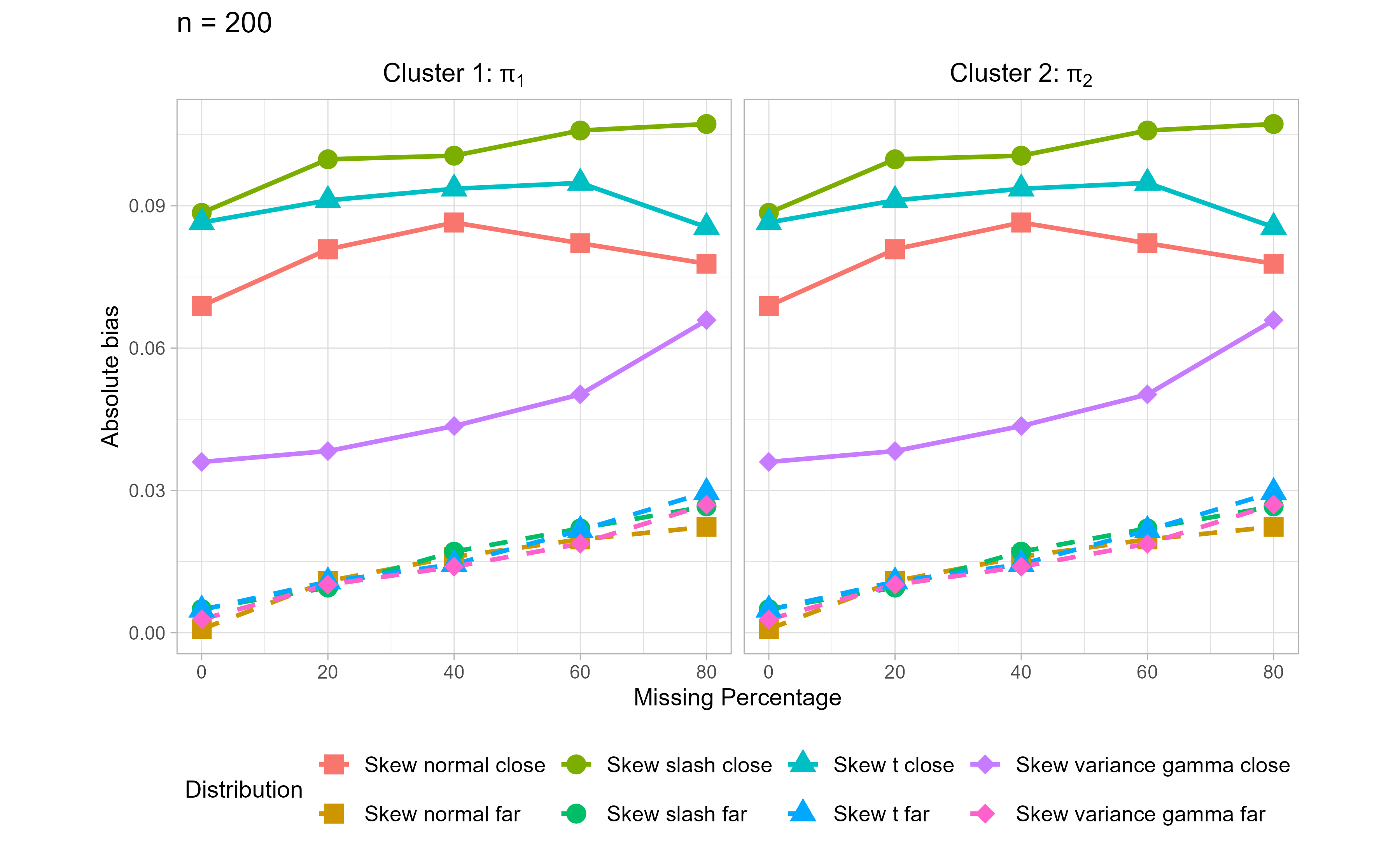}
    \caption{Average absolute bias for mixing proportions  across 200 replications for datasets of size $n = 200$, randomly generated from the distributions specified in the headers of each subplot.}
         
    \end{subfigure}
    \begin{subfigure}{0.49\textwidth}
         \includegraphics[width=\textwidth]{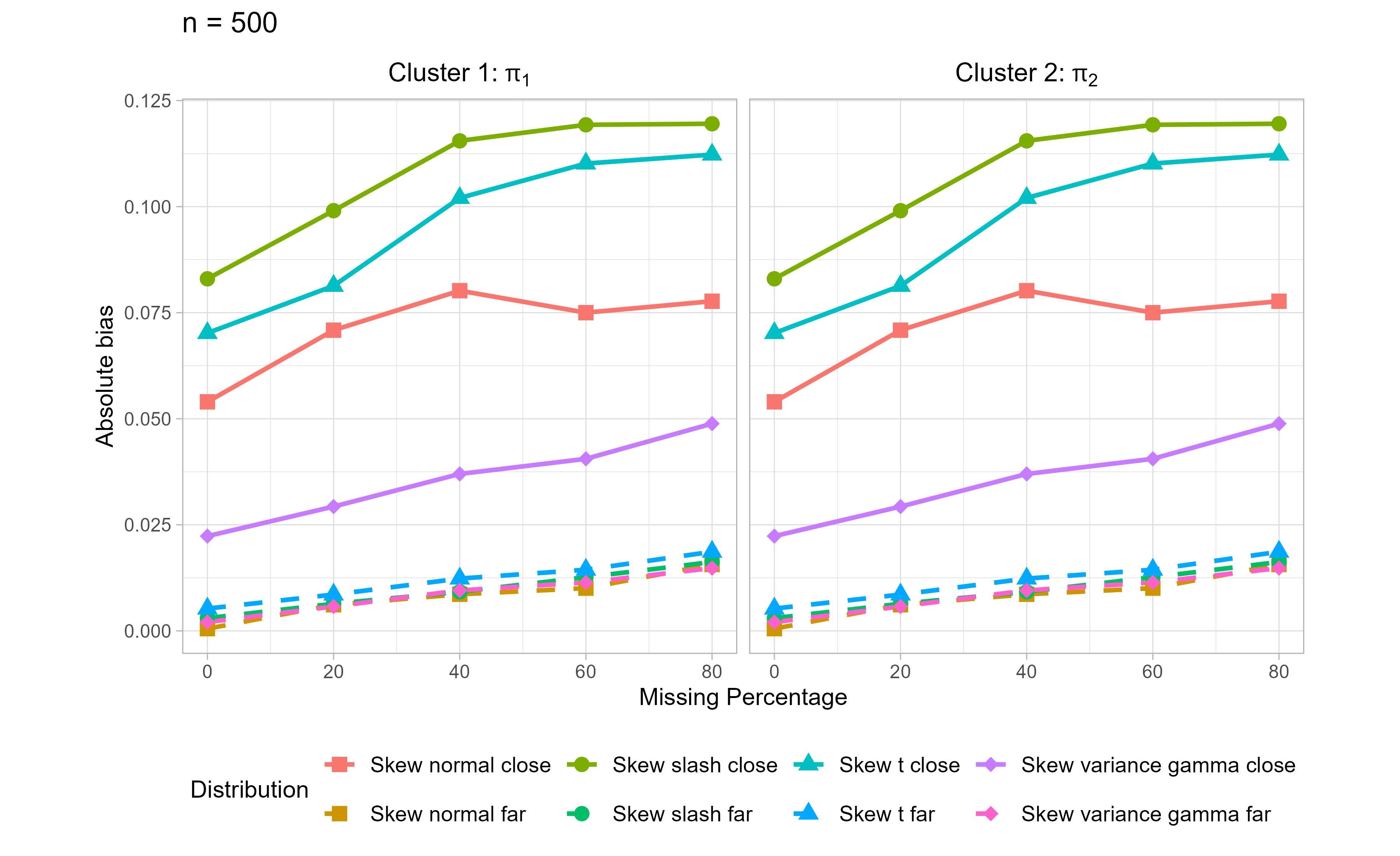}
         \caption{Average absolute bias for mixing proportions  across 200 replications for datasets of size $n = 500$, randomly generated from the distributions specified in the headers of each subplot.}
    \end{subfigure}
    \caption{ Average absolute bias for mixing proportions across 200 replications}
    \label{pi_abs_bias_200}
\end{figure}

\begin{figure}[H]
    \centering
    \begin{subfigure}{0.49\textwidth}
        \includegraphics[width=\textwidth]{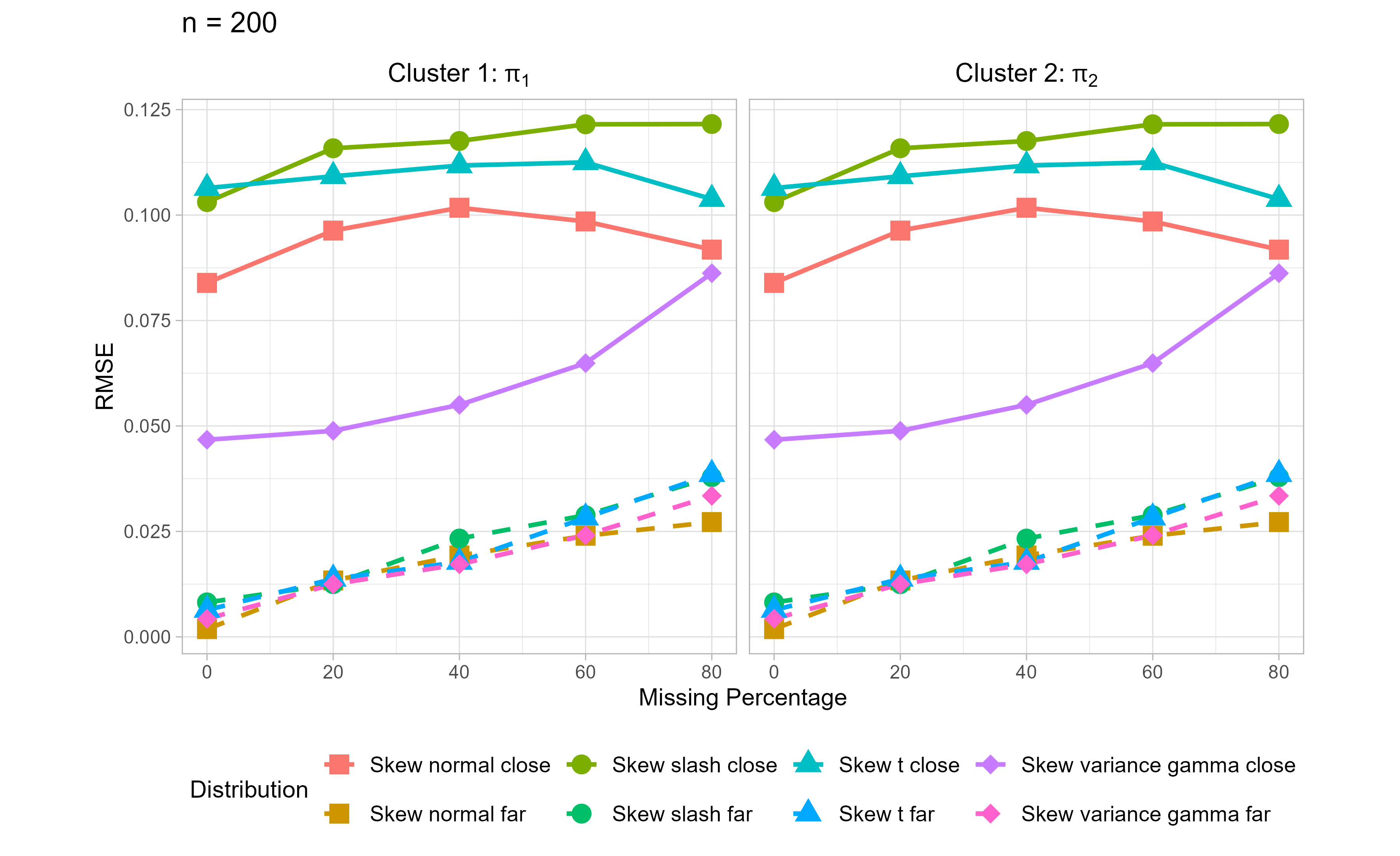}
         \caption{Average RMSE for mixing proportions  across 200 replications for datasets of size $n = 200$, randomly generated from the distributions specified in the headers of each subplot.}
    \end{subfigure}
        \begin{subfigure}{0.49\textwidth}
        \includegraphics[width=\textwidth]{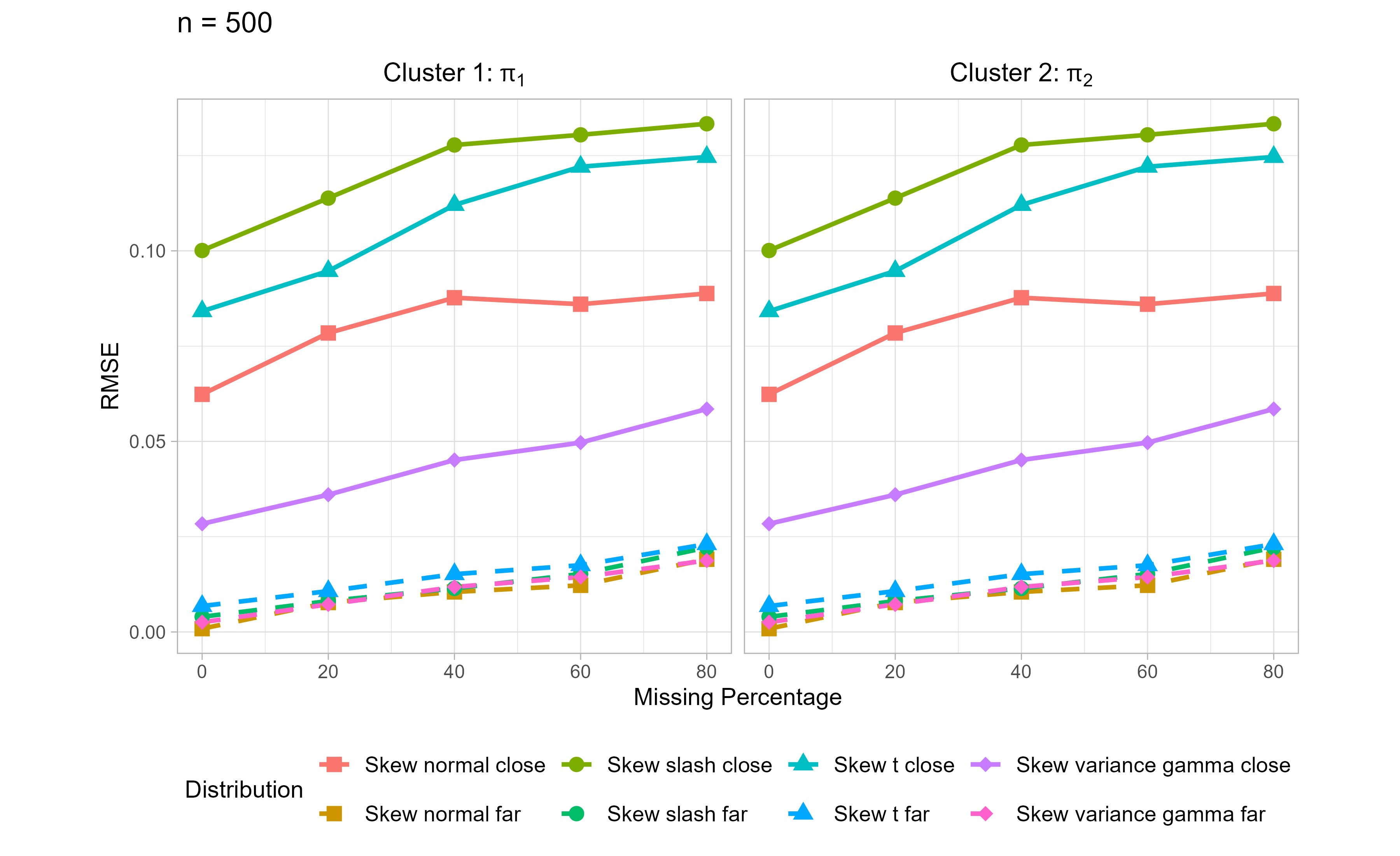}
         \caption{Average RMSE for mixing proportions  across 200 replications for datasets of size $n = 500$, randomly generated from the distributions specified in the headers of each subplot.}
    \end{subfigure}
            \caption{ Average RMSE for mixing proportions across 200 replications.}
    \label{pi_rmse_200}
\end{figure}
%\subsubsection{Location vectors}
%%%%%%%%%%%%%%%%%%%%%%%%%%||mu||%%%%%%%%%%%%%%%%%%%%%%%%%%%%%%%%%%%%%%%

\begin{figure}[H]
    \centering
    \begin{subfigure}{0.49\textwidth}
        \includegraphics[width=\textwidth]{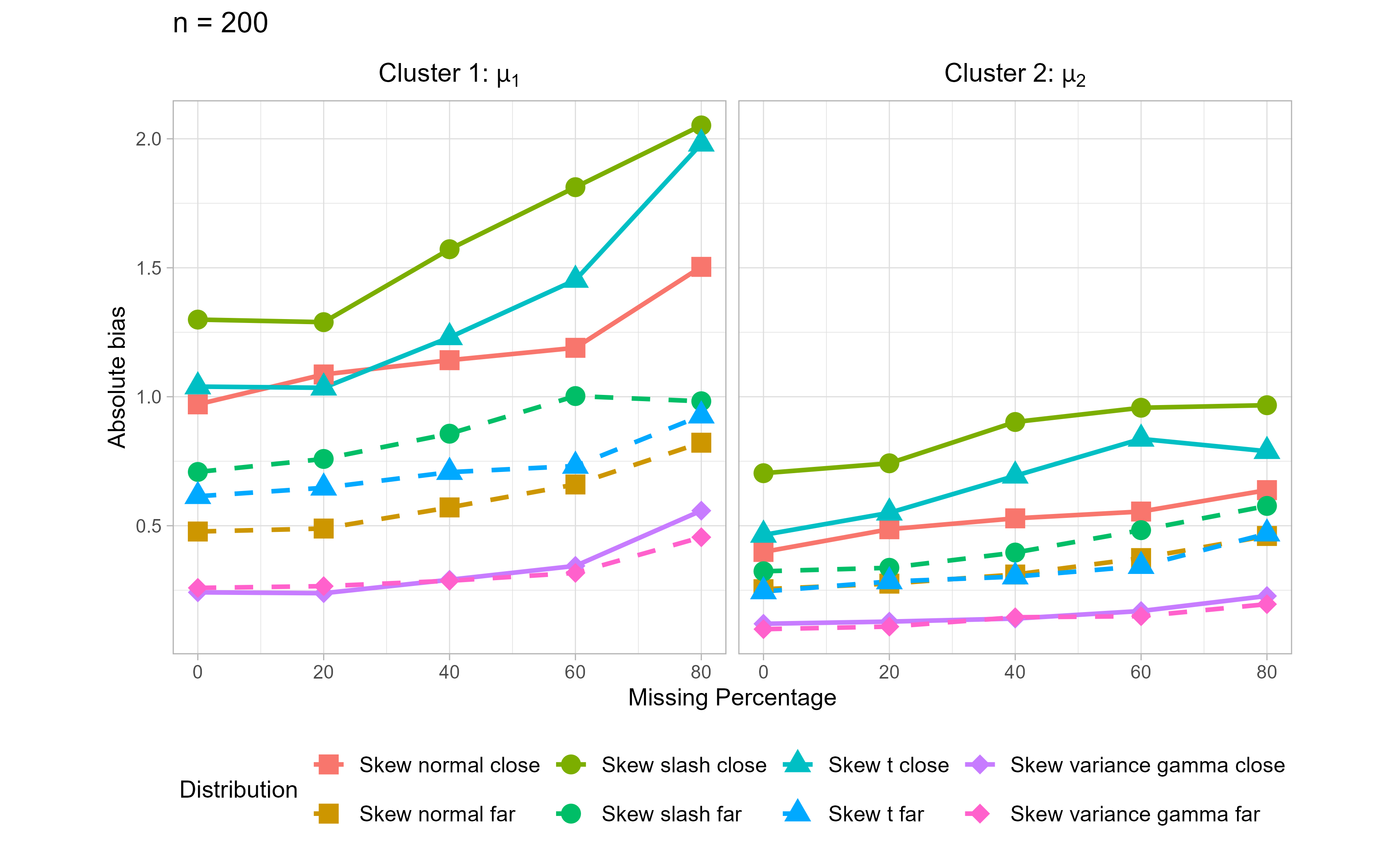}
         \caption{ Average absolute bias for location vectors  across 200 replications for datasets of size $n = 200$, randomly generated from the distributions specified in the headers of each subplot.}
    \end{subfigure}
    \begin{subfigure}{0.49\textwidth}
        \includegraphics[width=\textwidth]{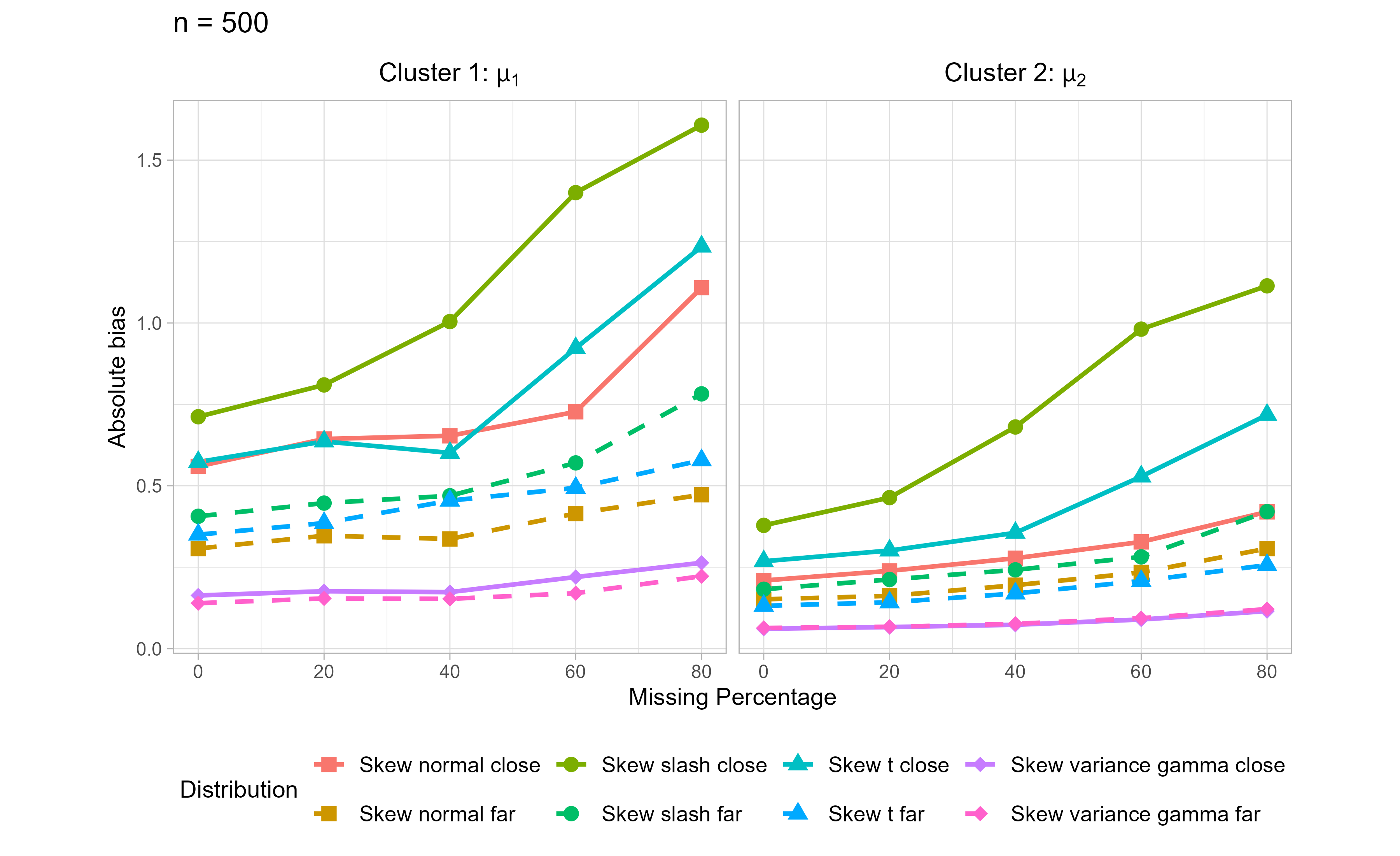}
         \caption{Average absolute bias for location vectors across 200 replications for datasets of size $n = 500$, randomly generated from the distributions specified in the headers of each subplot.}
    \end{subfigure}
    \caption{ Average absolute bias for location vectors across 200 replications.}
         \label{mu_abs_bias_200}
\end{figure}

\begin{figure}[H]
    \centering
    \begin{subfigure}{0.49\textwidth}
        \includegraphics[width=\textwidth]{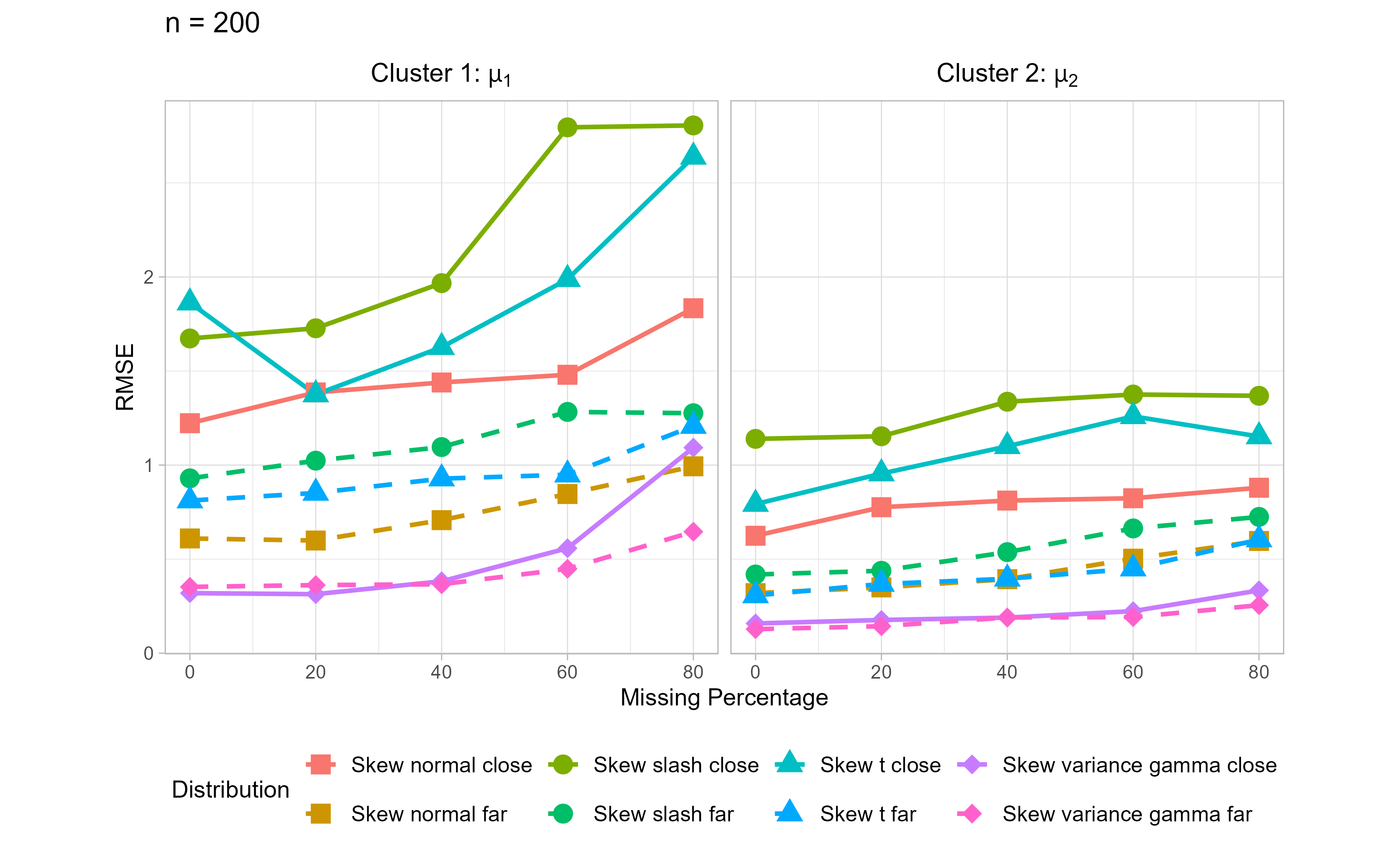}
         \caption{Average RMSE for location vectors  across 200 replications for datasets of size $n = 200$, randomly generated from the distributions specified in the headers of each subplot.}
    \end{subfigure}
        \begin{subfigure}{0.49\textwidth}
        \includegraphics[width=\textwidth]{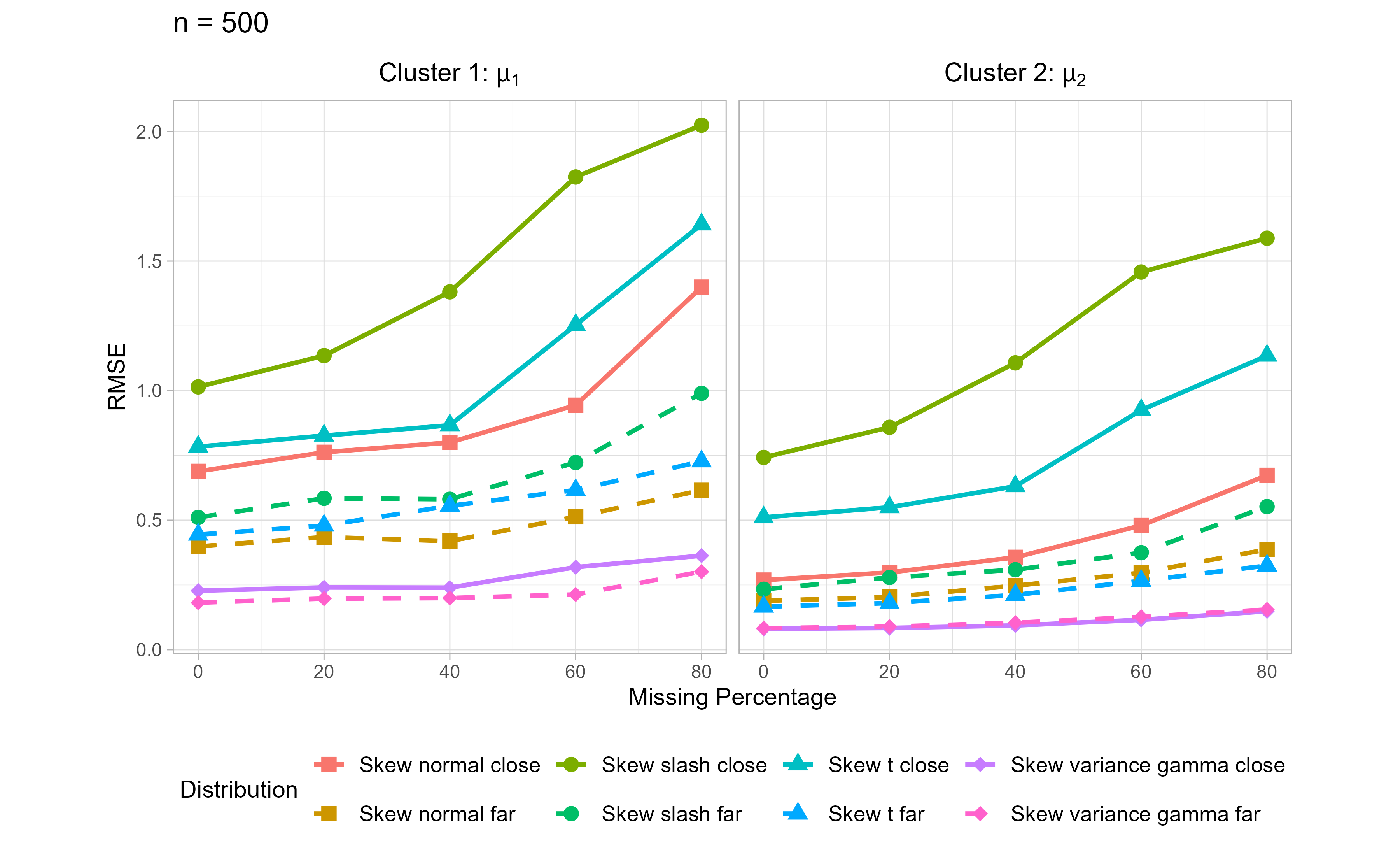}
         \caption{Average RMSE for location vectors  across 200 replications for datasets of size $n = 500$, randomly generated from the distributions specified in the headers of each subplot.}
    \end{subfigure}
        \caption{ Average RMSE for location vectors across 200 replications.}
         \label{mu_rmse_200}
\end{figure}

%\subsubsection{Skewness vector}
%%%%%%%%%%%%%%%%%%%%%%%%%%||Lambda||%%%%%%%%%%%%%%%%%%%%%%%%%%%%%%%%%%%%%%%

\begin{figure}[H]
    \centering
    \begin{subfigure}{0.49\textwidth}
        \includegraphics[width=\textwidth]{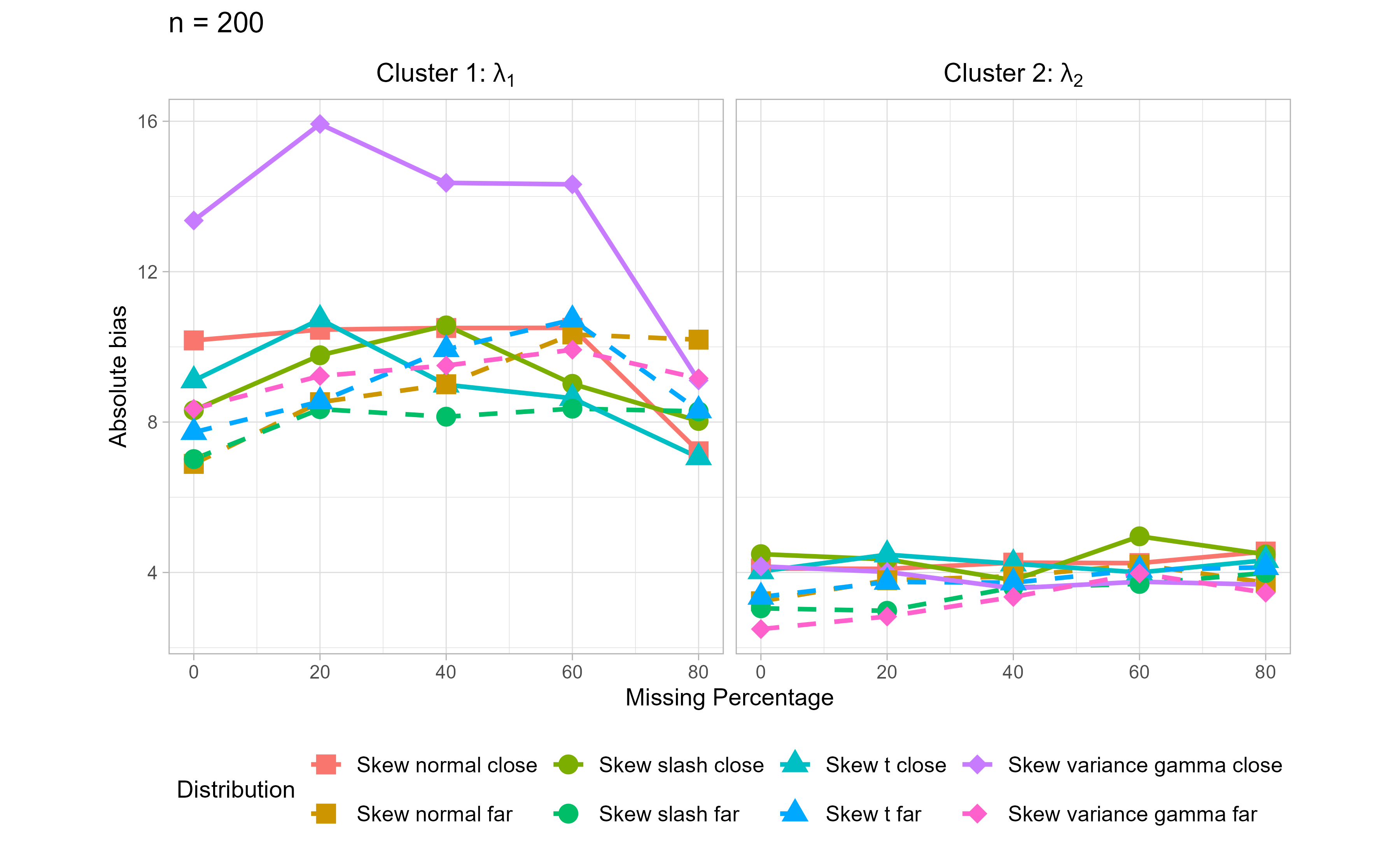}
         \caption{ Average Absolute bias for skewness vectors across 200 replications for datasets of size $n = 200$, randomly generated from the distributions specified in the headers of each subplot.}
    \end{subfigure}
    \begin{subfigure}{0.49\textwidth}
         \includegraphics[width=\textwidth]{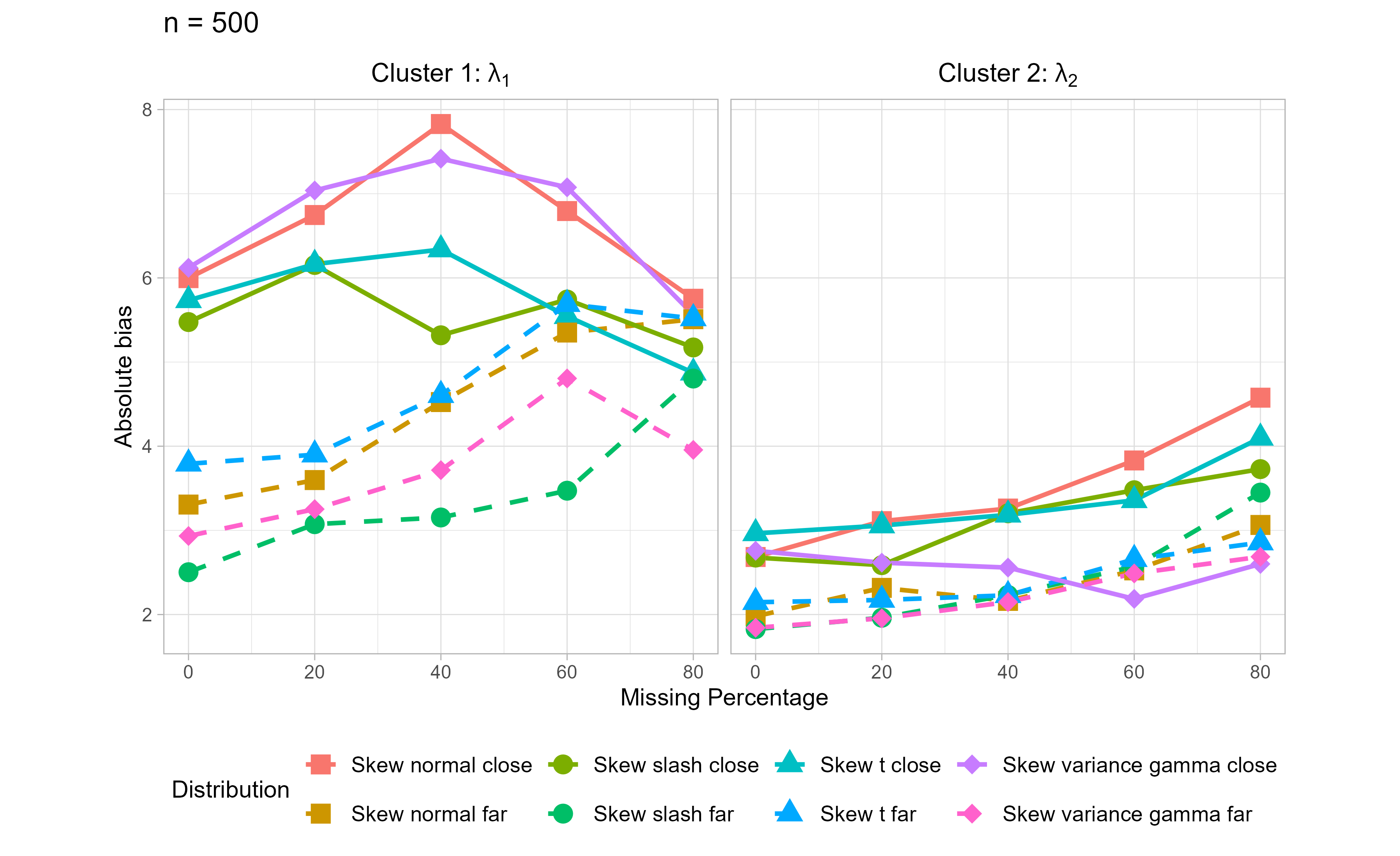}
         \caption{Average Absolute bias for skewness vectors across 200 replications for datasets of size $n = 500$, randomly generated from the distributions specified in the headers of each subplot.}
    \end{subfigure}
        \caption{ Average absolute bias for skewness vectors across 200 replications.}
         \label{lam_abs_bias_200}
\end{figure}

\begin{figure}[H]
    \centering
    \begin{subfigure}{0.49\textwidth}
        \includegraphics[width=\textwidth]{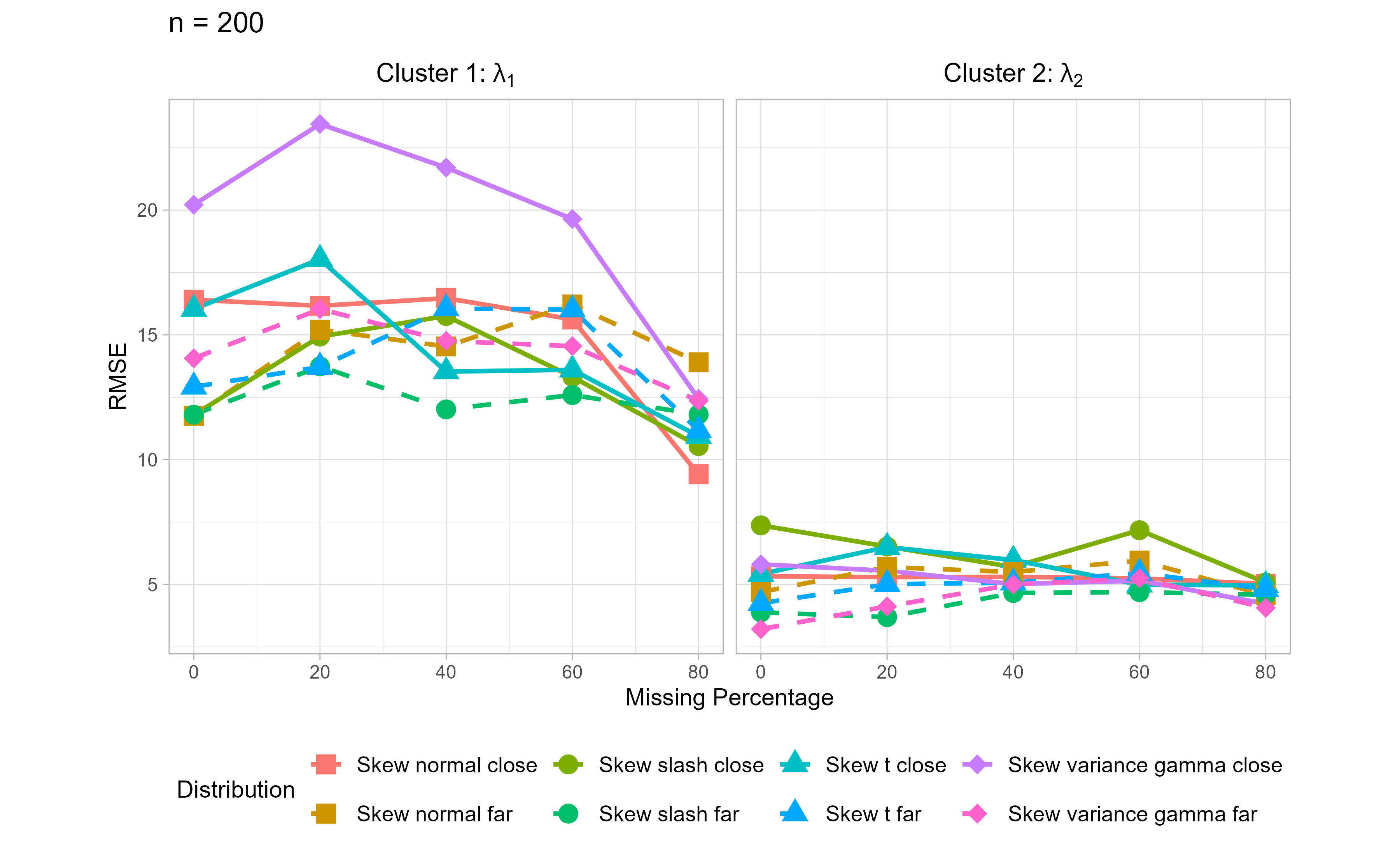}
         \caption{Average RMSE for skewness vectors across 200 replications for datasets of size $n = 200$, randomly generated from the distributions specified in the headers of each subplot.}
    \end{subfigure}
    \begin{subfigure}{0.49\textwidth}
        \includegraphics[width=\textwidth]{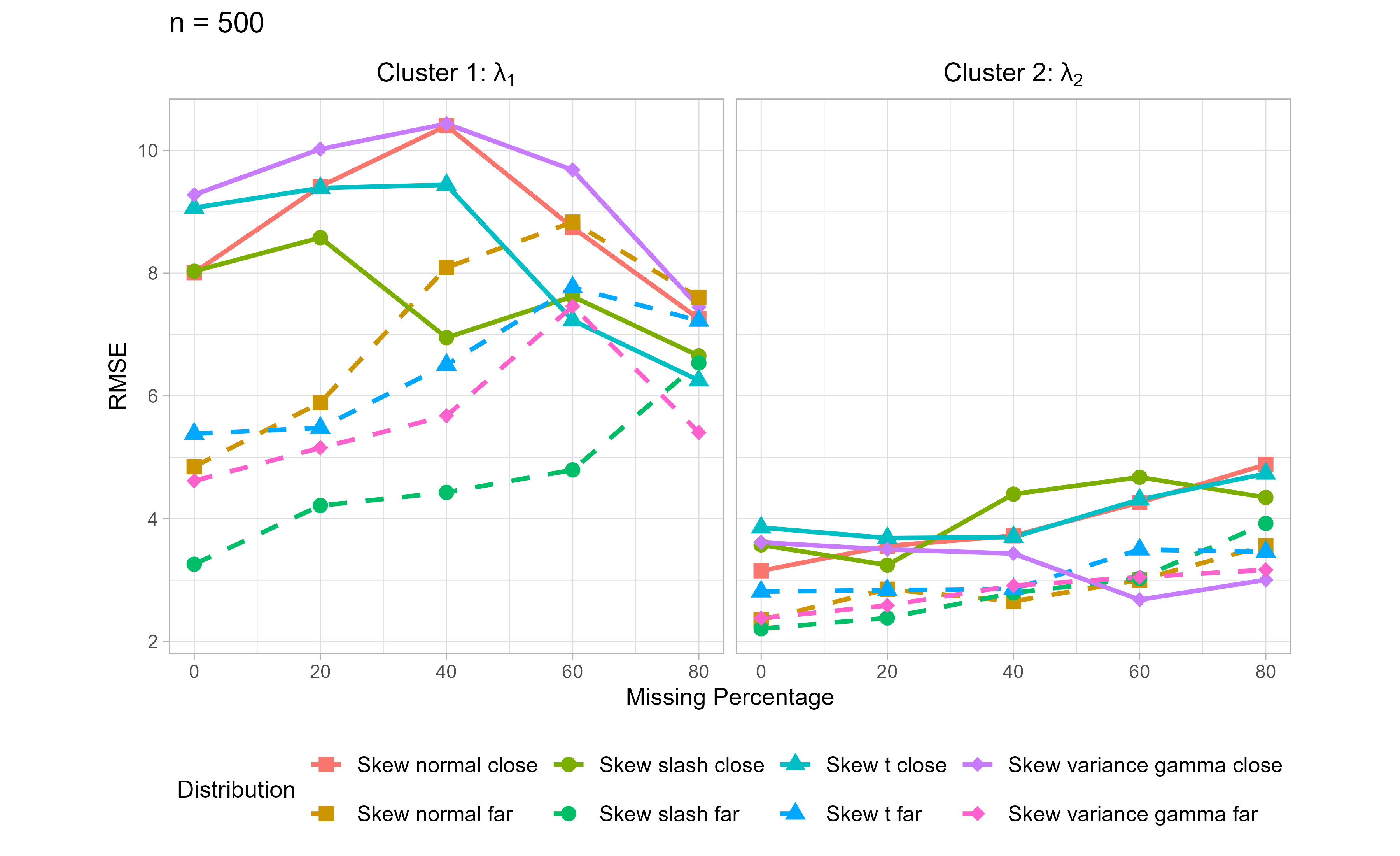}
         \caption{Average RMSE for skewness vectors across 200 replications for datasets of size $n = 500$, randomly generated from the distributions specified in the headers of each subplot.}
    \end{subfigure}     
        \caption{Average RMSE for skewness vectors across 200 replications.}
         \label{lam_rmse_200}
\end{figure}

%\subsubsection{Scale matrix - diagonal entries}
%%%%%%%%%%%%%%%%%%%%%%%%%%||var||%%%%%%%%%%%%%%%%%%%%%%%%%%%%%%%%%%%%%%%
\begin{figure}[H]
    \centering
    \begin{subfigure}{0.49\textwidth}
        \includegraphics[width=\textwidth]{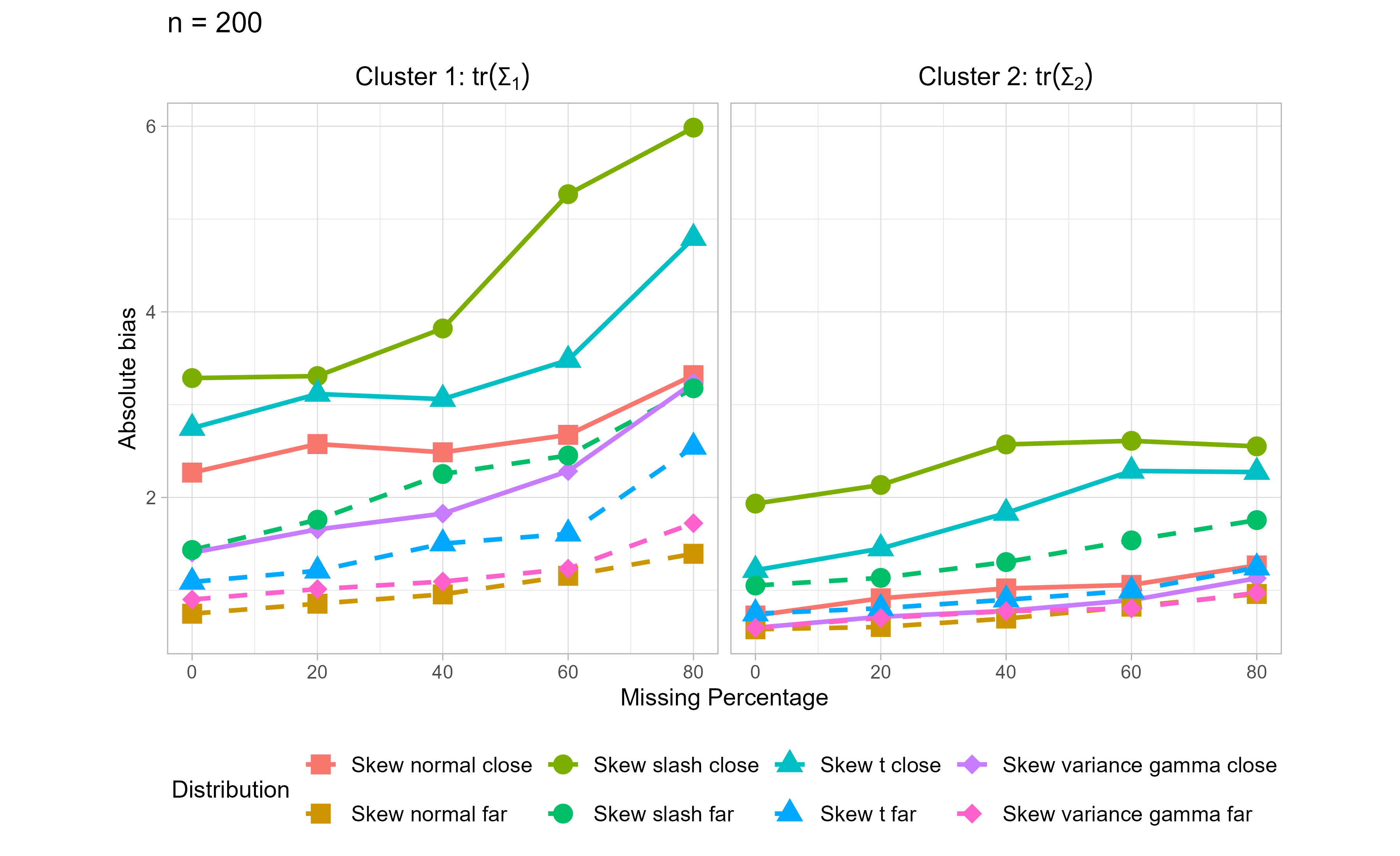}
         \caption{
         Average Absolute bias for scale trace across 200 replications for datasets of size $n = 200$, randomly generated from the distributions specified in the headers of each subplot.}
    \end{subfigure}
    \begin{subfigure}{0.49\textwidth}
        \includegraphics[width=\textwidth]{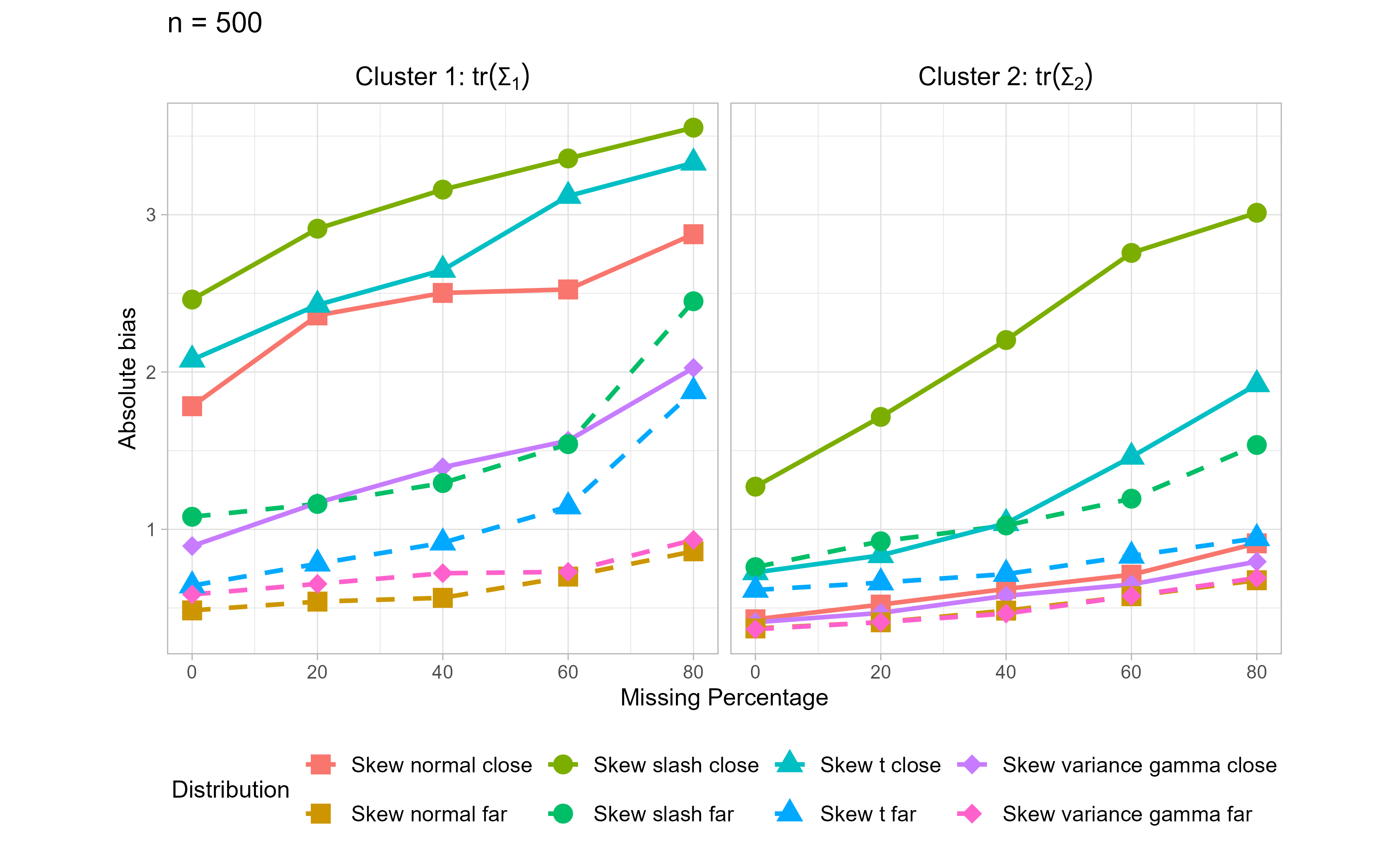}
         \caption{  Average Absolute bias for scale trace across 200 replications for datasets of size $n = 500$, randomly generated from the distributions specified in the headers of each subplot.}
    \end{subfigure}  
    \caption{ Average absolute bias for scale trace across 200 replications.}
         \label{var_abs_bias_200}
\end{figure}

\begin{figure}[H]
    \centering
    \begin{subfigure}{0.49\textwidth}
        \includegraphics[width=\textwidth]{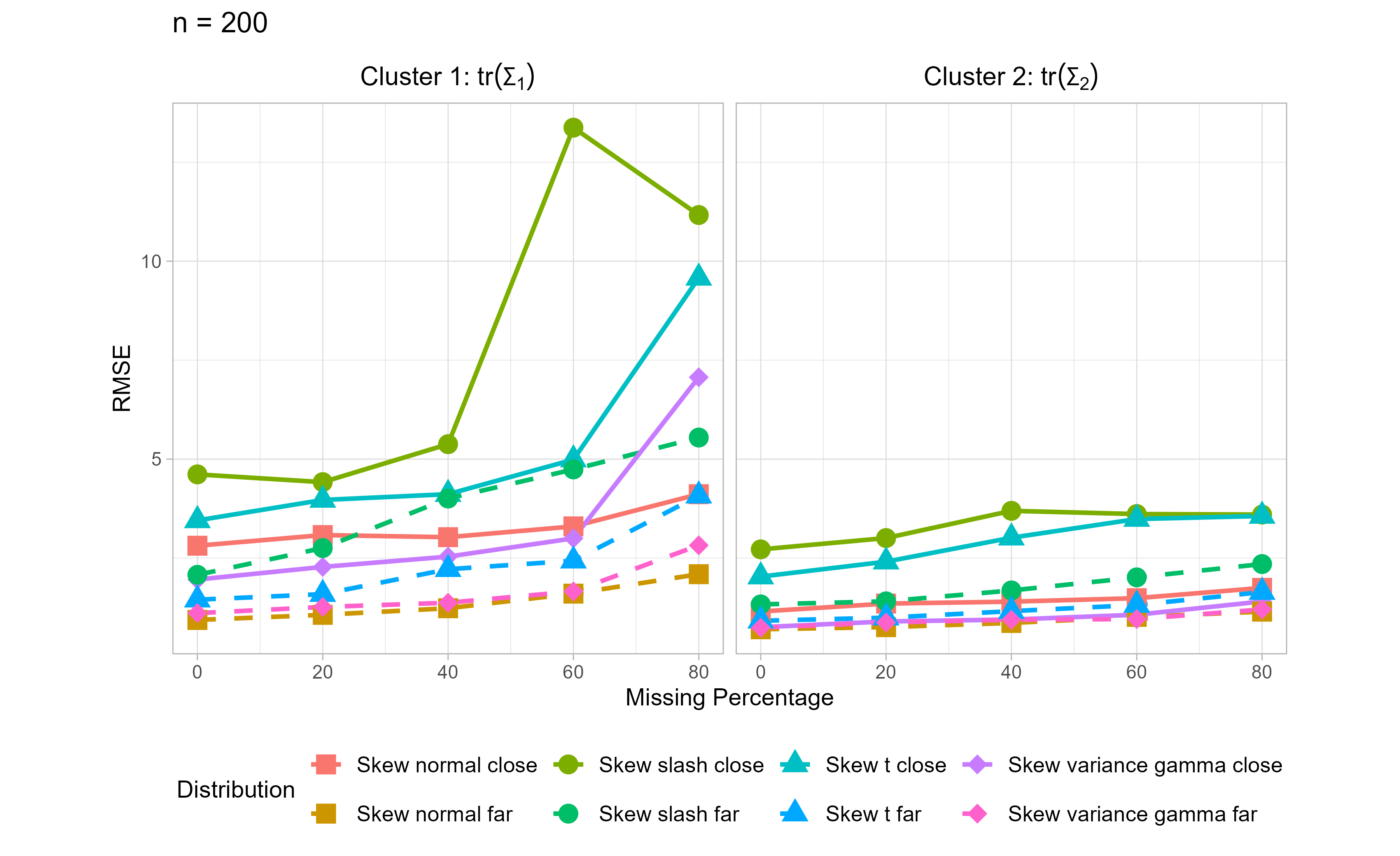}
         \caption{Average RMSE for scale trace across 200 replications for datasets of size $n = 200$, randomly generated from the distributions specified in the headers of each subplot.}
    \end{subfigure}
    \begin{subfigure}{0.49\textwidth}
        \includegraphics[width=\textwidth]{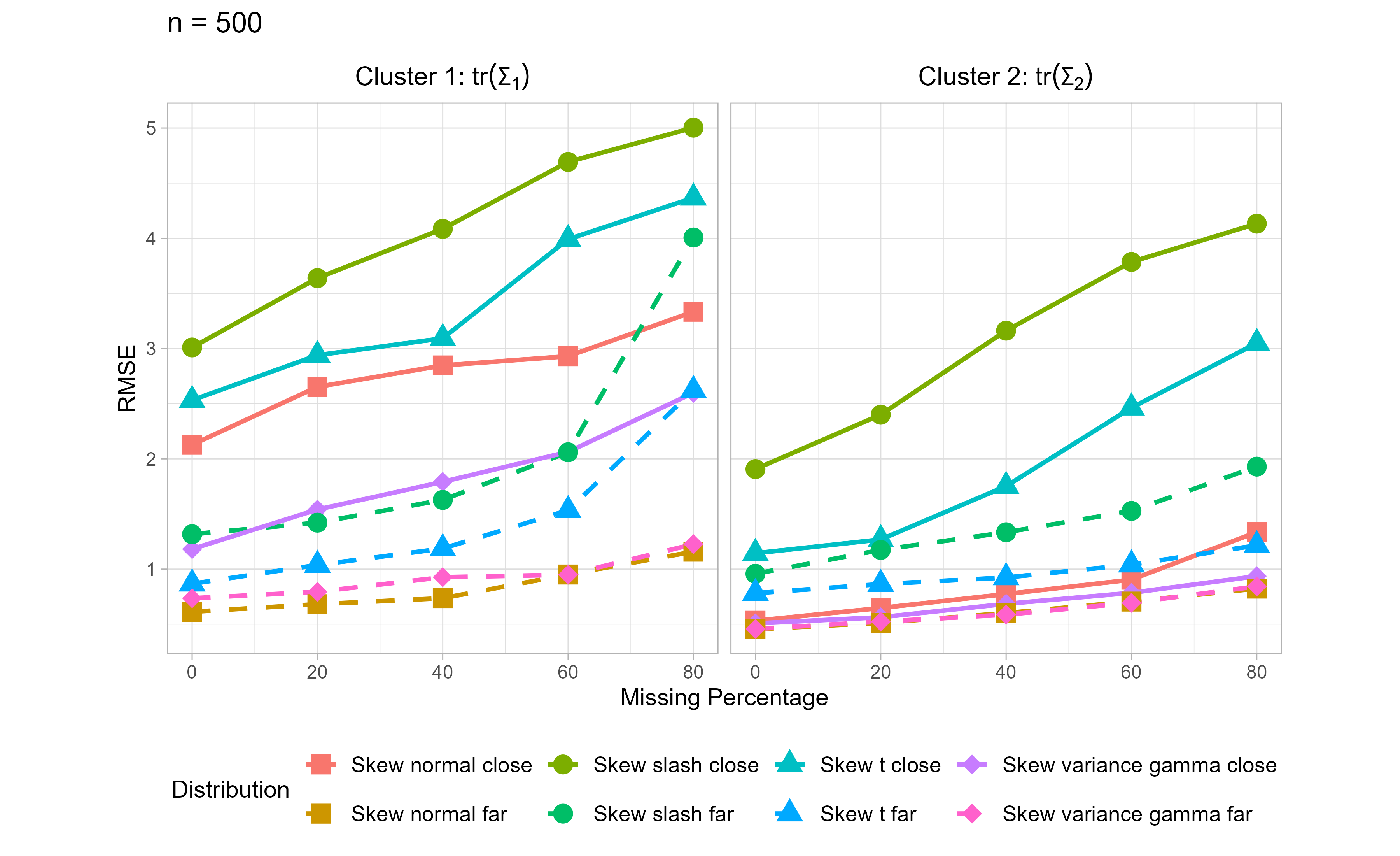}
         \caption{Average RMSE for scale trace across 200 replications for datasets of size $n = 500$, randomly generated from the distributions specified in the headers of each subplot.}
    \end{subfigure}
        \caption{ Average RMSE for scale trace across 200 replications.}
         \label{var_rmse_200}
\end{figure}

%\subsection{Scale matrix - off diagonal entries}
%%%%%%%%%%%%%%%%%%%%%%%%%%||cov||%%%%%%%%%%%%%%%%%%%%%%%%%%%%%%%%%%%%%%%

\begin{figure}[H]
    \centering
    \begin{subfigure}{0.49\textwidth}
        \includegraphics[width=\textwidth]{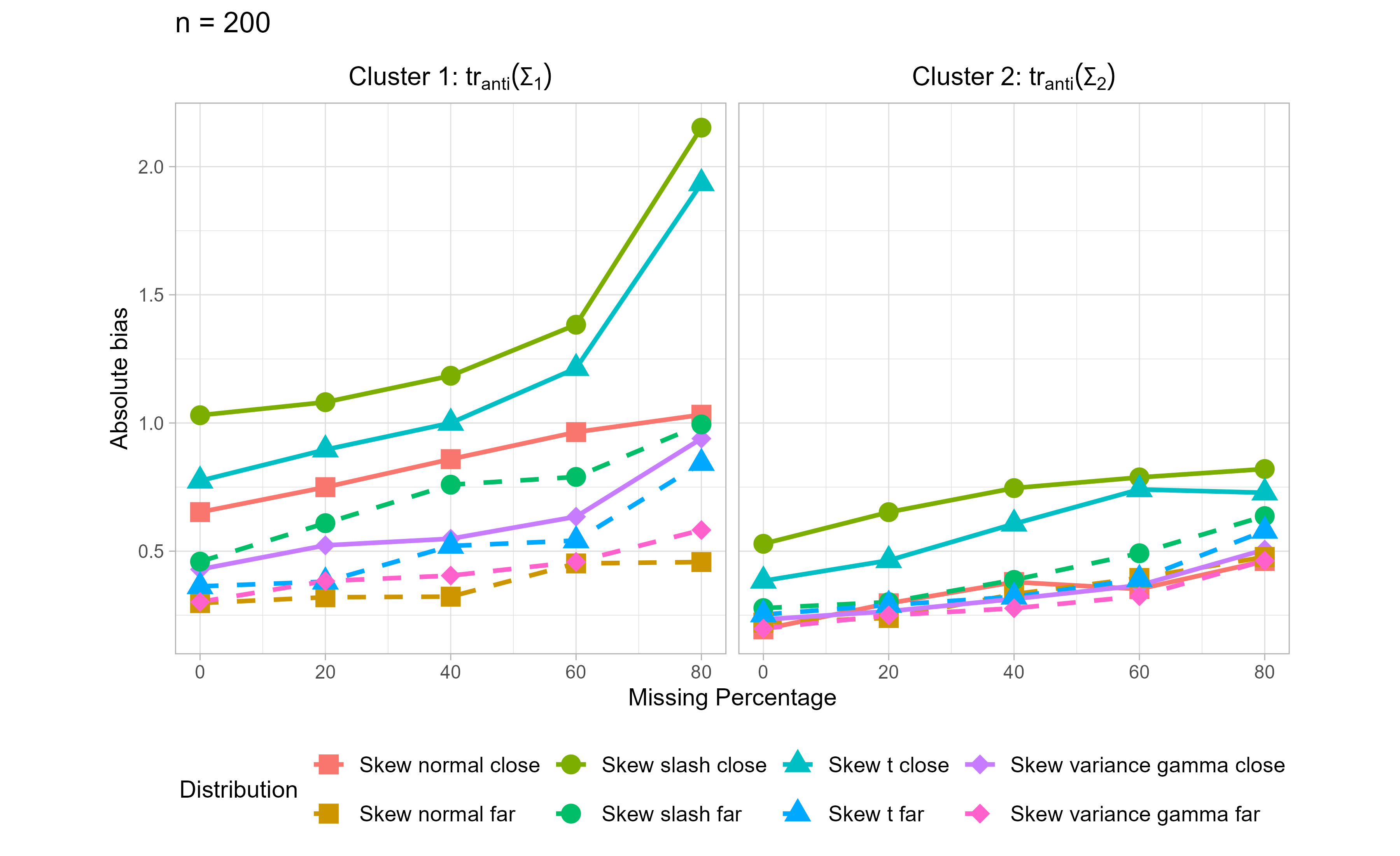}
         \caption{Average absolute bias for scale anti-trace across 200 replications for datasets of size $n = 200$, randomly generated from the distributions specified in the headers of each subplot.}
    \end{subfigure}%
    \begin{subfigure}{0.5\textwidth}
        \includegraphics[width=\textwidth]{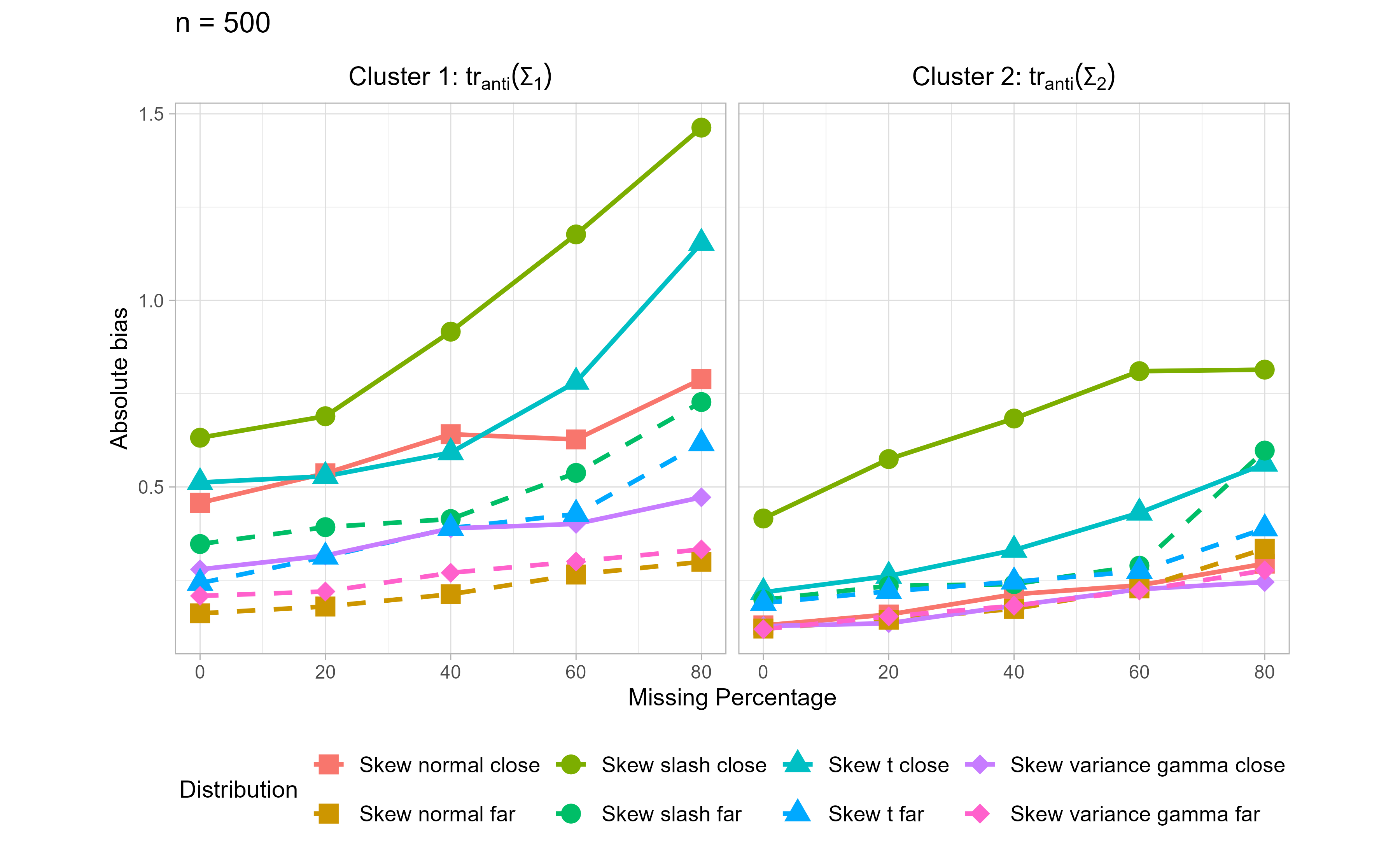}
         \caption{Average absolute bias for scale anti-trace across 200 replications for datasets of size $n = 500$, randomly generated from the distributions specified in the headers of each subplot.}
    \end{subfigure}
        \caption{ Average absolute bias for scale anti-trace across 200 replications.}
         \label{cov_abs_bias_200}
\end{figure}

\begin{figure}[H]
    \centering
    \begin{subfigure}{0.49\textwidth}
        \includegraphics[width=\textwidth]{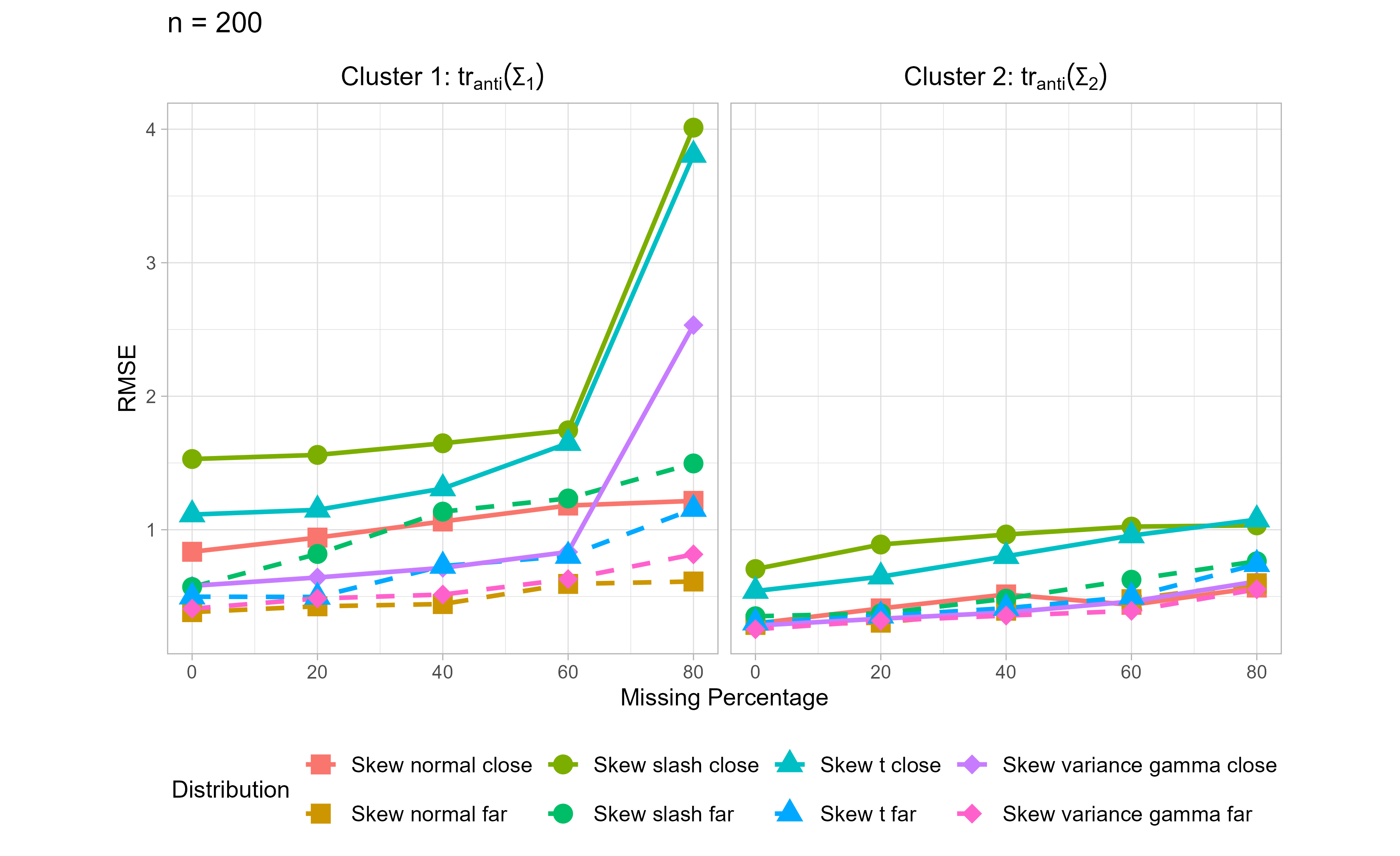}
         \caption{Average RMSE for scale anti-trace across 200 replications for datasets of size $n = 200$, randomly generated from the distributions specified in the headers of each subplot.}
    \end{subfigure}
        \begin{subfigure}{0.49\textwidth}
        \includegraphics[width=\textwidth]{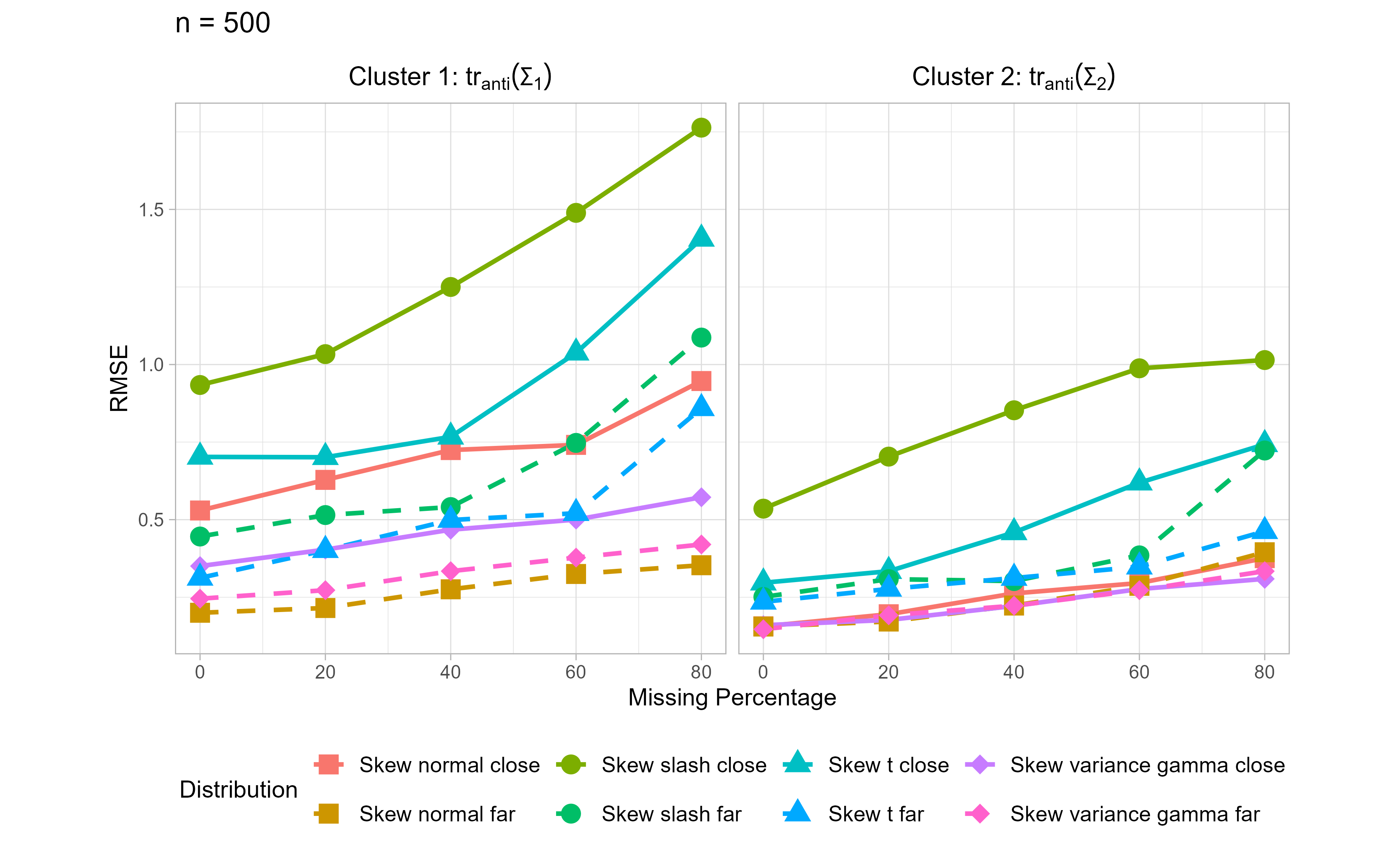}
         \caption{Average RMSE for scale anti-trace across 200 replications for datasets of size $n = 500$, randomly generated from the distributions specified in the headers of each subplot.}
    \end{subfigure}
            \caption{ Average RMSE for scale anti-trace across 200 replications.}
    \label{cov_rmse_200}
\end{figure}

A consistent trend observed is that parameter recovery deteriorates as the percentage of missing values increases, as expected. This effect is further influenced by the proximity between clusters, which is worsened when clusters are close and improves when they are well separated, supporting the expectation that a greater amount of information enhances estimation performance. Additionally, a performance hierarchy is evident among the fitted distributions. 
The multivariate skew-variance-gamma distribution consistently demonstrates the lowest total RMSE and total absolute bias across varying percentages of missing data, dataset sizes, and cluster proximities. 
It is followed by the multivariate skew-normal, multivariate skew-t, and finally the multivariate skew-slash distribution, which shows the poorest recovery. This hierarchy becomes more pronounced as cluster proximity decreases. Solid lines in the plots, representing the performance on closer clusters, highlight more prominent performance gaps compared to the relatively smaller differences represented by the dashed lines, where the dashed lines refer to well-separated clusters. Skewness, $\blam$, is the only parameter that deviates from this overall trend, as can be seen from Figures \ref{lam_abs_bias_200} and \ref{lam_rmse_200} . This counterintuitive behaviour arises specifically under close cluster proximity and composition. 

For both $n = 200$ and $n = 500$, closer clusters display an arched pattern in total absolute bias and a generally decreasing total RMSE for cluster 1. 
This may be attributed to the inherently smaller proportion of skewed observations in the dataset, and the added disadvantage posed by a smaller cluster proportion (e.g., $\pi_1 = 0.3$), which further hinders parameter recovery. 
As missingness increases, the reliance on expected values in the ECM algorithm grows, potentially improving skewness estimation under these constraints. However, when clusters are well separated, $\blam$ recovery aligns with the trends of the other parameters regardless of sample size. Interestingly, $\blam$ exhibits a reversed performance hierarchy compared to other parameters: the skew-slash distribution yields the lowest total absolute bias and RMSE, while the skew-variance-gamma performs worst in this specific case. The increased complexity of skew-variance-gamma may explain why $\blam$ is neglected, compared to multivariate skew-t and skew-slash.

While performance in the presence of missing data is not as optimal as in a complete dataset, it still gives meaningful and informative results that would otherwise be inaccessible. An alternative to fitting distributions using the ECM algorithm would involve restricting analysis to the complete subset of the data; however, this would reduce the information available to the ECM algorithm, introducing unnecessary bias and uncertainty into the results.
%Next, special cases of the SMSN family are applied to a real dataset on CO2 emissions and discussed.

\section{Data application}
\label{application}

The study of CO2 emissions is a long and ongoing trend in environmental research. 
%and with good reason -- anthropogenic emissions of CO2 have disrupted the natural system that is the greenhouse effect that maintains a suitable climate that can sustain life on the planet. 
Numerous human activities such as energy consumption, deforestation, manufacturing, transportation, among others have increased the concentration of CO2 in the atmosphere thus enhancing possible effects on climate. 
%the greenhouse effect, known as global warming (\cite{globalwarming}). The environmental impact of CO2 emissions are devastating (see \cite{co2trees, co2species} for detailed reviews). According to the United Nations' 2024 Sustainable Development Goals Progress report, "communities around the world are suffering the effects of extreme weather, which is destroying lives and livelihoods on a daily basis" and calls the condition of our atmosphere a "climate crisis". The impact of CO2 emissions is also felt on an economic level. 
Several research articles suggest a causal relationship between the economic growth of a country and CO2 emissions through its economic growth (\cite{climatecovid,co2export, co2Africa, co2cee}). Statistical analyses have highlighted the long run economic benefits from abandoning fossil fuels in developing nations. There is statistical support for research in renewable energy consumption (\cite{albijanic2023analysis, statsclimate, dynamictimeseries, co2ann}) that can be useful in policy shaping towards a more sustainable future.

\subsection{Data description}

EDGARv8.1 database containing an assortment of climate data provides emissions datasets for each of the following air pollutants: SO2, NOx, CO2, NMVOC, NH3, PM10, PM2.5, BC, and OC (\cite{EDGAR}). 
Records provided by EDGAR start from 1970 and currently end at 2022 and are categorised by source, sector, and country. 
Time steps are provided on both a monthly and an annual basis. Source here refers to whether the pollutant originates from fossil fuels or from bio mass. Note that EDGAR excludes emissions from large scale biomass burning with Savannah burning, forest fires, and sources and sinks from land-use, land-use change and forestry. We focus on the fossil fuel CO emitted in 2022 across all available countries and focus on the following sectors: (1) Main activity electricity and heat production, (2) Manufacturing industries and construction, (3) Road transportation no resuspension, (4) Residential and other sectors, (5) Oil and natural gas, (6) Lime production, and the (7) Metal industry. 
Some countries do not have records available for all sectors. 
In particular, 84.13\% of the dataset's rows are incomplete. Column wise percentages of missing values are given in Table \ref{colwise missing}.
   \begin{table}[H]
   \centering
    \caption{Proportion of missing values by sector in the EDGARv8.1 dataset.}
   \begin{tabular}{clS}
   \toprule
    Variable & Name                                          & {Missing (\% )}\\
    \midrule
    X1      & Main activity electricity and heat production  & 0.48  \\
    X2      & Manufacturing industries and construction      & 0.48  \\
    X3      & Road transportation no resuspension            & 0.00  \\
    X4      & Residential and other sectors                  & 0.48  \\
    X5      & Oil and natural gas                            & 60.1  \\
    X6      & Lime production                                & 55.3  \\
    X7      & Metal industry                                 & 66.3  \\
    \bottomrule
       \end{tabular}
       \label{colwise missing}
   \end{table}
 Some summary statistics of the dataset, expressed as Millions of tonnes (Mt) are provided in \tablename~\ref{sum stats}.
\begin{table}[H]
 \centering
 \caption{Sector-wise summary statistics for the EDGARv8.1 dataset.}
    \begin{tabular}{cSSSSS}
    \toprule
     Variable &   {Mean}   & {Standard deviation} & {IQR} & {Skewness} & {Kurtosis} \\
     \midrule
    X1        & 75.9487  & 400.6935  & 26.6228  & 10.6411 & 128.4535\\
    X2        & 214.6219 & 1965.7815 & 16.0866  & 13.0019 & 177.5671\\
    X3        & 211.5783 & 867.9373  & 116.1740 & 10.7015 & 134.5023\\
    X4        & 78.1599  & 599.2809  & 8.6795   & 12.9579 & 178.5300\\
    X5        & 10.4714  & 25.5350   & 6.7581   & 3.8640  & 18.7721\\
    X6        & 0.0088   & 0.0624    & 0.0022   & 9.3799  & 89.6343\\
    X7        & 785.3800 & 4590.0476 & 174.1204 & 8.0314  & 66.2949\\
    \bottomrule
    \end{tabular}
    \label{sum stats}
\end{table}

   \subsection{Results}
   
   Due to the large scale of the observations, we apply a log transformation prior to fitting each of the distributions across various predetermined numbers of clusters. This transformation serves to stabilise the variance across both observations and variables. 
   The performance of each mixture model is measured through the usual Bayesian Information Criterion (BIC). 
   \begin{align*}
       BIC &= 2\ln(L) - P\ln(n),
   \end{align*}
   where $L$, $P$ and $n$ denote the likelihood of the fitted model at convergence of the ECM for FMSMSN with MAR algorithm, the number of free parameters, and sample size, respectively. Model selection is determined through maximising the BIC. After fitting all four distributions on the datasets for varying numbers of clusters their observed log-likelihoods and BICs are given in \tablename~\ref{bic}.
   \begin{table}[H]
   \centering
   \caption{Number of clusters ($G$), log-likelihood values, and BIC values after running the ECM for FMSMSN with MAR algorithm.}
      \label{bic}
   \begin{tabular}{lcSS}
   \toprule
    Mixture component        &     $G$        & {log-likelihood } &  { BIC }          \\
    \midrule
    Skew-normal         &       1        &  -2140.416       & -4505.009          \\
    Skew-t              &       1        &  -2101.591       & -4432.696          \\
    Skew-slash          &       1        &  -2101.246       & -4432.006          \\
    Skew-variance-gamma &       1        &  -2108.206       & -4445.925          \\
    Skew-normal         &       2        &  -1998.665       & -4451.020          \\
    Skew-slash          &       2        &  -1982.555       & -4429.476          \\
    Skew-t              &       2        &  -1979.074       & -4422.514          \\
    Skew-variance-gamma &       2        & -1959.650        & -4383.66           \\
    Skew-normal         &       3        & -1908.094        & -4499.392          \\
    Skew-slash          &       3        & -1889.803        & -4478.824          \\
    Skew-t              &       3        & -1888.842        & -4476.902          \\
    Skew-variance-gamma &       3        &-1885.642         & -4470.502          \\
    \bottomrule
       \end{tabular}
   \end{table}

%Plotting the clustering results according to the model with the largest AIC and BIC scores:
%\begin{figure}[H]
%    \centering
%    \includegraphics[width=\linewidth]{ll aic bic no legend.png}
%    \includegraphics[width=\linewidth]{ll aic bic 2 legend.png}
%    \caption{Selecting the most appropriate number of clusters}
%    \label{aic bic ll}
%\end{figure}

According to the highest BIC, a two-component skew-variance-gamma distribution performs best. 
The segmented transformed data according to the highest BIC value is visualised in \figurename~\ref{pairplots}. 
The estimated locations and skewness are given in \tablename~\ref{loc and skew}.
\begin{table}[H]
\centering
   \caption{Estimated location and skewness parameters for a 2-component skew-variance-gamma mixture.}
   \begin{tabular}{cS[round-precision = 3,table-number-alignment = left]S[ round-precision = 3,table-number-alignment = left]S[ round-precision = 3,table-number-alignment = left]S[ round-precision = 3,table-number-alignment = left]}
   \toprule
   Variable  &  {$\mu_1$} & {$\mu_2$}  & {$\lambda_1$}   & {$\lambda_2$}     \\
   \midrule
    X1        &  4.3169083 & -0.3223242 &  -2.5790454     &    7.1959850      \\
    X2        &  2.9419383 & -4.9014091 &  -0.5034683     &    13.2571581     \\
    X3        &  4.3446167 & 0.9744648  &  1.0385812      &    -0.6551722     \\
    X4        &  1.2437384 & -4.5575255 &  0.4218148      &    21.2454136     \\
    X5        &  0.7496323 & -1.0565692 &  -1.8228591     &    -8.1857921     \\
    X6        &  -7.198602 & -10.475639 & -1.2365834      &    2.2166617      \\
    X7        &  2.6864210 & -1.0090757 &  0.2186186      &    16.7984869     \\
    \bottomrule
    \end{tabular}
    \label{loc and skew}
    %\label{loc and skew}
   % \hspace{2cm}
   % \begin{tabular}{c c}
   % $\pi_1$ & $\pi_2$   \\
   %     \toprule
   % 0.53792 & 0.4621    \\
   % \bottomrule
   % \end{tabular}
\end{table}

\begin{figure}[H]
    \centering
    \includegraphics[width=0.9\linewidth, height=0.9\linewidth]{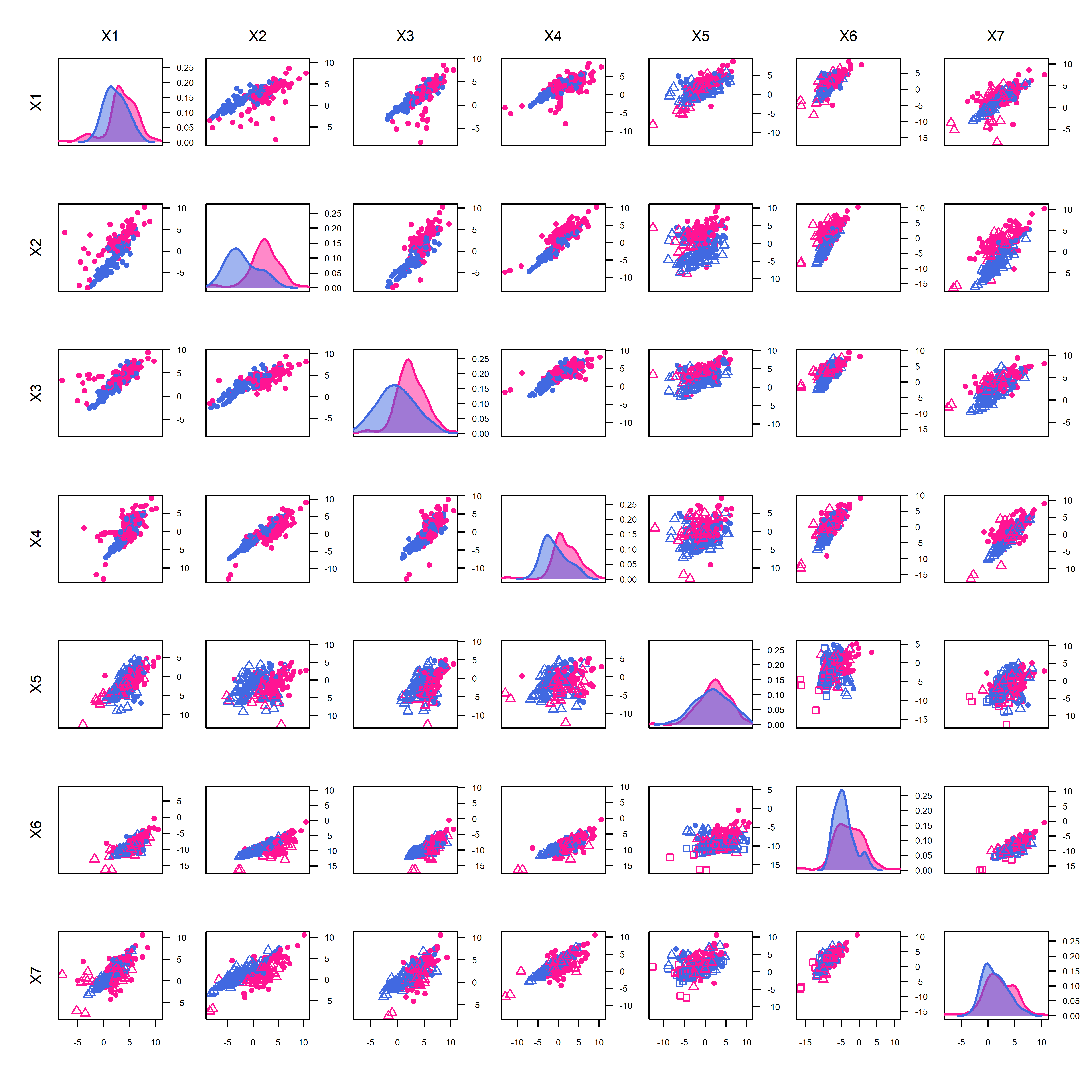}
    \caption{Pairwise scatter plot matrix for the EDGARv8.1 data, with colours indicating cluster assignments from a fitted mixture of two skew-variance-gamma distributions. Solid circles represent observations with no missing values in the variable pair. 
    Hollow triangles mark pairs where one variable was imputed, and hollow squares indicate that both values were imputed. Imputations reflect the final E-step of the EM algorithm.}
    \label{pairplots}
\end{figure}

\begin{figure}[H]
    \centering
    \includegraphics[width=\linewidth]{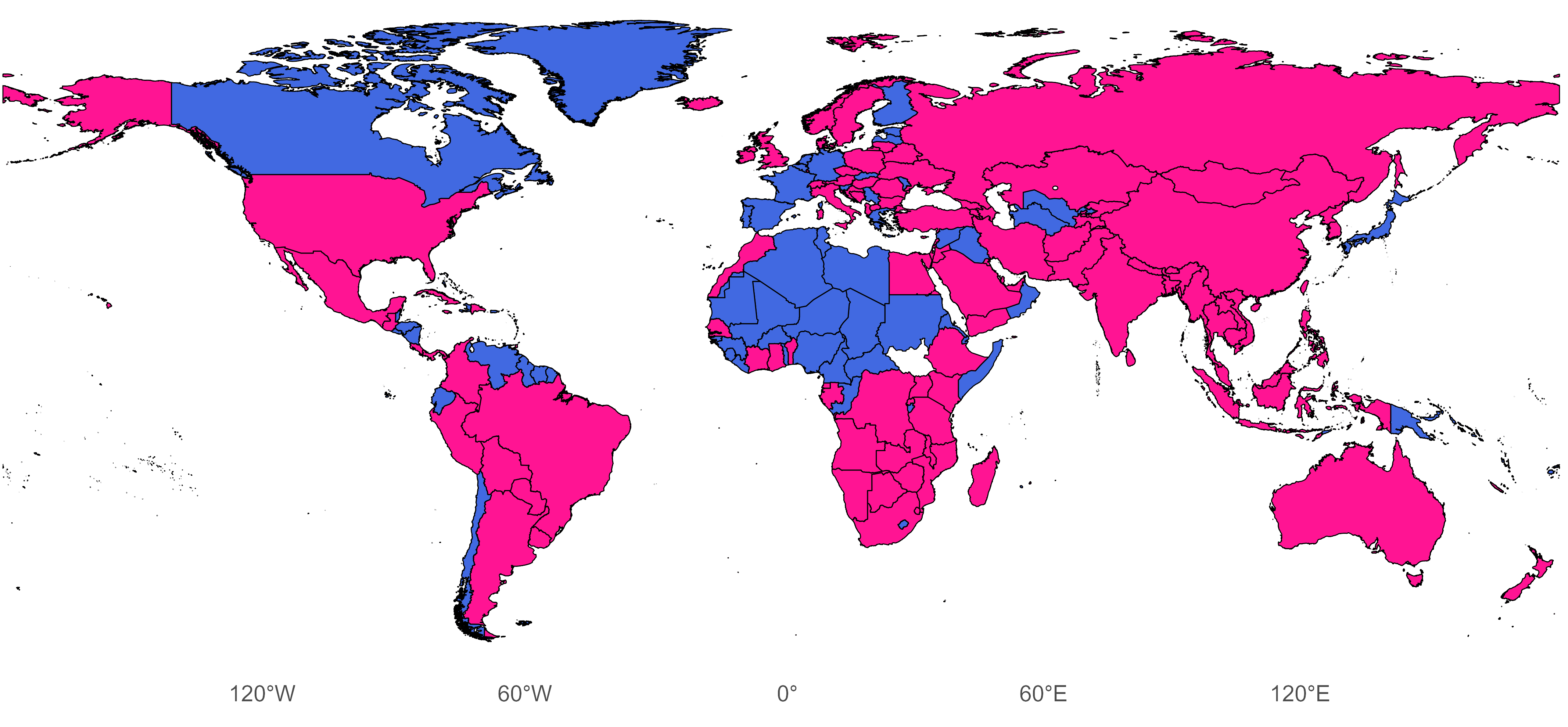}
    \caption{Countries coloured based on clusters from a mixture of two skew-variance-gamma distributions.}
    \label{fig:enter-label}
\end{figure}

The estimated locations $\bmu_1$ and $\bmu_2$ demonstrate that countries can be grouped into an overall lowered or higher volume of CO2 emissions. The second cluster's skewness vector has large magnitudes, implying that, while its countries generally emit fewer tonnes of CO2, there are a few extreme/atypical recorded values from the cluster's location. In contrast, the estimated skewness parameter for the first cluster is estimated closer to zero, implying that less skewness is present. The estimated mixing probabilities suggest that 46\% of the available countries have lower CO2 emissions across the seven sectors under consideration. The composition of this group includes most of North Africa, Lesotho, Canada, Japan, South Korea, Indonesia, Chile, Venezuela, Oman, Turkmenistan, Uzbekistan, Oman, Iraq, Syria, Greece, Finland and the stretch of countries with shared borders from Spain to Germany. These have so-called moderately developed to developed economies, boasting a corresponding GDP per capita and total GDP. Road transportation sector emitted the most CO2 emissions in 2022 for this cohort (\cite{climatecovid}).

In contrast, 54\% of the available countries emitted more CO2 across the seven sectors with road transportation and electricity and heat being the leading contributors. This cluster includes majority of the Global South, the United States, Australia, and the United Kingdom. This cluster suggests that countries with developing economies and improving GDPs come at the cost of increased CO2 emissions. Notice also that this cluster includes populous powered economies such as China, India, and the United States, suggesting population size is a key contributor. Southern Africa, and most of South America have great potential for economic growth. These countries’ development is influenced by specific geographical factors, but they also rely on rich human and natural resources and trade (\cite{gunaratne}).

Notice here that with more than 84\% of incomplete rows present, a subset of complete rows would have been too small for any meaningful clustering analysis to be applied. Not only was it possible to successfully fit a model-based clustering algorithm on this entire dataset, it was also possible to fit different cases of the SMSN family, highlighting the potential of extending the theory of mixtures of scale mixture distributions to handle missing values, thereby providing a variety of component distributions. 

The ECM for FMSMSN with MAR algorithm applies the simplest model, namely the mixture of skew-normal distributions to account for skewness present in incomplete data. In addition, three alternative component distributions are considered, each tailored to reflect differing levels of asymmetry and tail behaviour. These include the skew-slash and skew-t distributions, which naturally simplify to the multivariate skew-normal when appropriate. Notably, the skew-variance-gamma distribution can also reduce to the skew-Laplace. All four models are flexible enough to converge to the standard normal distribution when skewness and excess kurtosis are not present. In essence, the methodology proposed here accommodates asymmetry through distinct distributions, each with its own strengths and degrees of flexibility, allowing for comparative assessment in selecting the most suitable model to describe the resulting clusters.

\section{Conclusion}
\label{sec:Conclusion}

% This paper introduced a novel framework for clustering incomplete data by extending the finite mixture of multivariate skew-normal scale mixture (FMSMSN) distributions to handle missingness under the missing at random (MAR) mechanism. The proposed ECM algorithm incorporates closed-form E-step expressions and exploits the moments of the multivariate multivariate skew-normal distribution, following prior extensions of the multivariate normal family \cite{tt1,tt2}. Simulation studies demonstrated the effectiveness of the approach in both clustering accuracy and parameter recovery, with performance varying based on cluster separation, sample size, and the specific characteristics of each FMSMSN member. This variation highlights one of its key advantages of extending the extended model family: there is more choice, and consequently, room for varying levels of parsimony and complexity at disposal. The application further underscored this benefit: it allowed for direct analysis of incomplete datasets without imputation or subsetting, and enabled comparisons across multiple distributions to select the most appropriate mixture. 
% Future work will focus on generalizing the proposed methodology to accommodate censored data, of which MAR is a special case. This direction also opens the door to broader applications, including regression modelling under the same flexible distributional framework (\cite{skewtcensor,skewnormalcensor,skewnormalEMcensor}).

This paper presents a novel framework for clustering incomplete data by extending the finite mixture of multivariate skew-normal scale mixture (FMSMSN) distributions to accommodate missingness under the missing at random (MAR) assumption. The proposed ECM algorithm features closed-form expressions for the E-step and exploits the moments of the multivariate skew-normal distribution, as developed in earlier extensions of the multivariate normal family (\cite{tt1,tt2}).

Simulation studies demonstrate strong performance in both clustering accuracy and parameter recovery. The results vary depending on cluster separation, sample size, and the specific characteristics of each FMSMSN component. This variation highlights a major strength of the extended model family: the availability of multiple distributional forms allows for a flexible balance between parsimony and complexity, depending on the modeling needs.

The application to data on carbon emissions further illustrates these advantages. The proposed approach allows for direct analysis of incomplete datasets without the need for imputation or listwise deletion, and supports comparative evaluation across FMSMSN variants to identify the most appropriate mixture model for a given dataset.

Future work will focus on extending the methodology to accommodate censored data, of which MAR is a special case. This generalisation will broaden the applicability of the framework, including its use in regression modeling under the same flexible family of distributions (\cite{skewtcensor,skewnormalcensor,skewnormalEMcensor}).

\section*{Acknowledgments}

...

Cristina Tortora acknowledges the support by NSF grant number 2209974.
Antonio Punzo acknowledges the support by the Italian Ministry of University and Research (MUR) under the PRIN 2022 grant number 2022XRHT8R (CUP: E53D23005950006), as part of ‘The SMILE Project: Statistical Modelling and Inference to Live the Environment’, funded by the European Union – Next Generation EU.
This work was based upon research supported in part by the National Research Foundation (NRF) of South Africa (SA), ref. SRUG2204203965, UID 119109 (Bekker), the Department of Research and Innovation at the University of Pretoria (SA), as well as the DSI-NRF Centre of Excellence in Mathematical and Statistical Sciences (CoE-MaSS). The opinions expressed and conclusions arrived at are those of the authors and are not necessarily to be attributed to the NRF.

\printbibliography
%%%%%%%%%%%%%%%%%%%%%%%%%%%%%%%||HELPFUL LEMMAS AND PROOFS OF THEOREMS||%%%%%%%%%%%%%%%%%%%%%%%%%%%%%%%%%%%%%%%%%%%%%%%%%%%%%%%%
\appendix
\section{Supplementary material}

%\begin{definition}
%    Consider two different partitions of data, U and V, these partitions typically represent the observations' true labels/classes and the clustering method under inspection, %respectively. Denote the number of data points found in the $i^{th}$ partition of U and the $j^{th}$ partition of V as $n_{ij}$. Then the number of $n$ data points can be %organised as the following table:
%\[
%\begin{array}{c | c c c c |c}
%\textbf{Class} \, \backslash \, \textbf{Cluster} & v_1 & v_2 & \cdots & v_C & \textbf{Sums} \\ \hline
%u_1 & n_{11} & n_{12} & \cdots & n_{1C} & n_{1.} \\ 
%u_2 & n_{21} & n_{22} & \cdots & n_{2C} & n_{2.} \\ 
%\vdots & \vdots & \vdots & \ddots & \vdots & \vdots \\
%u_R & n_{R1} & n_{R2} & \cdots & n_{RC} & n_{R.} \\ \hline
%\textbf{Sums} & n_{.1} & n_{.2} & \cdots & n_{.C} & n_{..} = n 
%\end{array},
%\]
%where $n_{i.}$ and $n_{.j}$ are the sums of the number of elements in the $i^{th}$ cluster in partition U and number of elements in the $j^{th}$ cluster in partition V, %respectively. The ARI has the following expression:
%\begin{align}
%    ARI = \frac{ \sum_{i=1}^R\sum_{j=1}^C \binom{n_{ij}}{2}  - \frac{ \sum_{i=1}^R \binom{n_{i.}}{2} \sum_{j=1}^C \binom{n_{.j}}{2} }{ \binom{n}{2} } }  
%    { \half \left[ \sum_{i=1}^R \binom{n_{i.}}{2} + \sum_{j=1}^C \binom{n_{.j}}{2}\right]  -  \left[\frac{ \sum_{i=1}^R \binom{n_{i.}}{2} \sum_{j=1}^C \binom{n_{.j}}{2} }{ %\binom{n}{2} }      \right]    }. 
%\end{align}

%\end{definition}
\begin{lemma}
\label{lemma1}
    Suppose the vectors and matrix of the quadratic form $\by^{\top}\bA^{-1}\by$ is partitioned as follows:
    \begin{align}
        \by = \begin{pmatrix} \by_1 \\ \by_2\end{pmatrix} \hspace{0.15cm} \text{ and } \hspace{0.15cm}
        \bA = \begin{pmatrix}
                \bA_{11} & \bA_{12}\\
                \bA_{21} & \bA_{22}
              \end{pmatrix}. \nonumber
    \end{align}
    Then 
    \begin{align}\by^{\top}\bA^{-1}\by = \left(\by_1 - \bA_{12}\bA^{-1}_{22}\by_2\right)^{\top}\left(\bA_{11} - \bA_{12}\bA^{-1}_{22}\bA_{21} \right)^{-1} 
    \left(\by_1 - \bA_{12}\bA^{-1}_{22}\by_2\right) + \by^{\top}_2\bA_{22}^{-1}\by_2.
    \end{align}
\end{lemma}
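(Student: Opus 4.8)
The plan is to factor $\bA$ through its Schur complement and then substitute directly into the quadratic form. Write $\bA_{11\cdot 2} = \bA_{11} - \bA_{12}\bA_{22}^{-1}\bA_{21}$ for the Schur complement of $\bA_{22}$ in $\bA$; this is precisely the middle matrix appearing in the statement. The first step is to recall the block $LDU$ factorisation
\begin{align}
\bA = \begin{pmatrix} \bm{I} & \bA_{12}\bA_{22}^{-1} \\ \bm{0} & \bm{I} \end{pmatrix}
      \begin{pmatrix} \bA_{11\cdot 2} & \bm{0} \\ \bm{0} & \bA_{22} \end{pmatrix}
      \begin{pmatrix} \bm{I} & \bm{0} \\ \bA_{22}^{-1}\bA_{21} & \bm{I} \end{pmatrix},\nonumber
\end{align}
which is valid whenever $\bA_{22}$ is invertible --- the case of interest, since $\bA$ here plays the role of a positive definite scale/covariance matrix, so every principal submatrix is invertible. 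Multiplying the three blocks confirms this reproduces $\bA$ exactly ($\bA_{12}\bA_{22}^{-1}\bA_{22}=\bA_{12}$, etc.).

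Next I would invert both sides, using that each unit upper/lower triangular factor inverts by negating its off-diagonal block, to get
\begin{align}
\bA^{-1} = \begin{pmatrix} \bm{I} & \bm{0} \\ -\bA_{22}^{-1}\bA_{21} & \bm{I} \end{pmatrix}
           \begin{pmatrix} \bA_{11\cdot 2}^{-1} & \bm{0} \\ \bm{0} & \bA_{22}^{-1} \end{pmatrix}
           \begin{pmatrix} \bm{I} & -\bA_{12}\bA_{22}^{-1} \\ \bm{0} & \bm{I} \end{pmatrix}.\nonumber
\end{align}
The key observation is that, because $\bA$ is symmetric (so $\bA_{22}^{\top}=\bA_{22}$ and $\bA_{21}=\bA_{12}^{\top}$), the leftmost and rightmost factors are transposes of one another. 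Hence, putting $\bm{w} = \bigl(\by_1 - \bA_{12}\bA_{22}^{-1}\by_2,\ \by_2\bigr)^{\top}$, which is exactly the result of applying the rightmost factor to $\by$, the form collapses to a block-diagonal quadratic form:
\begin{align}
\by^{\top}\bA^{-1}\by
= \bm{w}^{\top}\begin{pmatrix} \bA_{11\cdot 2}^{-1} & \bm{0} \\ \bm{0} & \bA_{22}^{-1} \end{pmatrix}\bm{w}
= \left(\by_1 - \bA_{12}\bA_{22}^{-1}\by_2\right)^{\top}\bA_{11\cdot 2}^{-1}\left(\by_1 - \bA_{12}\bA_{22}^{-1}\by_2\right) + \by_2^{\top}\bA_{22}^{-1}\by_2,\nonumber
\end{align}
which is the claimed identity, since a block-diagonal matrix contributes no cross terms.

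There is no genuine obstacle here; the proof is essentially bookkeeping. The two points that require a word of care are (i) justifying that $\bA_{22}$ is invertible so that the factorisation is legitimate, and (ii) invoking the symmetry of $\bA$ to recognise the two outer factors as mutual transposes --- without symmetry one would instead have to expand $\bA^{-1}$ via the full block-inverse formula and check that the asymmetric stray terms cancel, which they do precisely when $\bA_{21}=\bA_{12}^{\top}$. An alternative route, if one wishes to avoid matrix factorisations, is simply to ``complete the square'': expand the right-hand side of the claim, substitute the standard block expression for $\bA^{-1}$ into the left-hand side, and match the resulting terms one by one; this is purely mechanical but noticeably longer than the factorisation argument.
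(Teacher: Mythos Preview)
Your argument is correct: the block $LDU$ factorisation via the Schur complement is the standard, clean way to establish this identity, and the two points you flag (invertibility of $\bA_{22}$ and symmetry of $\bA$ so that the outer triangular factors are mutual transposes) are exactly the assumptions implicitly in force in the paper, where $\bA$ stands in for a positive definite scale matrix.

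There is nothing to compare against, however: the paper states this lemma in the supplementary material as a known auxiliary result and does not supply a proof of its own. It is invoked only as a tool inside the proof of the subsequent lemma, where the decomposition is applied to $\bDelta_g^{\top}\bSig_g^{-1}\bDelta_g$. Your write-up therefore goes beyond what the paper provides, and the factorisation route you chose is both rigorous and more economical than the brute-force ``expand both sides and match terms'' alternative you mention at the end.
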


\begin{lemma}
\label{lemma2}
 $\delta_{0c,g}  =\dot{\blam}_{o,g}^{\top}\bSig_{oo,g}^{-1/2}(\bx_{i}^{(o)} - \bmu_{o,g}).$
 \begin{proof}
     From Lemma \ref{lemma1} $\bDelta_{g}^{\top}\bSigg^{-1}\bDelta_{g}$ can be expressed as follows:
     \begin{align}
     \label{quad}
         \bDelta_{g}^{\top}\bSigg^{-1}\bDelta_{g} = \left( \bDelta_{m,g} - \bSig_{mo,g}\bSig_{oo,g}^{-1}\bDelta_{o,g} \right)^{\top}\bSig_{c,g}^{-1}\left( \bDelta_{m,g} - \bSig_{mo,g}\bSig_{oo,g}^{-1}\bDelta_{o,g} \right) +  \bDelta_{o,g}^{\top}\bSig_{oo,g}^{-1}\bDelta_{o,g}.
     \end{align}
     Simplifying $1 + \blam_{c,g}^{\top}\blam_{c,g}$:\\
     \begin{align}
         1 + \blam_{c,g}^{\top}\blam_{c,g} 
         &= \frac{1 - \bDelta_{g}^{\top}\bSigg^{-1}\bDelta_{g} + \left(\bDelta_{m,g} - \bSig_{mo,g} \bSig_{oo,g}^{-1} \bDelta_{o,g} \right)^{\top} \bSig_{c,g}^{-1}\left(\bDelta_{m,g} - \bSig_{mo,g} \bSig_{oo,g}^{-1} \bDelta_{o,g} \right)}{ 1 - \bDelta_{g}^{\top}\bSigg^{-1}\bDelta_{g} } & \nonumber\\
         & = \frac{ 1 - \bDelta_{g}^{\top}\bSigg^{-1}\bDelta_{g} +\bDelta_{g}^{\top}\bSigg^{-1}\bDelta_{g}  - 1 - \bDelta_{o,g}^{\top}\bSig_{oo,g}^{-1}\bDelta_{o,g} }{1 - \bDelta_{g}^{\top}\bSigg^{-1}\bDelta_{g}}& \nonumber\\
         & = \frac{1 - \bDelta_{o,g}^{\top}\bSig_{oo,g}^{-1}\bDelta_{o,g}  }{ 1 - \bDelta_{g}^{\top}\bSigg^{-1}\bDelta_{g}}.
     \end{align}
     Therefore it follows that
          \begin{align}
         \delta_{0c,g} 
         &= \frac{ \lambda_{0,c,g} }{ \sqrt{1 +  \blam_{c,g}^{\top} \blam_{c,g} }  } &\nonumber\\
         &= \frac{ \bDelta_{o,g}^{\top}\bSig_{oo,g}^{-1}(\bx_{i}^{(o)} - \bmu_{o,g}) }{\sqrt{1 - \bDelta_{g}^{\top}\bSigg^{-1}\bDelta_{g}} } \sqrt{\frac{  1 - \bDelta_{g}^{\top}\bSigg^{-1}\bDelta_{g} }{ 1 - \bDelta_{o,g}^{\top}\bSig_{oo,g}^{-1}\bDelta_{o,g}  } }. & \nonumber\\
         &= \frac{\bDelta_{o,g}^{\top}\bSig_{oo,g}^{-1}(\bx_{i}^{(o)} - \bmu_{o,g}) }{\sqrt{ 1 - \bDelta_{o,g}^{\top}\bSig_{oo,g}^{-1}\bDelta_{o,g} } } &\nonumber\\
         & = \dot{\blam}_{o,g}^{\top}\bSig_{oo,g}^{-1/2}(\bx_{i}^{(o)} - \bmu_{o,g}).& \nonumber
     \end{align}
 \end{proof}
\end{lemma}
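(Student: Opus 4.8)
The plan is to evaluate $\delta_{0c,g}$ straight from its definition in terms of the conditional skew-normal parameters of Theorem~\ref{sn cond dist} and the shorthand $\delta_{0c,g} = \lambda_{0,c,g}\big/\sqrt{1 + \blam_{c,g}^{\top}\blam_{c,g}}$ introduced in the proof of Theorem~\ref{sn cond ev thm}, and to show that this collapses to a quantity living on the observed block alone. Since Theorem~\ref{sn cond dist} already supplies $\lambda_{0,c,g}$ explicitly, the only piece requiring new work is the normalising constant $1 + \blam_{c,g}^{\top}\blam_{c,g}$; once that has a clean closed form the identity follows by substitution and cancellation.

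First I would apply Lemma~\ref{lemma1} to the quadratic form $\bDelta_{g}^{\top}\bSigg^{-1}\bDelta_{g}$, taking the missing block as $(\by_1,\bA_{11})$ and the observed block as $(\by_2,\bA_{22})$ in the partition \eqref{partitions}. This yields
\begin{align*}
\bDelta_{g}^{\top}\bSigg^{-1}\bDelta_{g} = \left(\bDelta_{m,g} - \bSig_{mo,g}\bSig_{oo,g}^{-1}\bDelta_{o,g}\right)^{\top}\bSig_{c,g}^{-1}\left(\bDelta_{m,g} - \bSig_{mo,g}\bSig_{oo,g}^{-1}\bDelta_{o,g}\right) + \bDelta_{o,g}^{\top}\bSig_{oo,g}^{-1}\bDelta_{o,g},
\end{align*}
with $\bSig_{c,g}$ exactly the Schur complement in Theorem~\ref{sn cond dist}. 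Because $\blam_{c,g}^{\top}\blam_{c,g}$ equals the first summand divided by $1 - \bDelta_{g}^{\top}\bSigg^{-1}\bDelta_{g}$ (using symmetry of $\bSig_{c,g}^{-1/2}$), its numerator is $\bDelta_{g}^{\top}\bSigg^{-1}\bDelta_{g} - \bDelta_{o,g}^{\top}\bSig_{oo,g}^{-1}\bDelta_{o,g}$; adding $1$ and collecting over the common denominator $1 - \bDelta_{g}^{\top}\bSigg^{-1}\bDelta_{g}$ cancels the two copies of $\bDelta_{g}^{\top}\bSigg^{-1}\bDelta_{g}$, leaving
\begin{align*}
1 + \blam_{c,g}^{\top}\blam_{c,g} = \frac{1 - \bDelta_{o,g}^{\top}\bSig_{oo,g}^{-1}\bDelta_{o,g}}{1 - \bDelta_{g}^{\top}\bSigg^{-1}\bDelta_{g}}.
\end{align*}

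Finally, inserting this together with $\lambda_{0,c,g} = \bDelta_{o,g}^{\top}\bSig_{oo,g}^{-1}(\bx_{i}^{(o)} - \bmu_{o,g})\big/\sqrt{1 - \bDelta_{g}^{\top}\bSigg^{-1}\bDelta_{g}}$ into $\delta_{0c,g} = \lambda_{0,c,g}\big/\sqrt{1 + \blam_{c,g}^{\top}\blam_{c,g}}$ cancels the $\sqrt{1 - \bDelta_{g}^{\top}\bSigg^{-1}\bDelta_{g}}$ factors and leaves $\delta_{0c,g} = \bDelta_{o,g}^{\top}\bSig_{oo,g}^{-1}(\bx_{i}^{(o)} - \bmu_{o,g})\big/\sqrt{1 - \bDelta_{o,g}^{\top}\bSig_{oo,g}^{-1}\bDelta_{o,g}}$; rewriting $\bSig_{oo,g}^{-1} = \bSig_{oo,g}^{-1/2}\bSig_{oo,g}^{-1/2}$ and using symmetry of the square root identifies this with $\dot{\blam}_{o,g}^{\top}\bSig_{oo,g}^{-1/2}(\bx_{i}^{(o)} - \bmu_{o,g})$, as claimed. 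I expect the main obstacle to be the first step: making sure the Schur complement and the vector $\bDelta_{m,g} - \bSig_{mo,g}\bSig_{oo,g}^{-1}\bDelta_{o,g}$ coming out of Lemma~\ref{lemma1} match, verbatim, the definitions of $\bSig_{c,g}$ and $\blam_{c,g}$ in Theorem~\ref{sn cond dist} (in particular keeping the missing/observed ordering straight); after that the remainder is routine telescoping algebra.
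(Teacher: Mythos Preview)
Your proposal is correct and follows essentially the same route as the paper: apply Lemma~\ref{lemma1} to decompose $\bDelta_{g}^{\top}\bSigg^{-1}\bDelta_{g}$, use that to simplify $1+\blam_{c,g}^{\top}\blam_{c,g}$ to the ratio $(1-\bDelta_{o,g}^{\top}\bSig_{oo,g}^{-1}\bDelta_{o,g})/(1-\bDelta_{g}^{\top}\bSigg^{-1}\bDelta_{g})$, and then substitute into $\delta_{0c,g}=\lambda_{0,c,g}/\sqrt{1+\blam_{c,g}^{\top}\blam_{c,g}}$ so the full-block terms cancel. Your anticipated obstacle about matching the Schur complement and $\bDelta_{m,g}-\bSig_{mo,g}\bSig_{oo,g}^{-1}\bDelta_{o,g}$ to the quantities in Theorem~\ref{sn cond dist} is exactly right, and once that bookkeeping is in place the rest is, as you say, routine.
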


\begin{lemma}
    \label{lemma3}
     $\frac{\mu_{T_{i,g}}}{\sigma_{T_{g}}} = A_{i,g}^{(o)} = \dot{\blam}_{o,g}^{\top}\bSig_{oo,g}^{-1/2}(\bx_i^o-\bmu_{o,g})$
\end{lemma}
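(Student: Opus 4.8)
The plan is to reduce the identity to a single application of the Sherman--Morrison formula. First I would unpack the two quantities appearing in the statement. By the definitions of $\sigma^2_{T_{g}}$ and $\mu_{T_{i,g}}$ recorded in the text, namely $\sigma^2_{T_{g}} = \left(1 + \bDelta_{o,g}^{\top}\bOmega_{oo,g}^{-1}\bDelta_{o,g}\right)^{-1}$ and $\mu_{T_{i,g}} = \sigma^2_{T_{g}}\,\bDelta_{o,g}^{\top}\bOmega_{oo,g}^{-1}(\bx^{(o)}_i - \bmu_{o,g})$, we have
\begin{align}
\frac{\mu_{T_{i,g}}}{\sigma_{T_{g}}} = \sigma_{T_{g}}\,\bDelta_{o,g}^{\top}\bOmega_{oo,g}^{-1}(\bx^{(o)}_i - \bmu_{o,g}). \nonumber
\end{align}
On the other side, expanding $\dot{\blam}_{o,g} = \bSig_{oo,g}^{-1/2}\bDelta_{o,g}\big/\sqrt{1 - \bDelta_{o,g}^{\top}\bSig_{oo,g}^{-1}\bDelta_{o,g}}$ from Theorem \ref{sn cond ev thm} and using that $\bSig_{oo,g}^{-1/2}$ is symmetric gives $A_{i,g}^{(o)} = \bDelta_{o,g}^{\top}\bSig_{oo,g}^{-1}(\bx^{(o)}_i - \bmu_{o,g})\big/\sqrt{1 - \bDelta_{o,g}^{\top}\bSig_{oo,g}^{-1}\bDelta_{o,g}}$. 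So the whole task is to re-express everything built from $\bOmega_{oo,g}^{-1}$ in terms of $\bSig_{oo,g}^{-1}$.

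Next I would invoke $\bOmegag = \bSigg - \bDeltag\bDeltag^{\top}$ from Theorem \ref{snm}; restricting this to the block indexed by the observed coordinates, with the partition of $\bOmegag$ taken conformably with those of $\bSigg$ and $\bDeltag$, yields $\bOmega_{oo,g} = \bSig_{oo,g} - \bDelta_{o,g}\bDelta_{o,g}^{\top}$, a rank-one perturbation of the positive definite matrix $\bSig_{oo,g}$. The Sherman--Morrison formula then gives
\begin{align}
\bOmega_{oo,g}^{-1} = \bSig_{oo,g}^{-1} + \frac{\bSig_{oo,g}^{-1}\bDelta_{o,g}\bDelta_{o,g}^{\top}\bSig_{oo,g}^{-1}}{1 - \bDelta_{o,g}^{\top}\bSig_{oo,g}^{-1}\bDelta_{o,g}}. \nonumber
\end{align}
Writing $b = \bDelta_{o,g}^{\top}\bSig_{oo,g}^{-1}\bDelta_{o,g}$ and $c = \bDelta_{o,g}^{\top}\bSig_{oo,g}^{-1}(\bx^{(o)}_i - \bmu_{o,g})$, a one-line substitution collapses the two forms I need to $\bDelta_{o,g}^{\top}\bOmega_{oo,g}^{-1}\bDelta_{o,g} = b/(1-b)$ and $\bDelta_{o,g}^{\top}\bOmega_{oo,g}^{-1}(\bx^{(o)}_i - \bmu_{o,g}) = c/(1-b)$, whence $1 + \bDelta_{o,g}^{\top}\bOmega_{oo,g}^{-1}\bDelta_{o,g} = (1-b)^{-1}$ and therefore $\sigma_{T_{g}} = \sqrt{1-b}$ (and, incidentally, $\mu_{T_{i,g}} = c$).

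Finally I would combine these: $\mu_{T_{i,g}}/\sigma_{T_{g}} = \sqrt{1-b}\cdot c/(1-b) = c/\sqrt{1-b}$, which is exactly the expression for $A_{i,g}^{(o)}$ obtained in the first step. I do not anticipate a genuine obstacle here; the only points deserving a word of care are (i) that the $(o,o)$ block of $\bOmegag$ is indeed $\bSig_{oo,g} - \bDelta_{o,g}\bDelta_{o,g}^{\top}$, which requires the three partitions to be aligned, and (ii) that $1 - b = 1 - \bDelta_{o,g}^{\top}\bSig_{oo,g}^{-1}\bDelta_{o,g} > 0$, so that the square roots (and the denominator defining $\dot{\blam}_{o,g}$) are well defined — this positivity follows from positive definiteness of $\bOmega_{oo,g}$ together with the matrix determinant lemma.
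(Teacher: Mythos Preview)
Your proposal is correct and follows essentially the same route as the paper's proof: both reduce the identity to the Sherman--Morrison formula applied to $\bOmega_{oo,g} = \bSig_{oo,g} - \bDelta_{o,g}\bDelta_{o,g}^{\top}$, simplify $\bDelta_{o,g}^{\top}\bOmega_{oo,g}^{-1}\bDelta_{o,g}$ and $\bDelta_{o,g}^{\top}\bOmega_{oo,g}^{-1}(\bx_i^{(o)}-\bmu_{o,g})$ in terms of $\bSig_{oo,g}^{-1}$, and combine. Your shorthand $b,c$ streamlines the algebra, and your remark that $1-b>0$ (needed for the square roots) is a point the paper leaves implicit.
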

\begin{proof}
From Lemma \ref{lemma2}:
    \begin{align}
    \label{main ratio}
        \frac{\mu_{T_{i,g}}}{\sigma_{T_{g}}} = \left( 1 + \bDelta_{o,g}^{\top} \bOmega_{oo,g}^{-1} \bDelta_{o,g} \right)^{-1/2}\left( \bDelta_{o,g}^{\top} \bOmega_{oo,g}^{-1}(\bx_{i}^o - \bmu_{o,g})\right).
    \end{align}
    From block matrix inversion and the Shermann-Morrison formula \cite{sherman-morrison}:
    \begin{align}
    \label{sherman morrison}
       \bOmega_{oo,g}^{-1} = (\bSig_{oo,g}- \bDelta_{o,g}\bDelta_{o,g}^{\top} )^{-1} = \bSig_{oo,g} + \frac{\bSig_{oo,g}^{-1}\bDelta_{o,g}\bDelta_{o,g}^{\top}\bSig_{oo,g}^{-1} }{1 - \bDelta_{o,g}\bSig_{oo,g}^{-1}\bDelta_{o,g}^{\top}}.
    \end{align}
    Now, $\bDelta_{o,g}^{\top} \bOmega_{oo,g}^{-1}(\bx_{i}^o - \bmu_{o,g})$ simplifies as follows using \eqref{sherman morrison}:
    \begin{align}
    \label{part1}
        \bDelta_{o,g}^{\top} \bOmega_{oo,g}^{-1}(\bx_{i}^o - \bmu_{o,g}) &= \frac{\bDelta_{o,g}^{\top}\bSig_{oo,g}^{-1}(\bx_{i}^o - \bmu_{o,g})(1 - \bDelta_{o,g}^{\top}\bSig_{oo,g}^{-1}\bDelta_{o,g}) }{1 - \bDelta_{o,g}^{\top}\bSig_{oo,g}^{-1}\bDelta_{o,g}} + \frac{\bDelta_{o,g}^{\top}\bSig_{oo,g}^{-1}\bDelta_{o,g} \bDelta_{o,g}^{\top}\bSig_{oo,g}^{-1}(\bx_{i}^o - \bmu_{o,g})}{1 - \bDelta_{o,g}^{\top}\bSig_{oo,g}^{-1}\bDelta_{o,g}} &\\
        &= \frac{\bDelta_{o,g}^{\top}\bSig_{oo,g}^{-1}(\bx_{i}^o - \bmu_{o,g})}{1 - \bDelta_{o,g}^{\top}\bSig_{oo,g}^{-1}\bDelta_{o,g}}.&\nonumber
    \end{align}
    Similarly, $\left( 1 + \bDelta_{o,g}^{\top} \bOmega_{oo,g}^{-1} \bDelta_{o,g} \right)^{-1/2}$ simplifies using \eqref{sherman morrison}:
    \begin{align}
    \label{part2}
        \left( 1 + \bDelta_{o,g}^{\top} \bOmega_{oo,g}^{-1} \bDelta_{o,g} \right)^{-1/2} = \left( 1 - \bDelta_{o,g}^{\top} \bSig_{oo,g}^{-1} \bDelta_{o,g} \right)^{1/2}.
    \end{align}
    Substituting \eqref{part1} and \eqref{part2} into the right-hand side of \eqref{main ratio} simplifies to:
    \begin{align}
        \frac{\left( 1 - \bDelta_{o,g}^{\top} \bSig_{oo,g}^{-1} \bDelta_{o,g} \right)^{1/2} \bDelta_{o,g}^{\top}\bSig_{oo,g}^{-1}(\bx_{i}^o - \bmu_{o,g})  }  { 1 - \bDelta_{o,g}^{\top} \bSig_{oo,g}^{-1} \bDelta_{o,g} } = \dot{\blam}_{o,g}^{\top}\bSig_{oo,g}^{-1/2}(\bx_i^o-\bmu_{o,g}),
    \end{align}
    where $\dot{\blam}_{o,g}  =  \frac{\bSig_{oo,g}^{-1/2}\bDelta_{o,g}}{\left( 1 - \bDelta_{o,g}^{\top} \bSig_{oo,g}^{-1} \bDelta_{o,g} \right)^{1/2}}$.
\end{proof}

\begin{theorem}
\label{t cond dist}
    $T_i|\bX^{o}_i = \bx^{o}_i, U_i = u_i ,Z_{i,g} = 1 \sim TN(\mu_{T_{i,g}}, \ku_i\sigma^2_{T_{g}})$,  where $\sigma^2_{T_{g}} = \left(1 + \bDelta_{o,g}^{\top}\bOmega_{oo,g}^{-1}\bDelta_{o,g} \right)^{-1}$ and $ \mu_{T_{i,g}} = \sigma^2_{T_{g}} \bDelta_{o,g}^{\top}\bOmega_{oo,g}^{-1}(\bx^{(o)}_i - \bmu_{o,g})$.
\end{theorem}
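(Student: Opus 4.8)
The plan is a Gaussian conjugacy argument: combine the conditional law of the observed block $\bXo_i$ with the ``prior'' law of $T_i$ via Bayes' theorem, then complete the square in $t_i$. First I would pin down the distribution of $T_i$ given only $U_i=u_i$ and $Z_{i,g}=1$. From the stochastic representation \eqref{sr}, $T_i=\sqrt{K_i}\,T_{0,i}$ with $T_{0,i}\sim TN(0,1)$ independent of $U_i$; once $U_i=u_i$ is fixed, $K_i=\ku_i$ is a positive constant, so $T_i\sim TN(0,\ku_i)$, because multiplying a $TN(0,1)$ variable by $\sqrt{\ku_i}>0$ rescales the variance to $\ku_i$ while leaving the truncation point at $0$. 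Hence $f(t_i\mid u_i,z_{i,g}=1)\propto \exp\!\left(-t_i^2/(2\ku_i)\right)$ for $t_i>0$ and $0$ otherwise.

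Next I would record the conditional law of $\bXo_i$ given $T_i,U_i,Z_{i,g}$. By Theorem \ref{snm}, $\bX_i\mid T_i=t_i,U_i=u_i,Z_{i,g}=1\sim SN(\bmug+t_i\bDeltag,\ku_i\bOmegag,\bm{0})$, which is simply the normal law $N(\bmug+t_i\bDeltag,\ku_i\bOmegag)$; marginalising onto the observed coordinates and using the partitions of Theorem \ref{norm cond dist} gives $\bXo_i\mid T_i=t_i,U_i=u_i,Z_{i,g}=1\sim N(\bmu_{o,g}+t_i\bDelta_{o,g},\ku_i\bOmega_{oo,g})$. Writing $\bm{r}=\bx^{(o)}_i-\bmu_{o,g}$ and viewing this density evaluated at $\bx^{(o)}_i$ as a function of $t_i$, it is proportional to $\exp\!\left(-\tfrac{1}{2\ku_i}(\bm{r}-t_i\bDelta_{o,g})^{\top}\bOmega_{oo,g}^{-1}(\bm{r}-t_i\bDelta_{o,g})\right)$.

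Then I would apply Bayes' theorem, $f(t_i\mid \bx^{(o)}_i,u_i,z_{i,g}=1)\propto f(\bx^{(o)}_i\mid t_i,u_i,z_{i,g}=1)\,f(t_i\mid u_i,z_{i,g}=1)$, the normalising factor $f(\bx^{(o)}_i\mid u_i,z_{i,g}=1)$ being free of $t_i$, multiply the two exponentials, and collect powers of $t_i$. The coefficient of $-t_i^2/(2\ku_i)$ is $1+\bDelta_{o,g}^{\top}\bOmega_{oo,g}^{-1}\bDelta_{o,g}=\sigma^{-2}_{T_g}$ and the coefficient of $t_i/\ku_i$ is $\bDelta_{o,g}^{\top}\bOmega_{oo,g}^{-1}\bm{r}$; completing the square gives $-\tfrac{1}{2\ku_i\sigma^2_{T_g}}(t_i-\mu_{T_{i,g}})^2$ with $\mu_{T_{i,g}}=\sigma^2_{T_g}\bDelta_{o,g}^{\top}\bOmega_{oo,g}^{-1}\bm{r}$, and all $t_i$-free terms absorbed into the constant. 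Since the support $\{t_i>0\}$ is inherited unchanged from the prior, the posterior density coincides with the $TN(\mu_{T_{i,g}},\ku_i\sigma^2_{T_g})$ density of Definition \ref{trunc normal}, which is the claim.

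I do not expect a real obstacle here: this is Gaussian conjugacy under a one-sided truncation. The only care needed is bookkeeping — confirming that the positive rescaling and the conditioning on $\bXo_i$ do not move the truncation boundary at $0$, and that the constant making the posterior integrate to one over $(0,\infty)$ is precisely the one prescribed by Definition \ref{trunc normal}. One should also note that $\bOmega_{oo,g}$ is invertible, so $\sigma^2_{T_g}$ is well defined: this follows since $\bOmegag=\bSigg^{1/2}(\bm{I}-\bm{\delta}_g\bm{\delta}_g^{\top})\bSigg^{1/2}$ is positive definite (because $\|\bm{\delta}_g\|^2=\blamg^{\top}\blamg/(1+\blamg^{\top}\blamg)<1$), hence so is its principal submatrix $\bOmega_{oo,g}$.
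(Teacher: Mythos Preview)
Your proposal is correct and follows essentially the same route as the paper: apply Bayes' theorem with $T_i\mid U_i,Z_{i,g}\sim TN(0,\ku_i)$ and $\bXo_i\mid T_i,U_i,Z_{i,g}\sim N(\bmu_{o,g}+t_i\bDelta_{o,g},\ku_i\bOmega_{oo,g})$, then complete the square in $t_i$ to identify the truncated normal posterior. Your additional remarks on the truncation boundary and the positive-definiteness of $\bOmega_{oo,g}$ are nice clarifications that the paper leaves implicit.
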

\begin{proof}
    Using Bayes' theorem $f(t_i| \bx^{o}_i, u_i ,z_{i,g}) \propto f(\bx^o_i|t_i, u_i ,z_{i,g})f(t_i|u_i ,z_{i,g})$.
    \newline
    The right hand side of the proportionality is now expanded and simplified. Noting from Theorem \ref{sn cond dist} and from (\ref{sr scale mix}), $\bX^{o}_i|T_i=t_i, U_i = u_i ,Z_{i,g} = 1 \sim SN(\bmu_{o,g}+t_i\bDelta_{o,g}, \bOmega_{oo,g}, \bm{0} )$ and $T_i|U_i=u_i,Z_{i,g}=1 \sim TN(0,\ku_i)$.
    \begin{align}
        f(\bx^o_i|t_i, u_i ,z_{i,g})f(t_i|u_i ,z_{i,g}) &= f_{\text{SN}}(\bx^{o}_i; \bmu_{o,g} +t_i \bDelta_{o,g},\kappa_i\bOmega_{oo,g})f_{\text{TN}}(t_i;0,\ku_i)&.\nonumber\\
       &\propto \mathrm{exp}\left\{-\frac{1}{2}(\bx^{o}_i- \bm{\Delta}_{o,g}t_i - \bm{\mu}_{o,g} )^T 
		\left(\kappa_i\bm{\Omega}_{oo,g} \right)^{-1} 
		(\bx^{o}_i-  \bm{\Delta}_{o,g}t_i - \bm{\mu}_{o,g} ) \right\}\mathrm{exp} 
		\left\{ -\frac{1}{2\ku_i}t_i^2\right\} \nonumber &\\
		&\propto \mathrm{exp} \left\{ -\frac{1}{2\ku_i} \left( (\bm{\Delta}_{o,g})^T\left(\bm{\Omega}_{oo,g} \right)^{-1}\bm{\Delta}_{o,g} t_i^{2} + t_i^2 - 2(\bm{\Delta}_{o,g})^T\left(\bm{\Omega}_{oo,g} \right)^{-1}(\bx_{i}^{o} - \bm{\mu}_{o,g}\right)t_i  \right\}\nonumber&\\
		&\propto \mathrm{exp} \left\{ - \frac{1}{2\ku_i} \left(1 + ((\bm{\Delta}_{o,g})^T\left(\bm{\Omega}_{oo,g}\right)^{-1} \bm{\Delta}_{o,g} \right) 
		\left( t_i^2 - \frac{ (\bm{\Delta}_{o,g})^T\left(\bm{\Omega}_{oo,g} \right)^{-1}(\bx_i^o - \bmu_{o,g})	}{	1 + (\bm{\Delta}_{o,g})^T(\bm{\Omega}_{oo,g}^{-1}) \bm{\Delta}_{o,g}} t_i \right)		\right\} \nonumber &\\
		\label{posterior obs}
		&\propto \mathrm{exp} \left\{ -\frac{1}{2\ku_i}\left(1 + (\bm{\Delta}_{o,g})^T \left(\bm{\Omega}_{oo,g}\right)^{-1} \bm{\Delta}_{o,g} \right)  \left(	t_i - \frac{ (\bm{\Delta}_{o,g})^T\left(\bm{\Omega}_{oo,g}\right)^{-1}(\bx_{i}^o - \bm{\mu}_{o,g})	}{	1 + (\bm{\Delta_{o,g})}^T\left(\bm{\Omega}_{oo,g}\right)^{-1} \bm{\Delta}_{o,g}}	\right)	^2	\right\},
    \end{align}
    which is proportional to a truncated normal distribution with parameters $\ku_i\sigma^2_{T_{g}} = \ku_i\left(1 + \bDelta_{o,g}^{\top}\bOmega_{oo,g}^{-1}\bDelta_{o,g} \right)^{-1}$ and $ \mu_{T_{i,g}} = \sigma^2_{T_{g}} \bDelta_{o,g}^{\top}\bOmega_{oo,g}^{-1}(\bx^{(o)}_i - \bmu_{o,g})$.
\end{proof}
\begin{definition}
    \label{t dist def}
    A random vector $\bX \in \mathbb{R}^p$ follows a t distribution with parameters $\bmu, \bSig,$ and $\nu$ if the pdf is:
    \begin{align}
        f_t(x) = \frac{\Gamma\left( \frac{\nu +p }{2} \right) }{(\pi\nu)^{p/2}\Gamma\left(\frac{\nu}{2} \right)|\bSig|^{p/2}} \left( 1 + \frac{d^2}{\nu} \right)^{-(\nu+p)/2},
    \end{align}
    where $d^2 = (\bx - \bmu)\bSig^{-1}(\bx - \bmu)$; This is denoted $\bX \sim~ t_p(\bmu,\bSig, \nu)$.
\end{definition}

\begin{definition}
    \label{gamma dist def}
    A random variable $X$ follows a gamma distribution with parameters $\alpha$ and $\beta$ if it has the pdf:
    \begin{align}
        f_{\text{GAM}}(x) = \begin{cases}
            \frac{\beta^{\alpha}}{\Gamma(\alpha)}x^{\alpha-1}e^{\beta x} & \text{ if } x>0 \\
            0 & \text{otherwise};
        \end{cases}
    \end{align}
    This is denoted as $X\sim Gam(\alpha, \beta)$.
\end{definition}

\begin{definition}
\label{beta pdf}
    A random variable $X$ follows a beta distribution with parameters $\alpha_1$ and $\alpha_2$ if the pdf is:
    \begin{align}
        f_{\beta}(x) = \begin{cases}
                        \frac{\Gamma(\alpha_1 + \alpha_2)}{\Gamma(\alpha_1)\Gamma(\alpha_2)}x^{\alpha_1 -1}(1-x)^{\alpha_2 -1} & \text{ if  } 0 ~< x ~< 1\\
                        0 & \text{otherwise};\end{cases}
    \end{align}
    This is denoted as $X \sim Beta(\alpha_1,\alpha_2)$.
\end{definition}

\begin{definition}
\label{gh pdf}
    A random variable $X$ follows a generalised hyperbolic distribution with parameters $\mu,\sigma, \eta, \psi$ and $\gamma$ if the pdf is:
    \begin{align}
        f_{\text{GH}}(x) = \begin{cases}
                        \sqrt{\frac{\gamma}{2\pi}} \left(\psi^{\eta} K_{\eta}(\psi\gamma \right)^{-1}(\sqrt{\psi^2 + d^2})^{\eta-1/2}K_{\eta-1/2}\left(\left(\sqrt{\psi^2 + d^2}\right)\gamma \right)  & \text{ if  } x>0 \\
                        0 & \text{otherwise,}\end{cases}
    \end{align}
    where $d^2 = \left(\frac{x-\mu}{\sigma}\right)^2$; this is denoted as $X\sim GH(\mu,\sigma,\eta,\psi,\gamma)$.
\end{definition}

\end{document}